\documentclass[aps,prx,reprint,floatfix,longbibliography]{revtex4-2}

\usepackage{graphicx}
\usepackage{amsfonts,amssymb,amsmath}
\usepackage{amsthm}
\usepackage{hyperref}

\theoremstyle{plain}
\newtheorem{theorem}{Theorem}
\newtheorem{proposition}{Proposition}
\newtheorem{lemma}{Lemma}
\newtheorem{corollary}{Corollary}
\newtheorem{definition}{Definition}
\newtheorem{remark}{Remark}

\newcommand{\Tr}{\operatorname{Tr}}

\begin{document}

\title{Task-Dependent Syndrome Memory in Quantum Sensing and State Recovery}
\author{Jianqi Sheng}
\affiliation{Department of Physics, City University of Hong Kong, Hong Kong}
\email{sheng.jq@cityu.edu.hk}

\begin{abstract}
Preserving metrological information under noise is central to quantum sensing, yet finite detectors and memories impose an unavoidable limit on how finely noise trajectories can be resolved. Using exact symmetric-logarithmic-derivative geometry, we determine when a monitored trajectory can be compressed without losing quantum Fisher information. For an explicit faithful monitored qubit family, the complete joint signal-and-noise model is recoverable from only polynomially many type records, requiring $O(\log n)$ terminal memory, whereas deferred recovery of arbitrary $n$-qubit states requires exponentially many trajectories, or $O(n)$ memory. Online correction replaces this terminal storage by an irreducible per-use readout and feedback alphabet. These results establish syndrome information as a task- and timing-dependent resource connecting quantum sensing, statistical sufficiency, and quantum error correction.
\end{abstract}
\maketitle

\section{Introduction}

Noise removes information from a quantum sensor not only by changing the probe state, but also by making distinct physical trajectories observationally indistinguishable. Quantum error correction, environment monitoring, and erasure conversion address this second mechanism by producing a readable record correlated with the noise event. Such a record can preserve metrological information even when the corresponding unflagged mixture has little or no quantum Fisher information (QFI) \cite{escher2011general,demkowicz2012elusive,zhou2018achieving,niroula2024erasure,liu2026converting}.

Resolving every microscopic event is itself a resource. An $r$-outcome monitored instrument used $n$ times generates $r^n$ ordered histories, although the statistical task may depend only on a much smaller summary. This problem is distinct from compressing the parameter-bearing quantum systems themselves \cite{yang2018population,tang2026compressing} or coarse graining their final measurement \cite{hovhannisyan2021thermometry,go2026coarse}: here the quantum outputs are retained and only a physically available classical--quantum noise record is processed. The central question is therefore which parts of a monitored trajectory must be remembered for metrological inference, complete statistical-model recovery, and universal quantum-state recovery.

We fix an orthogonal fine record produced by a monitored ancilla or accessible environment sector and allow parameter-independent classical processing of its labels. Figure~\ref{fig:resource-model} shows $n$ independent monitored uses, the label and conditional quantum output produced in each use, and the allowed compression of only the classical trajectory. Label-controlled quantum recovery is either deferred until the terminal record is available or applied online after every use; these two timings define different storage and feedback resources.

\begin{figure*}[t]
\centering
\includegraphics[width=0.8\textwidth]{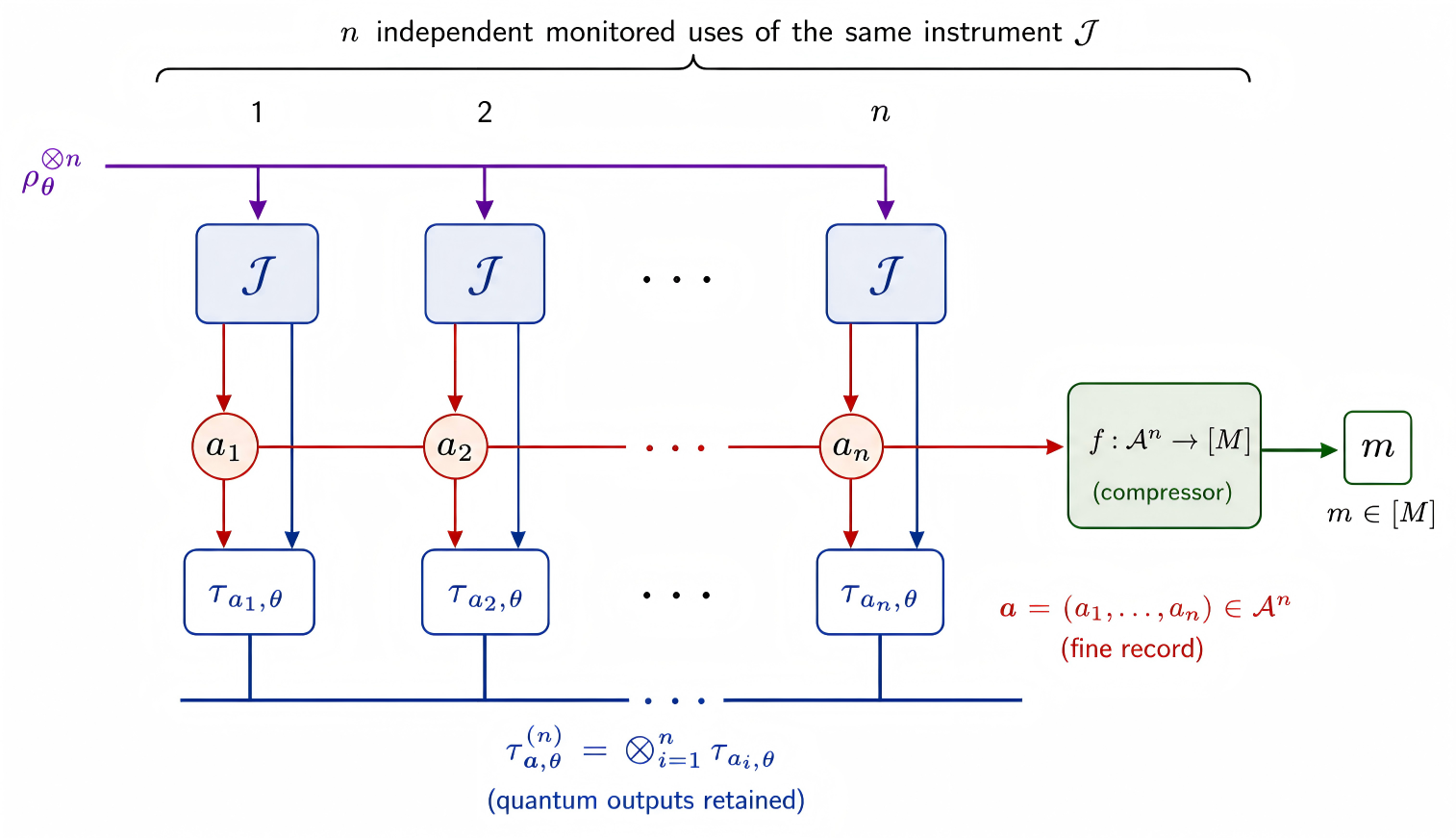}
\caption{Operational model. The same monitored instrument $\mathcal J$ acts independently on $n$ copies of $\rho_\theta$. Use $i$ produces a readable label $a_i\in\mathcal A$ and a conditional quantum output $\tau_{a_i,\theta}$. Only the ordered classical record $\boldsymbol a=(a_1,\ldots,a_n)$ is compressed to $m\in[M]$; the joint quantum output $\tau_{\boldsymbol a,\theta}^{(n)}=\bigotimes_i\tau_{a_i,\theta}$ is retained. In the explicit family below, complete joint signal-and-noise model preservation requires $O(\log n)$ terminal memory, whereas deferred recovery of arbitrary inputs requires $O(n)$.}
\label{fig:resource-model}
\end{figure*}

The analysis has three layers. First, an exact Pythagorean remainder gives the QFI lost under any finite-syndrome partition and a support-resolved zero-loss criterion. Second, for repeated faithful uses, exact permutation-invariant compression is possible precisely when the branch SLD scores share a common non-scalar part; the accumulated scalar score is then the minimal QFI-preserving statistic, and an approximate theorem controls deviations from this structure. Third, explicit monitored qubit families exhibit polynomial metrological records but exponential universal-recovery records. A graph formulation shows that online correction can recycle peak memory only by paying an irreducible syndrome-readout and feedback alphabet on every use.

Sufficient statistics and type classes are classical constructions \cite{fisher1922foundations,koopman1936sufficient,pitman1936sufficient,csiszar1998types}. The quantum content is the exact SLD criterion for forgetting trajectory order, its stability for noncommuting conditional outputs, recovery of a complete flagged quantum statistical family while the quantum outputs remain untouched, and the separation from the side information required for universal quantum recovery. A short final section embeds the syndrome identity into a general channel-recovery geometry. 

\section{Monitored records and exact information loss}
\label{sec:records-loss}

All Hilbert spaces are finite dimensional. For a positive, possibly subnormalized operator $\tau_\theta$, an SLD score $S_\theta=S_\theta^\dagger$ satisfies~\cite{braunstein1994statistical,paris2009quantum}
\begin{equation}
  \dot\tau_\theta=\frac12(S_\theta\tau_\theta+\tau_\theta S_\theta).
  \label{eq:subnormalized-sld}
\end{equation}
Its QFI contribution is $\operatorname{Tr}(\tau_\theta S_\theta^2)$. If $\tau_\theta=q_\theta\rho_\theta$, then $S_\theta=(\dot q_\theta/q_\theta)I+L_\theta$ on the support of $\rho_\theta$, where $L_\theta$ is the normalized-state SLD. Rank-deficient states are treated supportwise; faithfulness is invoked explicitly when a full operator identity is required.

Let $a\in\mathcal A=\{1,\ldots,r\}$ label the physically available fine record,
\begin{equation}
  \Omega_\theta^{\mathrm{fine}}=\bigoplus_{a=1}^r\tau_{a,\theta},
  \qquad \sum_a\operatorname{Tr}\tau_{a,\theta}=1.
  \label{eq:fine-state}
\end{equation}
A deterministic compressor $f:\mathcal A\to[M]$ produces
\begin{equation}
  \tau_{m,\theta}^{(f)}=\sum_{a:f(a)=m}\tau_{a,\theta},
  \qquad
  \Omega_\theta^{(f)}=\bigoplus_{m=1}^M\tau_{m,\theta}^{(f)}.
  \label{eq:coarse-state}
\end{equation}
Randomized readout cannot exceed the best deterministic partition by convexity of the SLD QFI; the explicit convex decomposition is given in the Supplemental Material \cite{supp}. If the fine label is stored in an orthogonal register, the coarse record is measured directly by $\Pi_m=\sum_{a:f(a)=m}|a\rangle\!\langle a|$.

\begin{proposition}[Orthogonal-pointer realization of the fine record]
\label{prop:pointer-realization}
Let $\{E_a\}_{a=1}^r$ be Kraus operators satisfying
\begin{equation}
  \sum_{a=1}^r E_a^\dagger E_a=I,
  \label{eq:pointer-completeness}
\end{equation}
and let $\{|a\rangle_A\}_{a=1}^r$ be orthonormal states of a readout ancilla.  The isometry
\begin{equation}
  V=\sum_{a=1}^r |a\rangle_A\otimes E_a
  \label{eq:pointer-isometry}
\end{equation}
followed by projective readout of $A$ produces the fine classical--quantum state
\begin{equation}
  \mathcal J(\rho_\theta)
  =\sum_{a=1}^r |a\rangle\!\langle a|_A
  \otimes E_a\rho_\theta E_a^\dagger.
  \label{eq:pointer-fine-state}
\end{equation}
For an ordered batch of $n$ independently prepared monitored uses, fresh ancillas produce
\begin{equation}
  \mathcal J(\rho_\theta)^{\otimes n}
  =\sum_{\boldsymbol a\in\mathcal A^n}
  |\boldsymbol a\rangle\!\langle\boldsymbol a|
  \otimes\bigotimes_{i=1}^n
  E_{a_i}\rho_\theta E_{a_i}^\dagger.
  \label{eq:pointer-n-use-state}
\end{equation}
Any batch projector or streaming classical update acting only on $\boldsymbol a$ therefore implements the compressors studied below without modifying the retained quantum outputs.
\end{proposition}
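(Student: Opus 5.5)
The plan is to verify the three claims in order: that $V$ is an isometry, that projective readout of $A$ yields \eqref{eq:pointer-fine-state}, and that the $n$-use state factorizes as in \eqref{eq:pointer-n-use-state}. The compressor statement then follows from the block-diagonal form of that state.

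\emph{Isometry.} Expanding directly and using orthonormality of the pointer states gives
\begin{equation*}
V^\dagger V=\sum_{a,b}\langle a|b\rangle\,E_a^\dagger E_b=\sum_a E_a^\dagger E_a=I,
\end{equation*}
where the last equality is \eqref{eq:pointer-completeness}. So $V$ is an isometry.

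\emph{Single use.} Applying $V$ to $\rho_\theta$ gives $V\rho_\theta V^\dagger=\sum_{a,b}|a\rangle\!\langle b|\otimes E_a\rho_\theta E_b^\dagger$. Nonselective projective readout in the pointer basis applies the pinching $X\mapsto\sum_c(|c\rangle\!\langle c|\otimes I)X(|c\rangle\!\langle c|\otimes I)$. This removes all terms with $a\neq b$ and leaves exactly \eqref{eq:pointer-fine-state}. Selective readout of outcome $a$ produces the conditional output $E_a\rho_\theta E_a^\dagger$, which matches the identification $\tau_{a,\theta}=E_a\rho_\theta E_a^\dagger$ in \eqref{eq:fine-state}. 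Its trace normalization $\sum_a\Tr\tau_{a,\theta}=1$ again follows from completeness.

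\emph{Batch of $n$ uses.} Each use acts on its own copy of $\rho_\theta$ with a fresh ancilla, so the total operation is $(\mathcal J)^{\otimes n}$ applied to $\rho_\theta^{\otimes n}$, namely $\mathcal J(\rho_\theta)^{\otimes n}$. Expanding the tensor power of the sum in \eqref{eq:pointer-fine-state} by multilinearity, and reordering tensor factors so that all ancilla registers sit together, gives
\begin{equation*}
\bigotimes_i|a_i\rangle\!\langle a_i|=|\boldsymbol a\rangle\!\langle\boldsymbol a|.
\end{equation*}
This yields \eqref{eq:pointer-n-use-state}.

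\emph{Compressors.} The state \eqref{eq:pointer-n-use-state} is block diagonal in the orthonormal basis $\{|\boldsymbol a\rangle\}$. A batch compressor $f:\mathcal A^n\to[M]$ is implemented by the projectors $\Pi_m=\sum_{f(\boldsymbol a)=m}|\boldsymbol a\rangle\!\langle\boldsymbol a|\otimes I$, and the post-measurement state is $\bigoplus_m\sum_{f(\boldsymbol a)=m}\tau^{(n)}_{\boldsymbol a,\theta}$. This is exactly the coarse state \eqref{eq:coarse-state} with $\mathcal A^n$ in place of $\mathcal A$. Because these operators act as the identity on the quantum factors, the retained outputs are unchanged within each label block.

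\emph{Streaming updates.} A streaming classical update is a sequence of maps $s_i=g_i(s_{i-1},a_i)$. Reading each label and updating the register is a classical channel applied to diagonal labels, so it composes to some final function $f(\boldsymbol a)$ and yields the same coarse state.

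The only step needing care is this streaming/online case. One must check that reading $a_i$ before later uses have occurred does not disturb the joint state. This holds because the pointer registers are already dephased and later uses are independent of earlier ones, so measurement commutes with the subsequent tensor factors.
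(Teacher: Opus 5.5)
Your proposal is correct and follows the paper's proof: you verify $V^\dagger V=I$ from orthonormality and completeness, dephase $V\rho_\theta V^\dagger$ in the pointer basis, take the tensor-product isometry $V^{\otimes n}$ with fresh registers, and observe that classical processing of the pointer labels leaves the quantum outputs untouched. Your treatment of the projectors $\Pi_m$ and of streaming updates only expands what the paper states in one sentence; strictly, to obtain the coarse state you should also discard the fine register after recording $m$.
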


The proof and experimental interpretation of Proposition~\ref{prop:pointer-realization} are given in the Supplemental Material \cite{[{See Supplemental Material at }][{ for deferred proofs, structural channel results, one-shot benchmarks, the continuous-axis solution, and optimization details.}]supp} The construction makes explicit that the theory compresses a record already deposited in orthogonal pointer sectors; it does not generate such sectors from an arbitrary coherent error. Repetitive ancilla readout and quantum-error-corrected sensing provide concrete experimental realizations of the required monitored-record architecture~\cite{unden2016repetitive,xue2020repetitive}.

\begin{theorem}[Exact syndrome-compression identity]
\label{thm:exact-loss}
For every parameter-independent deterministic compressor \(f\),
\begin{equation}
  F_Q(\Omega_\theta^{\mathrm{fine}})
  -F_Q(\Omega_\theta^{(f)})
  =\sum_{a=1}^{r}
  \Tr\!\left[
    \tau_a\left(S_a-T_{f(a)}\right)^2
  \right].
  \label{eq:exact-loss}
\end{equation}
In particular, every term on the right-hand side is nonnegative.
\end{theorem}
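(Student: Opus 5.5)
The identity is a Pythagorean decomposition in the SLD inner product, so the plan is to verify an orthogonality relation between the fine and coarse scores. Throughout, $S_a$ denotes an SLD score of the branch $\tau_a=\tau_{a,\theta}$ in the sense of Eq.~\eqref{eq:subnormalized-sld}, and $T_m$ an SLD score of the coarse branch $\tau^{(f)}_{m}$.

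The first step uses the block-diagonal structure. Because $\Omega^{\mathrm{fine}}_\theta$ and $\Omega^{(f)}_\theta$ are orthogonal direct sums, their SLDs are block diagonal: $\bigoplus_a S_a$ and $\bigoplus_m T_m$. The QFIs therefore decompose as
\[
F_Q(\Omega^{\mathrm{fine}}_\theta)=\sum_a\Tr(\tau_a S_a^2),\qquad F_Q(\Omega^{(f)}_\theta)=\sum_m \Tr(\tau^{(f)}_m T_m^2)=\sum_a \Tr(\tau_a T_{f(a)}^2).
\]
The last equality follows by inserting $\tau^{(f)}_m=\sum_{f(a)=m}\tau_a$ from Eq.~\eqref{eq:coarse-state}.

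The second step is the key cross-term identity. Summing Eq.~\eqref{eq:subnormalized-sld} over a fiber gives
\[
\tfrac12\sum_{f(a)=m}(S_a\tau_a+\tau_a S_a)=\dot\tau^{(f)}_m=\tfrac12(T_m\tau^{(f)}_m+\tau^{(f)}_m T_m).
\]
Multiplying by the Hermitian $T_m$ and taking the trace yields
\[
\sum_{f(a)=m}\operatorname{Re}\Tr(\tau_a S_a T_m)=\Tr(\tau^{(f)}_m T_m^2).
\]
Here one uses $\Tr(X\,\tfrac12(A\rho+\rho A))=\operatorname{Re}\Tr(\rho A X)$ for Hermitian $A$, $X$ and $\rho\ge0$. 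Summing over $m$ gives $\sum_a\operatorname{Re}\Tr(\tau_a S_aT_{f(a)})=F_Q(\Omega^{(f)}_\theta)$.

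The third step expands the right-hand side of Eq.~\eqref{eq:exact-loss}:
\[
\sum_a\Tr\!\big[\tau_a(S_a-T_{f(a)})^2\big]=\sum_a\Tr(\tau_aS_a^2)-2\sum_a\operatorname{Re}\Tr(\tau_aS_aT_{f(a)})+\sum_a\Tr(\tau_aT_{f(a)}^2).
\]
This uses $\Tr(\tau S T)+\Tr(\tau T S)=2\operatorname{Re}\Tr(\tau S T)$, which holds because the trace of $\tau ST$ and that of $\tau TS$ are complex conjugates. The three sums equal $F_Q^{\mathrm{fine}}$, $-2F_Q^{(f)}$ and $F_Q^{(f)}$ respectively, which gives the identity. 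Nonnegativity of each term holds because $(S_a-T_{f(a)})^2\ge0$ and $\tau_a\ge0$.

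The main obstacle is well-definedness when branches are rank deficient, since SLDs are then non-unique off support. I would first show that each quantity is independent of the choice of SLD. On the support of $\tau^{(f)}_m$, $T_m$ is fixed up to terms that annihilate $\tau^{(f)}_m$. Since $0\le\tau_a\le\tau^{(f)}_m$ implies $\operatorname{supp}\tau_a\subseteq\operatorname{supp}\tau^{(f)}_m$, such terms also annihilate $\tau_a$, so $\Tr(\tau_aT_m^2)$ and $\Tr(\tau_aS_aT_m)$ do not depend on the choice. Similarly, $S_a$ is fixed on $\operatorname{supp}\tau_a$, so $\Tr(\tau_a S_a^2)$ is choice-independent. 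I would also note, as the paper's supportwise convention implicitly assumes, that $\dot\tau$ must be supported compatibly for an SLD to exist. With this justified, the computation above goes through verbatim.
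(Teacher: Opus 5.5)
Your proposal is correct and follows the paper's proof: you expand the square, use the parameter independence of $f$ (so that $\dot\tau_m=\sum_{f(a)=m}\dot\tau_a$) together with the summed SLD equation on each fiber to collapse the cross term to $2\Tr(\tau_m T_m^2)$, and read off the fine minus coarse QFI, with nonnegativity following from positivity of $\tau_a$ and $(S_a-T_{f(a)})^2$. Your additional check that each term is independent of the SLD choice on rank-deficient blocks is correct; the paper leaves this implicit in its supportwise convention, and it is not strictly needed because the algebraic identity holds for any fixed choice of SLDs.
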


\begin{proof}
Fix a nonempty class \(C_m=\{a:f(a)=m\}\).  Expanding the proposed residual gives
\begin{align}
 &\sum_{a\in C_m}\Tr\left[\tau_a(S_a-T_m)^2\right] \notag\\
 &=\sum_{a\in C_m}\Tr(\tau_aS_a^2)
   +\Tr(\tau_mT_m^2) \notag\\
 &\quad -\sum_{a\in C_m}
   \Tr\left[\tau_a(S_aT_m+T_mS_a)\right].
 \label{eq:loss-expand}
\end{align}
Using Eq.~\eqref{eq:subnormalized-sld} and cyclicity of the trace,
\begin{align}
 &\sum_{a\in C_m}
 \Tr\left[\tau_a(S_aT_m+T_mS_a)\right] \notag\\
 &=2\sum_{a\in C_m}\Tr(\dot\tau_aT_m)
 =2\Tr(\dot\tau_mT_m)
 =2\Tr(\tau_mT_m^2).
 \label{eq:cross-collapse}
\end{align}
Substitution into Eq.~\eqref{eq:loss-expand} yields the fine QFI contribution of \(C_m\) minus the QFI of its coarse block.  Summing over \(m\) proves Eq.~\eqref{eq:exact-loss}.  Finally, for Hermitian \(D\),
\begin{equation}
  \Tr(\tau_aD^2)
  =\Tr\!\left[(D\sqrt{\tau_a})^\dagger
  D\sqrt{\tau_a}\right]\geq0.
\end{equation}
\end{proof}

The same identity gives the exact condition for lossless compression.

\begin{corollary}[Necessary and sufficient condition for lossless compression]
\label{cor:lossless}
A partition \(f\) preserves all fine-record QFI at a fixed parameter value if and only if, for each nonempty class \(C_m\), there exists a Hermitian \(T_m\) such that
\begin{equation}
  (S_a-T_m)\sqrt{\tau_a}=0,
  \qquad a\in C_m.
  \label{eq:support-compatibility}
\end{equation}
If all \(\tau_a\) are faithful, this condition reduces to \(S_a=T_m\) within each class.  Defining \(M_{\mathrm{QFI}}(\theta)\) as the minimum number of nonempty classes in a lossless partition at the chosen parameter value, one obtains
\begin{equation}
  M_{\mathrm{QFI}}(\theta)
  =\left|\{S_a(\theta):a\in\mathcal A\}\right|
  \label{eq:distinct-sld-number}
\end{equation}
in the faithful case, after branches with identical scores are identified.
\end{corollary}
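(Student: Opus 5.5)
The plan is to read everything off the exact identity of Theorem~\ref{thm:exact-loss}. There, $T_m$ is the SLD of the coarse block $\tau_m^{(f)}$, and the loss is a sum of nonnegative terms $\Tr[\tau_a(S_a-T_{f(a)})^2]=\|(S_a-T_{f(a)})\sqrt{\tau_a}\|_2^2$. So the partition is lossless exactly when every one of these Hilbert--Schmidt norms vanishes, that is, when $(S_a-T_m)\sqrt{\tau_a}=0$ for all $a\in C_m$, with $T_m$ the coarse SLD. This gives necessity directly: the coarse SLD itself serves as the required Hermitian $T_m$.

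For sufficiency, suppose some Hermitian $T_m$ satisfies Eq.~\eqref{eq:support-compatibility} on the class $C_m$. Then $S_a\tau_a=T_m\tau_a$ and $\tau_aS_a=\tau_aT_m$. Summing Eq.~\eqref{eq:subnormalized-sld} over $a\in C_m$ gives $\dot\tau_m=\frac12(T_m\tau_m+\tau_mT_m)$, so $T_m$ is an SLD of the coarse block. The one point needing care is non-uniqueness of SLDs off the support of $\tau_m$. I will handle it by noting that the cross-term collapse in Eq.~\eqref{eq:cross-collapse} uses only the SLD equation, so the identity of Theorem~\ref{thm:exact-loss} holds for any admissible choice of $T_m$. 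Equivalently, $\Tr(\tau_mT_m^2)$ does not depend on which SLD is chosen, because SLDs differ only off $\operatorname{supp}\tau_m$, and $\operatorname{supp}\tau_a\subseteq\operatorname{supp}\tau_m$. With this choice every residual vanishes, and the loss is zero.

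In the faithful case $\sqrt{\tau_a}$ is invertible, so the condition becomes $S_a=T_m$ for all $a\in C_m$. Branch scores are then also uniquely determined. Hence a partition is lossless if and only if each class contains only branches with a common score. Such a partition induces a map from classes onto the set $\{S_a\}$ of distinct scores, so it has at least $|\{S_a\}|$ nonempty classes. Grouping branches exactly by equal score achieves this number: within a class with common score $S$, the equation $\dot\tau_m=\frac12(S\tau_m+\tau_mS)$ holds, so $T_m=S$ and every residual vanishes. This proves Eq.~\eqref{eq:distinct-sld-number}.

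I expect the only real obstacle to be the SLD-ambiguity point in the sufficiency direction for rank-deficient blocks. Everything else is direct bookkeeping from Theorem~\ref{thm:exact-loss}.
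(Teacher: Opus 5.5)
Your proposal is correct and follows the paper's proof essentially step for step: zero loss forces every nonnegative summand in Theorem~\ref{thm:exact-loss} to vanish, and conversely any Hermitian $T_m$ satisfying Eq.~\eqref{eq:support-compatibility} is a coarse-block SLD, obtained by summing the branch SLD equations, with the faithful-case count following from uniqueness of the scores. Your added observation, that the identity and the residuals do not depend on how the SLD is chosen off $\operatorname{supp}\tau_m$, is something the paper leaves implicit, and you handle it correctly.
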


\begin{proof}
By Theorem~\ref{thm:exact-loss}, zero total loss is equivalent to the vanishing of every nonnegative summand,
\(\Tr[\tau_a(S_a-T_{f(a)})^2]=0\), which is equivalent to Eq.~\eqref{eq:support-compatibility}.  Conversely, suppose a Hermitian \(T_m\) satisfies Eq.~\eqref{eq:support-compatibility} for all \(a\in C_m\).  Since \(S_a-T_m\) is Hermitian, the same condition implies both
\((S_a-T_m)\tau_a=0\) and \(\tau_a(S_a-T_m)=0\).  Therefore
\begin{align}
  T_m\tau_m+\tau_mT_m
  &=\sum_{a\in C_m}(T_m\tau_a+\tau_aT_m)\notag\\
  &=\sum_{a\in C_m}(S_a\tau_a+\tau_aS_a)
  =2\dot\tau_m,
\end{align}
so \(T_m\) is automatically a coarse-block SLD.  The faithful-case counting statement follows because every fine SLD is then unique and two branches can share a lossless class exactly when their scores coincide.
\end{proof}

For several parameters $\boldsymbol\theta=(\theta_1,\ldots,\theta_p)$, let $S_{a,\mu}$ and $T_{m,\mu}$ be the fine and coarse SLD scores. Applying Theorem~\ref{thm:exact-loss} to every real parameter direction and polarizing yields
\begin{align}
  J^{\mathrm{fine}}_{\mu\nu}-J^{(f)}_{\mu\nu}
  &=\frac12\sum_a\operatorname{Tr}\!\left[
  \tau_a\{D_{a,\mu},D_{a,\nu}\}\right],
  \label{eq:qfim-loss}\\
  D_{a,\mu}&=S_{a,\mu}-T_{f(a),\mu}.
\end{align}
so $J^{\mathrm{fine}}-J^{(f)}\succeq0$.
\begin{corollary}[Lossless multiparameter compression]
\label{cor:multiparameter-lossless}
The full matrix is preserved exactly when
\begin{equation}
  (S_{a,\mu}-T_{f(a),\mu})\sqrt{\tau_a}=0
  \quad\text{for all }a,\mu.
  \label{eq:multiparameter-zero-loss}
\end{equation}
\end{corollary}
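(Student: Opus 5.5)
The plan is to reduce the matrix statement to the scalar one. Corollary~\ref{cor:lossless} then handles each real direction separately, and the polarized loss formula Eq.~\eqref{eq:qfim-loss} ties the directions together. Preserving the full matrix means $J^{\mathrm{fine}}-J^{(f)}=0$. Since this difference is positive semidefinite, it vanishes exactly when the quadratic form $u^{\mathsf T}(J^{\mathrm{fine}}-J^{(f)})u$ vanishes for every real $u\in\mathbb R^p$.

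First I fix a real direction $u$ and set $S_{a,u}=\sum_\mu u_\mu S_{a,\mu}$ and $T_{m,u}=\sum_\mu u_\mu T_{m,\mu}$. These are Hermitian. By linearity of Eq.~\eqref{eq:subnormalized-sld} in the derivative, they are valid SLD scores for the directional derivative $\sum_\mu u_\mu\partial_\mu$ of the fine and coarse blocks. Contracting Eq.~\eqref{eq:qfim-loss} with $u$ gives
\begin{equation}
u^{\mathsf T}(J^{\mathrm{fine}}-J^{(f)})u=\sum_a\Tr\!\left[\tau_a D_{a,u}^2\right],
\end{equation}
where $D_{a,u}=\sum_\mu u_\mu D_{a,\mu}=S_{a,u}-T_{f(a),u}$. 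This is precisely Theorem~\ref{thm:exact-loss} applied to the one-parameter model along $u$.

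For sufficiency, assume Eq.~\eqref{eq:multiparameter-zero-loss}. Then $D_{a,u}\sqrt{\tau_a}=\sum_\mu u_\mu D_{a,\mu}\sqrt{\tau_a}=0$ for every $u$. Each summand $\Tr[\tau_aD_{a,u}^2]=\|D_{a,u}\sqrt{\tau_a}\|_2^2$ therefore vanishes, so every quadratic form is zero. Hence the difference is zero, using that it is symmetric, or more directly reading each entry off Eq.~\eqref{eq:qfim-loss} with $\{D_{a,\mu},D_{a,\nu}\}$ and $D_{a,\mu}\tau_a=\tau_aD_{a,\mu}=0$. As in the proof of Corollary~\ref{cor:lossless}, the same support condition shows that the $T_{m,\mu}$ are automatically coarse SLDs, so the condition can be stated for any Hermitian $T_{m,\mu}$.

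For necessity, assume $J^{\mathrm{fine}}=J^{(f)}$. Taking $u=e_\mu$ makes the diagonal entry zero: $\sum_a\Tr[\tau_aD_{a,\mu}^2]=0$. Every summand is nonnegative, so each $\|D_{a,\mu}\sqrt{\tau_a}\|_2^2=0$, which is Eq.~\eqref{eq:multiparameter-zero-loss}.

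The only delicate point is that SLDs are non-unique off the support of $\tau_a$, so I must check that the condition does not depend on the chosen representatives. On the support, $S_{a,\mu}\sqrt{\tau_a}$ is fixed by the Lyapunov equation, so the condition $D_{a,\mu}\sqrt{\tau_a}=0$ is representative-independent. For the coarse scores, any two choices of $T_{m,\mu}$ agree on the support of $\tau_m=\sum_{a\in C_m}\tau_a$, and this support contains the support of each $\tau_a$. This bookkeeping is the main obstacle I expect, and it is resolved by that support inclusion.
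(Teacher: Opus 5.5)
Your proposal is correct and follows essentially the paper's route: reduce to real directions via Theorem~\ref{thm:exact-loss} and the polarized loss formula Eq.~\eqref{eq:qfim-loss}, then use the nonnegativity of each diagonal summand $\|(S_{a,\mu}-T_{f(a),\mu})\sqrt{\tau_a}\|_2^2$ for necessity and direct substitution for sufficiency. Your check that the condition does not depend on the choice of SLD representatives, via $\operatorname{supp}\tau_a\subseteq\operatorname{supp}\tau_{f(a)}$, is correct, and the paper leaves this detail implicit.
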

The normalized extended-convexity remainder~\cite{alipour2015extended}, the Pythagorean center identity, and the resulting operator-clustering variational principle are derived in the Supplemental Material \cite{supp}. They provide constructive one-shot design tools but are not required for the repeated-use rate theorem.

\section{When can the trajectory order be forgotten?}
\label{sec:order-forgetting}

Suppose one monitored use produces faithful blocks $\{\tau_{a,\boldsymbol\vartheta}\}_{a\in\mathcal A}$. For $n$ independent uses,
\begin{equation}
  \tau_{\boldsymbol a,\boldsymbol\vartheta}^{(n)}
  =\tau_{a_1,\boldsymbol\vartheta}\otimes\cdots\otimes
  \tau_{a_n,\boldsymbol\vartheta},
  \qquad \boldsymbol a\in\mathcal A^n.
  \label{eq:iid-fine-block}
\end{equation}
The first result identifies the exact structure required to discard the ordering information.

\begin{theorem}[Exact and stable order forgetting]
\label{thm:order-forgetting}
Let $n\ge2$ and let the one-use fine blocks be faithful.
\begin{enumerate}
\item If a permutation-invariant deterministic compressor preserves the full $n$-use SLD QFI matrix, then for every parameter component $\mu$ there exist a Hermitian operator $L_\mu$ and real scalars $s_{a,\mu}$ such that
\begin{equation}
  S_{a,\mu}=L_\mu+s_{a,\mu}I.
  \label{eq:necessary-common-score-form}
\end{equation}
\label{thm:common-score-necessity}
\item Conversely, under Eq.~\eqref{eq:necessary-common-score-form}, define
\begin{equation}
  \boldsymbol u(\boldsymbol a)=\sum_{i=1}^n\boldsymbol s_{a_i},
  \qquad \boldsymbol s_a=(s_{a,1},\ldots,s_{a,p}).
  \label{eq:accumulated-score}
\end{equation}
A compressor is QFI-lossless exactly when it never merges trajectories with different accumulated scores. Hence
\begin{equation}
  M_{\mathrm{QFI}}^{(n)}=|\mathcal U_n|
  \le \binom{n+r-1}{r-1}\le(n+1)^{r-1},
  \label{eq:accumulated-score-alphabet}
\end{equation}
where $\mathcal U_n$ is the set of reachable accumulated scores. If the common-score form holds throughout a parameter region, the parameter-independent type map is lossless throughout that region.
\label{thm:accumulated-score}
\item More generally, choose Hermitian $L_\mu$, set $q_a=\operatorname{Tr}\tau_a$, and define
\begin{align}
  s_{a,\mu}&=\frac{\operatorname{Tr}[\tau_a(S_{a,\mu}-L_\mu)]}{q_a},\\
  \Delta_{a,\mu}&=S_{a,\mu}-L_\mu-s_{a,\mu}I,
  \label{eq:approximate-common-score-residual}
\end{align}
with defect matrix
\begin{equation}
  \Gamma_{\mu\nu}=\frac12\sum_a\operatorname{Tr}\!\left[
  \tau_a\{\Delta_{a,\mu},\Delta_{a,\nu}\}\right].
  \label{eq:common-score-defect-matrix}
\end{equation}
Then type compression obeys
\begin{equation}
  0\preceq J_{\mathrm{fine}}^{(n)}-J_{\mathrm{type}}^{(n)}
  \preceq n\Gamma.
  \label{eq:stable-type-compression-bound}
\end{equation}
\label{thm:stable-type-compression}
\end{enumerate}
\end{theorem}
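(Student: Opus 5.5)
The proof rests on one structural fact: the SLD score of a tensor-product block is additive. For a faithful product block $\tau^{(n)}_{\boldsymbol a}=\bigotimes_i\tau_{a_i}$ the unique score is
\begin{equation}
S^{(n)}_{\boldsymbol a,\mu}=\sum_{i=1}^n S_{a_i,\mu}^{(i)},
\end{equation}
where $X^{(i)}$ denotes $X$ acting on the $i$th factor. This follows from the Leibniz rule and Eq.~\eqref{eq:subnormalized-sld} applied factorwise, with uniqueness guaranteed by faithfulness. With this in hand, the plan is to apply Corollary~\ref{cor:multiparameter-lossless} (lossless iff $S_{\boldsymbol a,\mu}=T_{f(\boldsymbol a),\mu}$ in the faithful case) to parts 1 and 2, and the polarized identity Eq.~\eqref{eq:qfim-loss} with a trial coarse score to part 3.

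\textbf{Part 1.} A permutation-invariant compressor places $\boldsymbol a$ and each of its permutations in the same class. Losslessness therefore forces $S^{(n)}_{\boldsymbol a,\mu}=S^{(n)}_{\pi\boldsymbol a,\mu}$. Take $\boldsymbol a=(a,b,c,\ldots,c)$ and its transposition $(b,a,c,\ldots,c)$. Equality of the scores gives
\begin{equation}
S_a\otimes I+I\otimes S_b=S_b\otimes I+I\otimes S_a
\end{equation}
on the first two factors, after tensoring with identity on the rest and cancelling. Rearranging, $(S_a-S_b)\otimes I=I\otimes(S_a-S_b)$. The only operators $X$ with $X\otimes I=I\otimes X$ are scalars, as one sees by taking a partial trace. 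Hence $S_a-S_b\in\mathbb R I$ for every pair $a,b$. Fixing a reference branch $a_0$ and setting $L_\mu=S_{a_0,\mu}$ gives Eq.~\eqref{eq:necessary-common-score-form}.

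\textbf{Part 2.} Under the common-score form,
\begin{equation}
S^{(n)}_{\boldsymbol a,\mu}=\sum_i L_\mu^{(i)}+u_\mu(\boldsymbol a)\,I .
\end{equation}
Two trajectories thus have equal scores exactly when $\boldsymbol u$ agrees. By the faithful case of Corollary~\ref{cor:lossless}, a compressor is lossless iff it never merges different $\boldsymbol u$, and $M^{(n)}_{\mathrm{QFI}}=|\mathcal U_n|$. Since $\boldsymbol u$ depends only on the type of $\boldsymbol a$, $|\mathcal U_n|$ is at most the number of types, $\binom{n+r-1}{r-1}$. That is in turn at most $(n+1)^{r-1}$, because each of the $r-1$ free counts lies in $\{0,\ldots,n\}$. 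If the form holds on a region, the type partition refines the $\boldsymbol u$-partition at every parameter value, so the fixed type map is lossless throughout.

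\textbf{Part 3.} This is the main step. The idea is to exhibit a trial operator on each type class and use the fact that the true coarse SLD minimizes the residual. Concretely, the Pythagorean argument in the proof of Theorem~\ref{thm:exact-loss} shows that for any Hermitian $X$ on a class $C_m$,
\begin{equation}
\sum_{a\in C_m}\Tr[\tau_a(S_a-X)^2]=\sum_{a\in C_m}\Tr[\tau_a(S_a-T_m)^2]+\Tr[\tau_m(T_m-X)^2].
\end{equation}
To get this, expand the left side and use $\sum_{a\in C_m}\Tr(\tau_a\{S_a,Y\})=2\Tr(\dot\tau_mY)=\Tr(\tau_m\{T_m,Y\})$. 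So the loss is at most the residual for any trial $X$. The same holds in polarized form for the quadratic form $x^\top(\cdot)x$ with direction $x$, which applies the scalar statement to the score $\sum_\mu x_\mu S_\mu$.

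On the type class, choose the trial operator $X_\mu=\sum_i L_\mu^{(i)}+u_\mu I$, which is permutation-invariant. The residual is then $\sum_i\Delta^{(i)}_{a_i,\mu}$, and we need
\begin{equation}
\sum_{\boldsymbol a}\Tr\Bigl[\tau_{\boldsymbol a}\Bigl(\sum_i x\cdot\Delta^{(i)}_{a_i}\Bigr)^2\Bigr].
\end{equation}
The cross terms $i\neq j$ factorize into $\Tr(\tau_{a_i}x\cdot\Delta_{a_i})\,\Tr(\tau_{a_j}x\cdot\Delta_{a_j})\times(\text{other traces})$. Each first factor vanishes, because $s_{a,\mu}$ was defined so that $\Tr[\tau_a\Delta_{a,\mu}]=0$. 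The diagonal terms sum, using $\sum_b q_b=1$ on the other factors, to $n\,x^\top\Gamma x$. This gives $J^{(n)}_{\mathrm{fine}}-J^{(n)}_{\mathrm{type}}\preceq n\Gamma$. The lower bound $0\preceq$ is Eq.~\eqref{eq:qfim-loss}.
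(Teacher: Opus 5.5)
Your proposal is correct and follows the paper's proof essentially step for step. Part 1 uses the same two-label exchange followed by a partial trace. Part 2 combines the additive product score $\sum_i L_\mu^{(i)}+u_\mu(\boldsymbol a)I$ with the faithful zero-loss criterion, as the paper does. Part 3 inserts the type-class trial center $\sum_i L_\mu^{(i)}+u_\mu I$ into the Pythagorean center identity, with the cross terms removed by $\Tr(\tau_a\Delta_{a,\mu})=0$, which is also the paper's argument. You also spell out two small points the paper leaves implicit: why the product SLD is additive, and why $\binom{n+r-1}{r-1}\le(n+1)^{r-1}$.
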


\begin{proof}
For the first statement, compare two trajectories that differ only by exchanging labels $a$ and $b$ in the first two tensor factors. Losslessness and faithfulness make their product SLDs equal, so
$D_\mu\otimes I=I\otimes D_\mu$ with
$D_\mu=S_{a,\mu}-S_{b,\mu}$. A partial trace forces
$D_\mu=\operatorname{Tr}(D_\mu)I/d$, proving Eq.~\eqref{eq:necessary-common-score-form}.

Under this form, the product SLD is
\begin{equation}
  S_{\boldsymbol a,\mu}^{(n)}
  =\sum_{i=1}^n L_\mu^{(i)}+u_\mu(\boldsymbol a)I.
  \label{eq:product-common-score}
\end{equation}
Faithfulness and Eq.~\eqref{eq:multiparameter-zero-loss} therefore imply that two trajectories can share a lossless class exactly when their accumulated scores agree. The accumulated score depends only on the type vector, giving Eq.~\eqref{eq:accumulated-score-alphabet} and a sequential update rule.

For the stability statement, the common part and accumulated scalar score form an admissible type-class center. The centered residuals on distinct tensor factors have vanishing cross terms after summation over trajectories, leaving $n\boldsymbol v^{\mathsf T}\Gamma\boldsymbol v$ in every real parameter direction $\boldsymbol v$. The complete derivation is given in the Supplemental Material \cite{supp}.
\end{proof}

The third statement is a stability result: a fixed nonzero defect gives a uniform per-use loss bound, not asymptotically vanishing total loss.

\section{Task-dependent syndrome-memory rates}
\label{sec:rates}

Polynomial QFI compression does not require identical or commuting conditional states.

\paragraph*{Noncommuting realization.}
\phantomsection\label{thm:noncommuting-qfi-rate}
For every $r\ge2$ and $n\ge1$, a faithful monitored qubit model exists whose normalized conditional outputs are pairwise noncommuting, while
\begin{equation}
  M_{\mathrm{QFI}}^{(n)}=\binom{n+r-1}{r-1},
  \qquad M_{\mathrm{state}}^{(n)}=r^n.
  \label{eq:noncommuting-rate-separation}
\end{equation}
Thus polynomial QFI compression does not rely on identical or commuting conditional quantum states.

A faithful SLD exponential family and label-dependent rotations about a common generator realize Theorem~\ref{thm:noncommuting-qfi-rate}. The conditional Bloch vectors are pairwise noncommuting, while the noise-weight scores distinguish all type vectors. The same error operators have a complete $n$-use incompatibility graph, so universal recovery still requires every trajectory. The construction and direct checks are given in the Supplemental Material \cite{supp}.

Preserving the complete ordered flagged model is stronger than preserving its QFI. The following result explains why the exact model-recovery example below has identical normalized branches in the informationally complete regime.

\paragraph*{Complete-model obstruction.}
\phantomsection\label{thm:full-model-recovery-obstruction}
Let $\Omega_{\boldsymbol\vartheta}=\bigoplus_j\tau_{j,\boldsymbol\vartheta}$ be faithful and let $X_{\boldsymbol\vartheta}=\sum_j\tau_{j,\boldsymbol\vartheta}$. Forgetting $j$ is reversible on the full family exactly when there are parameter-independent operators $A_j$ such that
\begin{align}
  \tau_{j,\boldsymbol\vartheta}&=A_jX_{\boldsymbol\vartheta}A_j^\dagger,
  \label{eq:petz-branch-factorization}\\
  \sum_jA_j^\dagger A_j&=I.
  \label{eq:petz-branch-completeness}
\end{align}
If
\begin{equation}
  \operatorname{span}_{\mathbb C}\{X_{\boldsymbol\vartheta}\}
  =\mathcal L(\mathcal H),
  \label{eq:informationally-complete-coarse-family}
\end{equation}
then necessarily
\begin{equation}
  \tau_{j,\boldsymbol\vartheta}=p_jX_{\boldsymbol\vartheta}
  \label{eq:flat-branch-necessity}
\end{equation}
with parameter-independent probabilities $p_j$. Hence an informationally complete faithful coarse family can recover forgotten labels only when the normalized branch states are identical.

The Petz characterization and a noncommuting recovery example on a proper sufficient subsystem are given in the Supplemental Material \cite{supp}. Thus nonflat full-model recovery is possible on a proper sufficient subsystem, but informational completeness forces the flat structure in Eq.~\eqref{eq:flat-branch-necessity}.

\begin{theorem}[Zero-rate model record versus positive-rate state-recovery record]
\label{thm:rate-separation}
For every integer \(r\geq2\) and every \(n\geq1\), there is a faithful \(r\)-parameter qubit model with \(r\) fine random-unitary errors per use such that
\begin{equation}
  M_{\mathrm{QFI}}^{(n)}
  =M_{\mathrm{model}}^{(n)}
  =\binom{n+r-1}{r-1},
  \qquad
  M_{\mathrm{state}}^{(n)}=r^n.
  \label{eq:exact-rate-separation}
\end{equation}
Here \(M_{\mathrm{model}}^{(n)}\) is the minimum size of a classical terminal alphabet obtainable by parameter-independent post-processing of the fine label and admitting a parameter-independent channel that reconstructs the entire fine flagged statistical family.  Likewise, \(M_{\mathrm{state}}^{(n)}\) allows arbitrary parameter-independent classical post-processing of the fine label before exact recovery of every \(n\)-qubit input state.  Hence
\begin{align}
  \log_2 M_{\mathrm{model}}^{(n)}
  &=(r-1)\log_2 n-\log_2[(r-1)!]+o(1),
  \label{eq:model-memory-scaling}\\
  \log_2 M_{\mathrm{state}}^{(n)}
  &=n\log_2 r,
  \label{eq:state-memory-scaling}
\end{align}
so the exact terminal model-recovery record rate is zero while the exact terminal record rate for deferred arbitrary-state recovery is \(\log_2r\) bits per use.
\end{theorem}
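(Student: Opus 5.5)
The plan is to build an explicit model with \emph{flat} branches, as the complete-model obstruction suggests. Then obtain the two equalities in Eq.~\eqref{eq:exact-rate-separation} by matching upper bounds (explicit compressors and recovery maps) with lower bounds (Theorem~\ref{thm:order-forgetting} for $M_{\mathrm{QFI}}$, and an error-distinguishability argument for $M_{\mathrm{state}}$).

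\emph{The model.} Take parameters $\boldsymbol\vartheta=(\theta,p_1,\ldots,p_{r-1})$ with $p_r=1-\sum_{a<r}p_a$ and all $p_a\in(0,1)$. Take a faithful probe $\rho_\theta=\tfrac12(I+\lambda_\theta Z)$ with $|\lambda_\theta|<1$ and $\dot\lambda_\theta\neq0$, and errors $U_a=\exp(-i\phi_aZ/2)$ with $\phi_a=2\pi a/r$. The Kraus operators are $E_a=\sqrt{p_a}U_a$. Since $[U_a,\rho_\theta]=0$, the fine blocks are $\tau_a=p_a\rho_\theta$: all are faithful with identical normalized branches. Their SLDs are $S_{a,\theta}=L_\theta$ and $S_{a,p_\mu}=\partial_{p_\mu}\log p_a\, I$. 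This is exactly the common-score form~\eqref{eq:necessary-common-score-form}, holding throughout the parameter region, and the scalar scores are $s_{a,\mu}=\delta_{a\mu}/p_\mu-\delta_{ar}/p_r$.

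\emph{QFI and model counts.} By part~\ref{thm:accumulated-score} of Theorem~\ref{thm:order-forgetting}, $M_{\mathrm{QFI}}^{(n)}=|\mathcal U_n|$. For a type $\boldsymbol k$ the accumulated score is $u_\mu(\boldsymbol k)=k_\mu/p_\mu-k_r/p_r$ with $k_r=n-\sum_{a<r}k_a$. This is an affine map of $(k_1,\ldots,k_{r-1})$ whose linear part is $\operatorname{diag}(1/p_\mu)+\mathbf 1\mathbf 1^{\mathsf T}/p_r$, which is positive definite and hence injective. So all $\binom{n+r-1}{r-1}$ types give distinct scores, and the upper bound in~\eqref{eq:accumulated-score-alphabet} is attained.

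For the model count, any parameter-independent reconstruction of the fine family from a compressed label preserves the QFI matrix, by data processing in both directions. Hence $M_{\mathrm{model}}^{(n)}\ge M_{\mathrm{QFI}}^{(n)}$. For the reverse inequality, use the type map together with the following recovery: given type $\boldsymbol k$, sample a sequence $\boldsymbol a$ uniformly from the type class, write $|\boldsymbol a\rangle$ into the register, and leave the quantum output untouched. The coarse block is $|T_{\boldsymbol k}|\prod_ap_a^{k_a}\rho_\theta^{\otimes n}$ and every sequence in the class has equal weight $\prod_a p_a^{k_a}$, so this reproduces $\mathcal J(\rho_\theta)^{\otimes n}$ exactly for all $\boldsymbol\vartheta$.

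\emph{State-recovery count.} The upper bound $M_{\mathrm{state}}^{(n)}\le r^n$ comes from storing $\boldsymbol a$ and applying $U_{\boldsymbol a}^\dagger$. For the lower bound, suppose a class contains $\boldsymbol a\neq\boldsymbol b$. Exact recovery of every input from the channel $\sum_{\boldsymbol a\in C}p_{\boldsymbol a}U_{\boldsymbol a}\cdot U_{\boldsymbol a}^\dagger$ requires the Knill--Laflamme condition with code projector $I$, i.e.\ $U_{\boldsymbol a}^\dagger U_{\boldsymbol b}\propto I$. But $U_a^\dagger U_b=\exp(-i(\phi_b-\phi_a)Z/2)$ with $0<|\phi_b-\phi_a|<2\pi$ is not a scalar. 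A tensor product with at least one non-scalar factor is itself non-scalar, so this is a contradiction. Equivalently, the $n$-use incompatibility graph is complete and every class is a singleton. Randomized post-processing does not help, since each realized label must still support exact recovery.

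\emph{Asymptotics.} Equation~\eqref{eq:model-memory-scaling} follows from $\binom{n+r-1}{r-1}=\frac{n^{r-1}}{(r-1)!}(1+O(1/n))$ at fixed $r$, and Eq.~\eqref{eq:state-memory-scaling} is immediate. Dividing by $n$ gives the stated rates $0$ and $\log_2 r$.

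I expect the main obstacle to be the state-recovery lower bound. One must justify that exact universal recovery forces the compressed classes to be trivial, including for parameter-dependent mixing weights and randomized post-processing; this is where I would carefully invoke the Knill--Laflamme condition with full-space code. The remaining steps reduce to Theorem~\ref{thm:order-forgetting} plus the injectivity check above.
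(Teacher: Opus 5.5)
Your proposal is correct, and for $M_{\mathrm{QFI}}^{(n)}$ and $M_{\mathrm{state}}^{(n)}$ it matches the paper. You drop the $X$ flip, so the normalized branches are $\rho_\theta$ rather than $\sigma_z$, which is immaterial. Your positive-definite-Jacobian injectivity check is equivalent to the paper's argument ($\Delta_\mu/q_\mu=\Delta_r/q_r=c$, hence $c=0$). Your Knill--Laflamme argument for $M_{\mathrm{state}}^{(n)}$, including the remark on randomized post-processing, is the paper's argument.

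The genuine difference is the lower bound $M_{\mathrm{model}}^{(n)}\geq\binom{n+r-1}{r-1}$. The paper argues purely classically. Tracing out the untouched quantum output, exact reconstruction forces $\boldsymbol p(\boldsymbol q)=RC\boldsymbol p(\boldsymbol q)$, so $M\geq\operatorname{rank}C$. That rank is at least the dimension of the span of the ordered-trajectory probability vectors. This span is the full type subspace because degree-$n$ homogeneous monomials are linearly independent on the simplex interior. You instead apply monotonicity of the SLD QFI matrix in both directions, $J_{\mathrm{fine}}=J(R(\Omega^{(f)}))\preceq J^{(f)}\preceq J_{\mathrm{fine}}$. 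This yields the general inequality $M_{\mathrm{model}}^{(n)}\geq M_{\mathrm{QFI}}^{(n)}$ for any faithful model, and you then reuse Theorem~\ref{thm:order-forgetting}.

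What each route buys:
\begin{itemize}
\item Your route is shorter. It makes explicit that complete model recovery is a stronger task than QFI preservation, and it only needs losslessness at one parameter point.
\item The paper's route avoids SLD machinery altogether and handles stochastic terminal compressors directly.
\end{itemize}

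The paper's proof explicitly admits stochastic post-processing, and $M_{\mathrm{QFI}}^{(n)}$ is defined through deterministic partitions. To cover that case with your route, add one step. Write the stochastic output as $\sum_f\lambda_f\Omega^{(f)}$ with parameter-independent weights $\lambda_f=\prod_{\boldsymbol a}P(f(\boldsymbol a)|\boldsymbol a)$, as in the Supplemental Material. Matrix convexity of the QFI then gives $J_{\mathrm{fine}}\preceq\sum_f\lambda_fJ^{(f)}\preceq J_{\mathrm{fine}}$. Hence every deterministic $f$ with $\lambda_f>0$ is lossless and has at most $M$ nonempty classes.
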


A concise proof sketch identifies the two resources. Choose a faithful signal state $\rho_z=(I+zZ)/2$, unknown interior probabilities $\boldsymbol q$, and fine errors $E_a=\sqrt{q_a}Xe^{-i\phi_aZ/2}$ with distinct phases. Every normalized conditional output is the same $\sigma_z=(I-zZ)/2$, while the probability-score vector is injective on the type counts. The type record therefore preserves the QFI and reconstructs the complete flagged family by uniformly expanding each type into its ordered trajectories. Its minimality against stochastic terminal processing follows because the ordered-trajectory probability vectors span the full type subspace. In contrast, distinct trajectories are pairwise Knill--Laflamme incompatible on the full $n$-qubit code~\cite{knill1997theory,nielsen2010quantum}, so each recoverable terminal output can support only one trajectory. The complete proof is in the Supplemental Material \cite{supp}.

\paragraph*{Growing alphabets.}
\phantomsection\label{cor:growing-alphabet-rate}
If the one-use alphabet has size $r_n$, the type-record rate obeys
\begin{equation}
  \lim_{n\to\infty}\frac1n\log_2\binom{n+r_n-1}{r_n-1}=0
  \quad\Longleftrightarrow\quad r_n=o(n).
  \label{eq:growing-alphabet-threshold}
\end{equation}
Thus zero terminal rate persists for every sublinear alphabet growth, not only for fixed $r$; the proof is in the Supplemental Material \cite{supp}.

The recovery cost has a simple graph formulation. Assume each fine error $E_a$ is individually correctable on a code projector $P$ and define the incompatibility graph $G_P$ by an edge $a\sim b$ whenever $PE_a^\dagger E_bP$ is not proportional to $P$.

\begin{theorem}[Deferred storage and online feedback]
\label{thm:recovery-resource}
The minimum one-use recovery alphabet is $\chi(G_P)$. For $n$ deferred uses of the product instrument, a terminal record must distinguish at least the chromatic requirements of the $n$-use incompatibility graph; for the complete-graph family of Theorems~\ref{thm:noncommuting-qfi-rate} and~\ref{thm:rate-separation}, this gives $r^n$ terminal records. If exact correction is instead applied after every use, then every reachable classical history requires at least $\chi(G_P)$ nonempty feedback symbols, and the tree of possible $n$-round control transcripts has at least $\chi(G_P)^n$ leaves. These bounds are attained by reusing a fixed optimal coloring and its class-dependent recoveries.
\end{theorem}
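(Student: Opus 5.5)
The proof has four parts: the one-use coloring bound, its $n$-use extension, the online feedback count, and achievability. The basic tool is the Knill--Laflamme theorem applied to each class of a readout partition.

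\emph{One use.} Consider any classical post-processing $g:\mathcal A\to[M]$ of the fine label that admits exact recovery of every code state. After readout of symbol $m$, the effective noise is the channel with Kraus operators $\{E_a\}_{a\in g^{-1}(m)}$, and one fixed recovery $\mathcal R_m$ must correct it on $P$. By Knill--Laflamme, $PE_a^\dagger E_bP\propto P$ must hold for all $a,b$ in the same class. So every class is an independent set of $G_P$, and $g$ is a proper coloring, giving $M\ge\chi(G_P)$. Conversely, each color class of an optimal coloring satisfies the Knill--Laflamme conditions and has a recovery. Feeding the color to the corresponding recovery attains $\chi(G_P)$.

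A randomized post-processing cannot do better. Each output symbol $m$ receives some set of labels with positive weight. The conditional channel on that set is a weighted sum of the same Kraus products, so the Knill--Laflamme conditions must again hold on the support, and the support sets cover $\mathcal A$.

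\emph{Deferred $n$ uses.} Apply the same argument to the product code $P^{\otimes n}$ with product errors $E_{\boldsymbol a}=\bigotimes_iE_{a_i}$. The relevant graph is the $n$-use incompatibility graph $G_P^{(n)}$, with $\boldsymbol a\sim\boldsymbol b$ whenever $P^{\otimes n}E_{\boldsymbol a}^\dagger E_{\boldsymbol b}P^{\otimes n}\not\propto P^{\otimes n}$. The terminal record therefore needs at least $\chi(G_P^{(n)})$ symbols. The product factorizes as $\bigotimes_i PE_{a_i}^\dagger E_{b_i}P$. Each factor is proportional to $P$ when $a_i=b_i$, because the errors are individually correctable. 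Hence $\boldsymbol a\sim\boldsymbol b$ whenever some coordinate pair is an edge of $G_P$. For a complete $G_P$ this holds for every distinct pair of trajectories, so $G_P^{(n)}$ is complete on $r^n$ vertices and needs $r^n$ records.

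The main obstacle is verifying that the explicit families of Theorems~\ref{thm:noncommuting-qfi-rate} and~\ref{thm:rate-separation} really have complete one-use graphs on the full qubit code $P=I$. Here $E_a^\dagger E_b\propto e^{i(\phi_a-\phi_b)Z/2}$, which is not proportional to $I$ when the phases are distinct mod $2\pi$. Choosing the phases distinct mod $2\pi$ makes the graph complete.

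\emph{Online feedback.} Fix any reachable classical history $h$ before round $t$. Conditioned on $h$, the code state is again an arbitrary code state, because exact correction after every previous round restores it. The round-$t$ instrument is the same product instrument on that use. The one-use argument therefore applies to the feedback map used at round $t$ given $h$, and at least $\chi(G_P)$ feedback symbols must occur with nonzero probability. A symbol can be dropped only if its class is empty, and dropping it would leave some label unreadable. Inducting on rounds, every internal node of the transcript tree has at least $\chi(G_P)$ children, so there are at least $\chi(G_P)^n$ leaves. Reusing one optimal coloring and its class-dependent recoveries in every round gives exactly $\chi(G_P)$ children per node and therefore attains the bound.
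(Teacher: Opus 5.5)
Your proposal is correct and follows essentially the same route as the paper: Knill--Laflamme forces every terminal flag, or every feedback symbol at a given reachable history, to carry an independent set of the relevant incompatibility graph. The per-history bound is then iterated to give the $\chi(G_P)^n$ transcript tree, and reusing a fixed optimal coloring with its class-dependent recoveries attains the bounds.

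One step needs a small repair. Your claim that $\boldsymbol a\sim\boldsymbol b$ whenever some coordinate is an edge requires every factor $PE_{a_j}^\dagger E_{b_j}P$ to be nonzero, because a zero factor makes the product $0\propto P^{\otimes n}$. This condition does hold when $G_P$ is complete with $r\ge2$, and trivially in the explicit unitary families; the paper handles that case by comparing computational-basis eigenvalues directly.
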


\begin{proof}
Every set of errors assigned to one terminal flag or one online control symbol must satisfy the Knill--Laflamme conditions and therefore forms an independent set of $G_P$. Covering all fine errors needs at least $\chi(G_P)$ such sets. Applying this argument at every reachable online history gives the per-use alphabet bound and, by induction, the transcript-tree bound. A proper coloring supplies matching recoveries and attains the bounds.
\end{proof}

Thus online correction can recycle a one-symbol register, but it does not eliminate syndrome information: terminal storage is replaced by repeated readout and conditioned control. If the noise weights $\boldsymbol q$ are known and only the signal $z$ is estimated in Theorem~\ref{thm:rate-separation}, all fine branches have the same signal SLD and one label suffices. The polynomial record is required for the complete joint signal-and-noise model, as in self-calibration, nuisance-parameter tracking, or syndrome-based noise spectroscopy \cite{wagner2022pauli}.

\section{Broader channel perspective and discussion}
\label{sec:discussion}

The finite-syndrome identity is the commutative special case of a general data-processing remainder. This broader perspective connects statistical sufficiency, multiplicative domains, and complementary-channel recovery~\cite{petz1988sufficiency,gao2024sufficient,choi1974schwarz,choi2009multiplicative,beny2009conditions,kretschmann2008information}. Let $\mathcal N$ be parameter independent, $\sigma_\theta=\mathcal N(\rho_\theta)$, $\Psi=\mathcal N^\dagger$, and $V$ a Stinespring isometry. For input and output SLDs $L$ and $T$:

\begin{theorem}[Stinespring Pythagorean identity]
\label{thm:stinespring-pythagorean}
Define the score-intertwining defect
\begin{equation}
  \mathcal E_{L,T}
  =[(T\otimes I_E)V-VL]\sqrt{\rho_\theta}.
  \label{eq:score-intertwining-defect}
\end{equation}
For every parameter-independent quantum channel,
\begin{align}
  F_Q(\rho_\theta)-F_Q(\sigma_\theta)
  &=\|\mathcal E_{L,T}\|_2^2
  \label{eq:stinespring-loss}\\
  &=\Tr\!\left\{\rho_\theta[L-\Psi(T)]^2\right\}\notag\\
  &\quad+\Tr\!\left\{\rho_\theta
  [\Psi(T^2)-\Psi(T)^2]\right\}.
  \label{eq:heisenberg-loss}
\end{align}
Both terms in Eq.~\eqref{eq:heisenberg-loss} are nonnegative.  The following conditions are equivalent:
\begin{align}
  F_Q(\rho_\theta)&=F_Q(\sigma_\theta),
  \label{eq:channel-zero-a}\\
  [L-\Psi(T)]\sqrt{\rho_\theta}&=0,
  \label{eq:channel-zero-b}\\
  [(T\otimes I_E)V-VL]\sqrt{\rho_\theta}&=0.
  \label{eq:channel-zero-c}
\end{align}
If \(\rho_\theta\) is faithful, these conditions are further equivalent to
\begin{equation}
  L=\Psi(T),
  \qquad
  \Psi(T^2)=\Psi(T)^2,
  \label{eq:faithful-channel-zero}
\end{equation}
so the output SLD \(T\) belongs to the multiplicative domain of \(\Psi\).
\end{theorem}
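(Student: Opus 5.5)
The approach is to mirror the proof of Theorem~\ref{thm:exact-loss}, lifting it through the Stinespring dilation. We write $\sigma_\theta=\Tr_E(V\rho_\theta V^\dagger)$ and use $\Psi(X)=V^\dagger(X\otimes I_E)V$. Parameter independence of $V$ gives $\dot\sigma_\theta=\Tr_E(V\dot\rho_\theta V^\dagger)$. We take the input SLD relation $\dot\rho=\tfrac12(L\rho+\rho L)$ and the output relation $\dot\sigma=\tfrac12(T\sigma+\sigma T)$, with SLDs chosen as Hermitian operators, supported on the relevant supports where states are rank deficient.

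\textbf{Step 1: the Hilbert--Schmidt identity.} Expanding $\|\mathcal E_{L,T}\|_2^2=\Tr[\mathcal E^\dagger\mathcal E]$ produces three terms:
\begin{itemize}
\item $\Tr[\sqrt\rho\, V^\dagger(T^2\otimes I)V\sqrt\rho]=\Tr[\rho\Psi(T^2)]=\Tr[\sigma T^2]=F_Q(\sigma_\theta)$;
\item $\Tr[\sqrt\rho L^2\sqrt\rho]=F_Q(\rho_\theta)$, using $V^\dagger V=I$;
\item the cross terms $-\Tr[\rho(\Psi(T)L+L\Psi(T))]=-2\Tr[\dot\rho\,\Psi(T)]$.
\end{itemize}
For the cross terms, duality gives $\Tr[\dot\rho\Psi(T)]=\Tr[\dot\sigma T]=\Tr[\sigma T^2]$, exactly as in Eq.~\eqref{eq:cross-collapse}. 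Summing, $F_Q(\rho)+F_Q(\sigma)-2F_Q(\sigma)$, which is Eq.~\eqref{eq:stinespring-loss}.

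\textbf{Step 2: the Heisenberg splitting.} We check that
\[
\Tr\{\rho[L-\Psi(T)]^2\}+\Tr\{\rho[\Psi(T^2)-\Psi(T)^2]\}
\]
expands to $\Tr\rho L^2-2\Tr[\dot\rho\Psi(T)]+\Tr\rho\Psi(T^2)$, which is the same quantity. The first term is nonnegative as in Theorem~\ref{thm:exact-loss}. The second is nonnegative by the Kadison--Schwarz inequality $\Psi(T)^2\le\Psi(T^2)$ for unital CP maps. More concretely, $\Psi(T^2)-\Psi(T)^2=V^\dagger(T\otimes I)(I-VV^\dagger)(T\otimes I)V\ge0$. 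This is the key structural observation: it shows that
\[
\mathcal E=(I-VV^\dagger)(T\otimes I)V\sqrt\rho+V[\Psi(T)-L]\sqrt\rho
\]
is an orthogonal decomposition into the range of $V$ and its complement, which gives the Pythagorean form directly.

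\textbf{Step 3: equivalences.} (a)$\Leftrightarrow$(c) is immediate from Eq.~\eqref{eq:stinespring-loss}. For (c)$\Rightarrow$(b), apply $V^\dagger$ to (c). For (b)$\Rightarrow$(c), I expect the main subtlety. Condition (b) kills only the range-of-$V$ component, so we must show that (b) already forces the complementary piece $(I-VV^\dagger)(T\otimes I)V\sqrt\rho$ to vanish. The planned route is through the loss itself. Condition (b) gives $\Psi(T)\sqrt\rho=L\sqrt\rho$, hence $\Tr[\rho\Psi(T)^2]=\Tr\rho L^2=F_Q(\rho)$. It also gives $\Tr[\dot\rho\Psi(T)]=\Tr[\rho L\Psi(T)]=F_Q(\rho)$, while the same quantity equals $F_Q(\sigma)$ by Step 1. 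Hence $F_Q(\rho)=F_Q(\sigma)$, which is (a), and therefore (c) holds.

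In the faithful case, $\sqrt\rho$ is invertible, so (b) becomes $L=\Psi(T)$. Vanishing of the second nonnegative term against a faithful $\rho$ gives $\Psi(T^2)=\Psi(T)^2$. Conversely, both equalities make the two terms zero. Finally, we invoke Choi's characterization: $\Psi(T^2)=\Psi(T)^2$ for Hermitian $T$ places $T$ in the multiplicative domain.
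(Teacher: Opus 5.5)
Your proposal is correct and follows essentially the same route as the paper: you use the cross identity $\Tr\{\rho_\theta[L\Psi(T)+\Psi(T)L]\}=2\Tr[\dot\rho_\theta\Psi(T)]=2\Tr(\sigma_\theta T^2)$, the orthogonal splitting of $\mathcal E_{L,T}$ into its components in the range of $V$ and its complement (with Kadison--Schwarz for the second term), and you obtain (b)$\Rightarrow$(a) by substituting $L\sqrt{\rho_\theta}=\Psi(T)\sqrt{\rho_\theta}$ into that cross identity. The only cosmetic difference is that you get (c)$\Rightarrow$(b) by applying $V^\dagger$, whereas the paper deduces (b) from (a) using the nonnegativity of both terms; either works.
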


The proof, the multiparameter Gram form, and the exact randomization penalty for stochastic syndrome readout are given in the Supplemental Material \cite{supp}. The identity also locates the boundary between task-specific metrological preservation and full channel reversibility.

\paragraph*{Generating-score rigidity.}
\phantomsection\label{thm:score-algebra-rigidity}
Let $\rho_{\boldsymbol\theta}>0$ and suppose a parameter-independent channel preserves the QFI along a finite tuple of input SLDs $L_\mu$,
\begin{equation}
  J^{\mathrm{in}}_{\mu\mu}=J^{\mathrm{out}}_{\mu\mu}
  \quad\text{for every }\mu.
  \label{eq:rigidity-coordinate-equality}
\end{equation}
Writing $\mathcal A_T=C^*(I,T_1,\ldots,T_p)$ and $\mathcal A_L=C^*(I,L_1,\ldots,L_p)$, the channel adjoint restricts to a surjective unital $*$-homomorphism
\begin{equation}
  \mathcal N^\dagger\big|_{\mathcal A_T}:\mathcal A_T\to\mathcal A_L.
  \label{eq:score-homomorphism}
\end{equation}
If $\mathcal A_L=\mathcal L(\mathcal H)$, the channel is reversible on the entire input state space:
\begin{equation}
  \mathcal R\circ\mathcal N=\operatorname{id}_{\mathcal L(\mathcal H)}.
  \label{eq:full-channel-recovery}
\end{equation}
The proof is given in the Supplemental Material \cite{supp}.

A conditioned approximate version is useful only after algebraic generation is quantitatively stable. Define the completely bounded score-conditioning constant
\begin{equation}
  \kappa_{\mathrm{sc}}^{\mathrm{cb}}(\boldsymbol L)
  =\sup_{k\ge1}\sup_{K\notin M_k\otimes I}
  \frac{\|K-\mathcal P_k(K)\|_\infty}
  {\max_\mu\|[K,I_k\otimes L_\mu]\|_\infty},
  \label{eq:cb-score-conditioning}
\end{equation}
where $\mathcal P_k$ is the conditional expectation onto the amplified scalar commutant. If $\rho\ge\lambda I$ and the largest generating-coordinate QFI loss is $\delta$, the Supplemental Material \cite{supp} proves environmental leakage of order $\kappa_{\mathrm{sc}}^{\mathrm{cb}}\sqrt{\delta/\lambda}$ and a recovery error of order $\sqrt{\kappa_{\mathrm{sc}}^{\mathrm{cb}}}(\delta/\lambda)^{1/4}$. This is a structural stability result rather than a practical error budget; no dimension-independent bound exists without score conditioning.

The operational scope is deliberately narrower than arbitrary channel engineering. The orthogonal fine record must be physically available through an ancilla, monitored environment, erasure sector, or another genuine pointer degree of freedom. Direct coarse readout reduces the terminal alphabet but not the Hilbert-space dimension or accessibility cost of the fine sectors. The repeated-use theorems assume independent faithful blocks; correlated trajectories, rank-deficient order-forgetting criteria, and coherent-controller resource bounds require separate theories. Exact type compression concerns local SLD information, whereas complete model recovery and universal code recovery are progressively stronger tasks. Theorem~\ref{thm:full-model-recovery-obstruction} identifies a sharp obstruction to nonflat model recovery in the informationally complete faithful regime.

Finite syndrome resolution therefore creates a task-dependent information bottleneck. The exact loss geometry determines which branches can be merged; the common-score criterion determines when trajectory order is redundant; and the rate theorems show that metrological inference, complete statistical-model recovery, deferred state recovery, and online control can require fundamentally different forms of syndrome information. What must be remembered is determined jointly by the monitored instrument, the statistical task, the recovery target, and the timing of feedback.

\section*{Data Availability}
No experimental data were created or analyzed in this study. The numerical verification code supporting the analytic checks is available from the author upon reasonable request.

\bibliographystyle{apsrev4-2}
\bibliography{references}

\clearpage
\onecolumngrid
\begin{center}
  {\large\bfseries Supplemental Material for\\[0.35em]
  ``Task-Dependent Syndrome Memory in Quantum Sensing and State Recovery''}\\[0.9em]
  {\normalsize Jianqi Sheng}\\[0.25em]
  {\small Department of Physics, City University of Hong Kong, Hong Kong}
\end{center}
\vspace{0.8em}
\begin{center}
\begin{minipage}{0.88\textwidth}
\small
This Supplemental Material provides deferred proofs and structural results supporting the main text, including the exact flagged-QFI geometry, multiparameter channel identities, model-recovery constructions, and conditioned approximate rigidity. It also presents analytic qubit and continuous-axis examples, the exact extended-convexity gap, and technical details of the operator-clustering algorithm.
\end{minipage}
\end{center}
\vspace{1em}
\twocolumngrid

% Supplemental Material numbering.
\setcounter{section}{0}
\setcounter{subsection}{0}
\setcounter{equation}{0}
\setcounter{theorem}{0}
\setcounter{proposition}{0}
\setcounter{lemma}{0}
\setcounter{corollary}{0}
\setcounter{definition}{0}
\setcounter{remark}{0}
\setcounter{figure}{0}
\setcounter{table}{0}
\renewcommand{\thesection}{S\arabic{section}}
\renewcommand{\thesubsection}{S\arabic{section}.\Alph{subsection}}
\renewcommand{\theequation}{S\arabic{equation}}
\renewcommand{\thetheorem}{S\arabic{theorem}}
\renewcommand{\theproposition}{S\arabic{proposition}}
\renewcommand{\thelemma}{S\arabic{lemma}}
\renewcommand{\thecorollary}{S\arabic{corollary}}
\renewcommand{\thedefinition}{S\arabic{definition}}
\renewcommand{\theremark}{S\arabic{remark}}
\renewcommand{\thefigure}{S\arabic{figure}}
\renewcommand{\thetable}{S\arabic{table}}
\makeatletter
\@ifundefined{theHsection}{\newcommand{\theHsection}{S\arabic{section}}}{\renewcommand{\theHsection}{S\arabic{section}}}
\@ifundefined{theHsubsection}{\newcommand{\theHsubsection}{S\arabic{section}.\Alph{subsection}}}{\renewcommand{\theHsubsection}{S\arabic{section}.\Alph{subsection}}}
\@ifundefined{theHequation}{\newcommand{\theHequation}{S\arabic{equation}}}{\renewcommand{\theHequation}{S\arabic{equation}}}
\@ifundefined{theHtheorem}{\newcommand{\theHtheorem}{S\arabic{theorem}}}{\renewcommand{\theHtheorem}{S\arabic{theorem}}}
\@ifundefined{theHproposition}{\newcommand{\theHproposition}{S\arabic{proposition}}}{\renewcommand{\theHproposition}{S\arabic{proposition}}}
\@ifundefined{theHlemma}{\newcommand{\theHlemma}{S\arabic{lemma}}}{\renewcommand{\theHlemma}{S\arabic{lemma}}}
\@ifundefined{theHcorollary}{\newcommand{\theHcorollary}{S\arabic{corollary}}}{\renewcommand{\theHcorollary}{S\arabic{corollary}}}
\@ifundefined{theHdefinition}{\newcommand{\theHdefinition}{S\arabic{definition}}}{\renewcommand{\theHdefinition}{S\arabic{definition}}}
\@ifundefined{theHremark}{\newcommand{\theHremark}{S\arabic{remark}}}{\renewcommand{\theHremark}{S\arabic{remark}}}
\@ifundefined{theHfigure}{\newcommand{\theHfigure}{S\arabic{figure}}}{\renewcommand{\theHfigure}{S\arabic{figure}}}
\@ifundefined{theHtable}{\newcommand{\theHtable}{S\arabic{table}}}{\renewcommand{\theHtable}{S\arabic{table}}}
\makeatother

\section{Deferred derivations and structural results}
\label{sec:deferred-main-results}

This section collects derivations and auxiliary statements omitted from the main text to keep the rate-separation narrative compact.

\subsection{Flagged QFI and deterministic compression}

\begin{proposition}[QFI of an orthogonally flagged state]
\label{prop:flagged-qfi}
Let \(\Omega_\theta=\bigoplus_a q_a(\theta)\rho_{a,\theta}\), where all nonzero \(q_a\) are differentiable.  Then
\begin{equation}
  F_Q(\Omega_\theta)
  =\sum_a\frac{\dot q_a^2}{q_a}
  +\sum_a q_a F_Q(\rho_{a,\theta}).
  \label{eq:flagged-qfi}
\end{equation}
\end{proposition}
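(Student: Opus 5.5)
The plan is to reduce the flagged state to its blocks and then expand each block's score into a classical and a quantum part. Because $\Omega_\theta$ is block diagonal in the orthogonal flag basis, an SLD can be chosen block diagonal. Its QFI is then $\sum_a \Tr(\tau_a S_a^2)$, where $\tau_a=q_a\rho_{a,\theta}$ and $S_a$ is any subnormalized SLD of $\tau_a$ as in Eq.~\eqref{eq:subnormalized-sld}. Concretely, if each $S_a$ satisfies $\dot\tau_a=\tfrac12(S_a\tau_a+\tau_aS_a)$, then $S=\bigoplus_a S_a$ satisfies the SLD equation for $\Omega_\theta$. The QFI $\Tr(\Omega_\theta S^2)$ is independent of the choice of SLD off the support, so it equals the blockwise sum. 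Branches with $q_a=0$ contribute nothing, provided we adopt the usual convention that they stay outside the support.

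For each block with $q_a>0$, I would use the decomposition stated after Eq.~\eqref{eq:subnormalized-sld}: on the support of $\rho_{a,\theta}$,
\begin{equation*}
S_a=\frac{\dot q_a}{q_a}I+L_a,
\end{equation*}
where $L_a$ is the normalized-state SLD. This holds because $\dot\tau_a=\dot q_a\rho_a+q_a\dot\rho_a$ and $\tfrac12\{(\dot q_a/q_a)I,q_a\rho_a\}=\dot q_a\rho_a$. Substituting gives
\begin{equation*}
\Tr(\tau_aS_a^2)=q_a\Big[\frac{\dot q_a^2}{q_a^2}+\frac{2\dot q_a}{q_a}\Tr(\rho_aL_a)+\Tr(\rho_aL_a^2)\Big].
\end{equation*}
The cross term vanishes because $\Tr(\rho_aL_a)=\Tr\dot\rho_a=0$, since normalization is preserved. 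The last term is $q_aF_Q(\rho_{a,\theta})$, and summing over $a$ yields Eq.~\eqref{eq:flagged-qfi}.

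The main obstacle is the rank-deficient case, where the operator $S_a$ must agree with the decomposition only on the support of $\rho_{a,\theta}$. I would handle this by noting that every quantity used, namely $\Tr(\tau_aS_a^2)$, $\Tr(\rho_aL_a)$ and $\Tr(\rho_aL_a^2)$, depends on $S_a$ and $L_a$ only through their action on $\sqrt{\rho_a}$, i.e.\ through $S_a\sqrt{\rho_a}$ and $L_a\sqrt{\rho_a}$. The supportwise identity therefore suffices. This implicitly assumes that the support of $\rho_{a,\theta}$ is locally constant, as is standard for the SLD QFI formula; I would state that assumption explicitly.
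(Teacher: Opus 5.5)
Your proposal is correct and follows the paper's proof: choose a block-diagonal SLD with blocks $S_a=(\dot q_a/q_a)I+L_a$, expand $\Tr(\tau_aS_a^2)$, and cancel the cross term using $\Tr(\rho_aL_a)=\Tr\dot\rho_a=0$. You do not need to add the locally-constant-support assumption. Your own check shows that $(\dot q_a/q_a)I+L_a$ is an exact SLD of $\tau_a$ for any SLD $L_a$ of $\rho_a$. In finite dimensions such an $L_a$ always exists at interior parameter points, and $\Tr(\rho_aL_a^2)$ does not depend on which SLD is chosen.
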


\begin{proof}
The SLD of a block-diagonal state is block diagonal.  Its \(a\)-th block is
\(
S_a=(\dot q_a/q_a)I+L_a
\).
Because \(\Tr(\rho_aL_a)=\Tr\dot\rho_a=0\),
\begin{align}
  F_Q(\Omega_\theta)
  &=\sum_a q_a\Tr\left[\rho_a
  \left(\frac{\dot q_a}{q_a}I+L_a\right)^2\right] \notag\\
  &=\sum_a\frac{\dot q_a^2}{q_a}
  +\sum_aq_a\Tr(\rho_aL_a^2).
\end{align}
\end{proof}

For a stochastic compressor $P(m|a)$, define $\lambda_f=\prod_aP(f(a)|a)$ over deterministic maps $f:\mathcal A\to[M]$. Then
\begin{equation}
 \sum_f\lambda_f=\prod_a\sum_mP(m|a)=1,
 \qquad
 \sum_{f:f(a)=m}\lambda_f=P(m|a),
\end{equation}
so the stochastic output is the corresponding convex combination of deterministic outputs. Convexity of the SLD QFI proves that randomization cannot improve the optimum.

\subsection{Pythagorean centers and the variational problem}

For a fixed class \(C_m\), define the state-weighted squared cost
\begin{equation}
  \mathcal C_m(T)=\sum_{a\in C_m}
  \Tr\left[\tau_a(S_a-T)^2\right],
  \qquad T=T^\dagger.
  \label{eq:class-cost}
\end{equation}

\begin{proposition}[Pythagorean SLD-center identity]
\label{prop:sld-center}
Let \(T_m\) be any SLD score of
\(\tau_m=\sum_{a\in C_m}\tau_a\).  For every Hermitian \(T\),
\begin{equation}
  \mathcal C_m(T)
  =\mathcal C_m(T_m)
  +\Tr\!\left[\tau_m(T-T_m)^2\right].
  \label{eq:center-pythagorean}
\end{equation}
Consequently, \(T_m\) is a global minimizer of \(\mathcal C_m\), and all minimizers are exactly the Hermitian operators satisfying
\begin{equation}
  (T-T_m)\sqrt{\tau_m}=0.
  \label{eq:center-minimizers}
\end{equation}
If \(\tau_m\) is faithful, the minimizer is unique.
\end{proposition}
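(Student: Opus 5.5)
The plan is to expand the square in the class cost $\mathcal C_m(T)$ about the center $T_m$ and show that the cross term vanishes. This is the same mechanism as Eq.~\eqref{eq:cross-collapse} in the proof of Theorem~\ref{thm:exact-loss}. For Hermitian $T$, write $T=T_m+K$ with $K=T-T_m$ Hermitian. Then for each $a\in C_m$,
\begin{equation*}
  \Tr\!\left[\tau_a(S_a-T)^2\right]
  =\Tr\!\left[\tau_a(S_a-T_m)^2\right]
  +\Tr(\tau_aK^2)
  -\Tr\!\left[\tau_a\{S_a-T_m,K\}\right].
\end{equation*}
Summing over $a\in C_m$, the first terms give $\mathcal C_m(T_m)$ and the second terms give $\Tr(\tau_mK^2)$. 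It therefore remains to show that the summed cross term vanishes.

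By cyclicity of the trace and Eq.~\eqref{eq:subnormalized-sld} applied to each branch, $\sum_{a\in C_m}\Tr[\tau_a\{S_a,K\}]=2\sum_a\Tr(\dot\tau_aK)=2\Tr(\dot\tau_mK)$. Applied to the coarse block, the same identity gives $\Tr[\tau_m\{T_m,K\}]=2\Tr(\dot\tau_mK)$. The two contributions cancel, which yields Eq.~\eqref{eq:center-pythagorean}. This step only uses that $T_m$ is \emph{some} SLD of $\tau_m$. The argument therefore holds for rank-deficient $\tau_m$ without choosing a canonical SLD.

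The minimizer statements follow directly from Eq.~\eqref{eq:center-pythagorean}. The remainder is $\Tr(\tau_mK^2)=\|K\sqrt{\tau_m}\|_2^2\ge0$, so $T_m$ is a global minimizer. Moreover, $T$ is a minimizer exactly when $K\sqrt{\tau_m}=0$, which is Eq.~\eqref{eq:center-minimizers}. If $\tau_m$ is faithful, then $\sqrt{\tau_m}$ is invertible, so $K=0$ and the minimizer is unique. The only point needing care is the cross-term collapse, specifically that the derivative of the sum equals the sum of the derivatives. This holds because the classes are parameter independent.
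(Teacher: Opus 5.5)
Your proposal is correct and follows essentially the same route as the paper. The paper likewise writes $T=T_m+H$ and cancels the cross term by summing the branch SLD equations over $C_m$ against the coarse-block SLD equation. It then reads off the minimizers, and uniqueness in the faithful case, from $\Tr[\tau_m(T-T_m)^2]=\|(T-T_m)\sqrt{\tau_m}\|_2^2\geq0$.
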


\begin{proof}
Set \(H=T-T_m\).  Expanding \(S_a-T=(S_a-T_m)-H\) gives
\begin{align}
  \mathcal C_m(T)
  ={}&\mathcal C_m(T_m)+\Tr(\tau_mH^2)\notag\\
  &-\sum_{a\in C_m}\Tr\!\left\{\tau_a
  \left[(S_a-T_m)H+H(S_a-T_m)\right]\right\}.
  \label{eq:center-expand}
\end{align}
The cross term vanishes.  Indeed,
\begin{align}
  \sum_{a\in C_m}\Tr[\tau_a(S_aH+HS_a)]
  &=2\sum_{a\in C_m}\Tr(\dot\tau_aH)\notag\\
  &=2\Tr(\dot\tau_mH),
\end{align}
whereas the coarse SLD equation gives
\begin{equation}
  \Tr[\tau_m(T_mH+HT_m)]=2\Tr(\dot\tau_mH).
\end{equation}
This proves Eq.~\eqref{eq:center-pythagorean}.  Its final term is
\begin{equation}
  \Tr[\tau_mH^2]=\|H\sqrt{\tau_m}\|_2^2\geq0,
\end{equation}
with equality exactly when Eq.~\eqref{eq:center-minimizers} holds.  Faithfulness of \(\tau_m\) then implies \(H=0\).
\end{proof}

Combining Theorem~1 of the main text and Proposition~\ref{prop:sld-center} gives the variational formulation.

\begin{theorem}[Finite-syndrome variational principle]
\label{thm:kmeans}
Define
\begin{equation}
  F_M^\star(\theta)
  =\max_{f:\mathcal A\to[M]}F_Q(\Omega_\theta^{(f)}).
  \label{eq:FM-def}
\end{equation}
Then
\begin{equation}
  F_M^\star(\theta)
  =F_Q(\Omega_\theta^{\mathrm{fine}})-D_M(\theta),
  \label{eq:FM-loss}
\end{equation}
where
\begin{equation}
  D_M(\theta)=
  \min_{\substack{f:\mathcal A\to[M]\\
  T_1=T_1^\dagger,\ldots,T_M=T_M^\dagger}}
  \sum_a\Tr\left[
    \tau_a(S_a-T_{f(a)})^2
  \right].
  \label{eq:operator-kmeans}
\end{equation}
\end{theorem}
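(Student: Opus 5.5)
The plan is to combine Theorem~1 of the main text with Proposition~\ref{prop:sld-center}, working with one deterministic compressor $f:\mathcal A\to[M]$ at a time. For that fixed $f$, Theorem~1 gives
\[
F_Q(\Omega_\theta^{(f)})=F_Q(\Omega_\theta^{\mathrm{fine}})-\sum_a\Tr\!\left[\tau_a(S_a-T_{f(a)}^{\star})^2\right],
\]
where $T_m^{\star}$ is an SLD of the coarse block $\tau_m^{(f)}$. Maximizing over $f$ therefore amounts to minimizing this residual over $f$, with the centers tied to the actual coarse SLDs. The remaining task is to show that freeing the centers $T_1,\ldots,T_M$ to range over all Hermitian operators does not change the minimum.

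Next I decompose the residual into classes. The sum over $a$ splits as $\sum_m\mathcal C_m(T_m)$, with $\mathcal C_m$ as in Eq.~\eqref{eq:class-cost}. The free variables $T_m$ decouple across classes, so the inner minimization over $(T_1,\ldots,T_M)$ for fixed $f$ is just $\sum_m\min_{T}\mathcal C_m(T)$. By Proposition~\ref{prop:sld-center}, each minimum is attained at the coarse SLD and equals $\mathcal C_m(T_m^\star)$. The extra term $\Tr[\tau_m(T-T_m^\star)^2]$ is nonnegative, which gives a lower bound, and it vanishes at $T=T_m^\star$, which gives attainment. Hence, for each $f$,
\[
\min_{T_1,\ldots,T_M}\sum_a\Tr\!\left[\tau_a(S_a-T_{f(a)})^2\right]=F_Q(\Omega_\theta^{\mathrm{fine}})-F_Q(\Omega_\theta^{(f)}).
\]
Minimizing over the finitely many maps $f$ then yields $D_M(\theta)=F_Q(\Omega^{\mathrm{fine}}_\theta)-F_M^\star(\theta)$, which is Eq.~\eqref{eq:FM-loss}.

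A few bookkeeping points need care. Empty classes $C_m$ contribute zero for any $T_m$, so they are harmless, and maps using fewer than $M$ labels are included automatically. The maximum and minimum exist because $\mathcal A$ and $[M]$ are finite and each inner minimum is attained. Rank-deficient $\tau_m$ are also fine: Theorem~1 and Proposition~\ref{prop:sld-center} hold for any choice of coarse SLD, because the residual depends on $T_m$ only through $T_m\sqrt{\tau_m}$.

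The only subtle step is the class-wise decoupling, together with the check that Theorem~1 holds for an arbitrary choice of SLD when the coarse SLD is non-unique. Both follow from the support statement in Proposition~\ref{prop:sld-center}, so I expect no real obstacle.
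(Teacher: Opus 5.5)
Your proposal is correct and follows the paper's argument: the paper obtains this theorem by combining the exact loss identity of Theorem~1 with the Pythagorean center identity (Proposition~\ref{prop:sld-center}), so that for each fixed partition the minimum over free Hermitian centers equals the QFI loss, and then optimizing over the finitely many partitions. Your bookkeeping on empty classes and on non-unique coarse SLDs is consistent with this and fills in details the paper leaves implicit.
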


The quantity \(F_M^\star(\theta)\) is a local design-point optimum.  A physical compressor intended to work over an interval \(\Theta\) must use one parameter-independent partition.  For a prior density \(\pi(\theta)\), an operational Bayesian objective is
\begin{equation}
  D_{M,\pi}
  =\min_f\int_\Theta \pi(\theta)
  \left[F_Q(\Omega_\theta^{\mathrm{fine}})
  -F_Q(\Omega_\theta^{(f)})\right]d\theta,
  \label{eq:bayesian-compression}
\end{equation}
while a robust alternative is
\begin{equation}
  D_{M,\mathrm{wc}}
  =\min_f\sup_{\theta\in\Theta}
  \left[F_Q(\Omega_\theta^{\mathrm{fine}})
  -F_Q(\Omega_\theta^{(f)})\right].
  \label{eq:worst-case-compression}
\end{equation}
In both cases the same partition \(f\) is used for every \(\theta\).  A parameter-dependent partition requires a separately specified adaptive-estimation protocol.

\subsection{Multiparameter channel identity and conditioned rigidity}

\begin{theorem}[Stinespring Gram identity for the QFI matrix]
\label{thm:stinespring-qfim}
For a multiparameter model, let \(L_\mu\) and \(T_\mu\) be the input and output SLDs and define
\begin{equation}
  X_\mu=
  [(T_\mu\otimes I_E)V-VL_\mu]\sqrt{\rho_\theta}.
  \label{eq:channel-defect-vectors}
\end{equation}
Then
\begin{equation}
  J^{\mathrm{in}}_{\mu\nu}-J^{\mathrm{out}}_{\mu\nu}
  =\operatorname{Re}\Tr(X_\mu^\dagger X_\nu).
  \label{eq:stinespring-qfim-gram}
\end{equation}
Hence the QFI-matrix loss is positive semidefinite.  The full matrix is preserved if and only if \(X_\mu=0\) for every \(\mu\), equivalently
\begin{equation}
  [L_\mu-\Psi(T_\mu)]\sqrt{\rho_\theta}=0
  \quad\text{for every }\mu.
  \label{eq:general-channel-multiparameter-zero}
\end{equation}
\end{theorem}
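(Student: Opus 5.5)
We prove the Gram identity by expanding $\operatorname{Re}\Tr(X_\mu^\dagger X_\nu)$ term by term, using only the Stinespring form of the channel and the SLD equations. Write $\mathcal N(\cdot)=\Tr_E[V(\cdot)V^\dagger]$ and $\Psi(Y)=V^\dagger(Y\otimes I_E)V$, and use $V^\dagger V=I$. Expanding the product $X_\mu^\dagger X_\nu$ gives four terms:
\begin{align}
\Tr(X_\mu^\dagger X_\nu)
={}&\Tr[\rho\,\Psi(T_\mu T_\nu)]
-\Tr[\rho\,\Psi(T_\mu)L_\nu]\notag\\
&-\Tr[\rho\,L_\mu\Psi(T_\nu)]
+\Tr(\rho L_\mu L_\nu).
\end{align}
Here we used $\Tr[\sqrt\rho V^\dagger(T_\mu T_\nu\otimes I)V\sqrt\rho]=\Tr[\rho\Psi(T_\mu T_\nu)]$ together with cyclicity.

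The next step takes real parts, which symmetrizes each bilinear expression; this yields four contributions that we evaluate in turn.
\begin{itemize}
\item The last term gives $J^{\mathrm{in}}_{\mu\nu}=\tfrac12\Tr(\rho\{L_\mu,L_\nu\})$.
\item The first term gives $\tfrac12\Tr[\sigma\{T_\mu,T_\nu\}]=J^{\mathrm{out}}_{\mu\nu}$, because $\Tr[\rho\Psi(Y)]=\Tr[\sigma Y]$ with $\sigma=\mathcal N(\rho)$.
\item For the cross terms, $\operatorname{Re}\Tr[\rho\Psi(T_\mu)L_\nu]=\tfrac12\Tr[\Psi(T_\mu)\{L_\nu,\rho\}]=\Tr[\Psi(T_\mu)\partial_\nu\rho]$, since $\Psi(T_\mu)$ is Hermitian and $L_\nu$ is an SLD.
\item Moving $\Psi$ back across, this equals $\Tr[T_\mu\partial_\nu\sigma]=\tfrac12\Tr[T_\mu\{T_\nu,\sigma\}]=J^{\mathrm{out}}_{\mu\nu}$, using parameter independence of $\mathcal N$ so that $\partial_\nu\sigma=\mathcal N(\partial_\nu\rho)$. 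The other cross term gives the same value by symmetry.
\end{itemize}
Summing, $\operatorname{Re}\Tr(X_\mu^\dagger X_\nu)=J^{\mathrm{out}}-2J^{\mathrm{out}}+J^{\mathrm{in}}=J^{\mathrm{in}}_{\mu\nu}-J^{\mathrm{out}}_{\mu\nu}$, which is Eq.~\eqref{eq:stinespring-qfim-gram}.

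Positive semidefiniteness follows because the right-hand side is the real part of a Gram matrix. For real $\boldsymbol v$ we have $\boldsymbol v^{\mathsf T}(J^{\mathrm{in}}-J^{\mathrm{out}})\boldsymbol v=\|\sum_\mu v_\mu X_\mu\|_2^2\ge0$. A real PSD matrix vanishes iff its diagonal vanishes, so full preservation holds iff $\|X_\mu\|_2=0$ for all $\mu$. It remains to convert $X_\mu=0$ into Eq.~\eqref{eq:general-channel-multiparameter-zero}. For this, decompose $\|X_\mu\|_2^2$ as in the single-parameter Eq.~\eqref{eq:heisenberg-loss} of Theorem~\ref{thm:stinespring-pythagorean}, applied to direction $\mu$; it is the sum of $\Tr\{\rho[L_\mu-\Psi(T_\mu)]^2\}$ and a Kadison--Schwarz term $\Tr\{\rho[\Psi(T_\mu^2)-\Psi(T_\mu)^2]\}$, both nonnegative.

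The main obstacle is the converse of this last step: showing that $[L_\mu-\Psi(T_\mu)]\sqrt\rho=0$ alone forces the Schwarz term to vanish as well. We plan to argue as follows. If $L_\mu\sqrt\rho=\Psi(T_\mu)\sqrt\rho$, then $\Tr[\rho\Psi(T_\mu)L_\mu]=\Tr(\rho L_\mu^2)$. The cross-term computation above then shows $J^{\mathrm{out}}_{\mu\mu}=\Tr[\Psi(T_\mu)\partial_\mu\rho]=\Tr(\rho\Psi(T_\mu)L_\mu)=J^{\mathrm{in}}_{\mu\mu}$. Hence the total loss in direction $\mu$ is zero, and so $X_\mu=0$.
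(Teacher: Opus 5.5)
Your proof is correct. It differs from the paper's mainly in how the off-diagonal entries of the Gram identity are obtained.

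\textbf{The paper's route.} It applies the scalar identity of Theorem~\ref{thm:stinespring-pythagorean} to the one-parameter submodel along each real direction $\boldsymbol v$. That submodel has SLDs $\sum_\mu v_\mu L_\mu$ and $\sum_\mu v_\mu T_\mu$, so its defect is $\sum_\mu v_\mu X_\mu$. The Gram identity then follows by polarization, and the final equivalence is read off from the scalar zero-loss conditions of that theorem.

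\textbf{Your route.} You expand $\operatorname{Re}\Tr(X_\mu^\dagger X_\nu)$ bilinearly. You then evaluate the four terms using the SLD equations and the duality $\Tr[\rho_\theta\Psi(Y)]=\Tr[\sigma_\theta Y]$. In effect this reruns the proof of the scalar theorem in polarized form.

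\textbf{What each buys.} The paper's route is shorter once Theorem~\ref{thm:stinespring-pythagorean} is available. Yours is self-contained. It also shows that the Gram identity and its positive-semidefinite consequence need only the SLD equations and the Gram structure. Kadison--Schwarz enters only through the two-term Heisenberg split.

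\textbf{The converse.} Your converse step is the same argument the paper uses for the reverse implication in Theorem~\ref{thm:stinespring-pythagorean}.

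\textbf{A shortcut for the forward step.} Since $V^\dagger V=I$, you have $V^\dagger X_\mu=[\Psi(T_\mu)-L_\mu]\sqrt{\rho_\theta}$. So $X_\mu=0$ gives Eq.~\eqref{eq:general-channel-multiparameter-zero} immediately, without the two-term decomposition.
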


\begin{proof}
For every real vector \(\boldsymbol v\), apply Theorem~5 of the main text to the one-dimensional submodel generated by \(\partial_{\boldsymbol v}=\sum_\mu v_\mu\partial_\mu\).  Its Stinespring defect is \(X_{\boldsymbol v}=\sum_\mu v_\mu X_\mu\), so
\begin{equation}
  \boldsymbol v^{\mathsf T}(J^{\mathrm{in}}-J^{\mathrm{out}})\boldsymbol v
  =\left\|\sum_\mu v_\mu X_\mu\right\|_2^2.
\end{equation}
Polarization gives Eq.~\eqref{eq:stinespring-qfim-gram}.  Vanishing of the full Gram matrix is equivalent to vanishing of every diagonal norm \(\|X_\mu\|_2^2\).  The final equivalence follows from the scalar zero-loss conditions in Theorem~5 of the main text.
\end{proof}

The exact statement admits a quantitative extension only after the algebraic generation condition is made stable.  Let \(d=\dim\mathcal H_A\), let \(M_n\) denote the \(n\times n\) matrices, and define the conditional expectation onto the amplified scalar commutant by
\begin{equation}
  \mathcal P_n(K)
  =\left[(\operatorname{id}_n\otimes\Tr/d)(K)\right]\otimes I_d,
  \qquad K\in M_n\otimes M_d.
  \label{eq:scalar-conditional-expectation}
\end{equation}

\begin{definition}[Completely bounded score-conditioning constant]
\label{def:cb-score-conditioning}
For a Hermitian tuple \(\boldsymbol L=(L_1,\ldots,L_p)\) satisfying
\(C^*(I,L_1,\ldots,L_p)=M_d\), define
\begin{equation}
  \kappa_{\mathrm{sc}}^{\mathrm{cb}}(\boldsymbol L)
  =\sup_{n\geq1}\ 
   \sup_{K\notin M_n\otimes I_d}
  \frac{\|K-\mathcal P_n(K)\|_\infty}
  {\max_\mu\|[K,I_n\otimes L_\mu]\|_\infty}.
  \label{eq:supp-cb-score-conditioning}
\end{equation}
The denominator is nonzero outside \(M_n\otimes I_d\), and
\(\kappa_{\mathrm{sc}}^{\mathrm{cb}}(\boldsymbol L)<\infty\); a direct finite-dimensional proof is given in Sec.~\ref{sec:conditioned-rigidity}.
\end{definition}

This constant is invariant under a common rescaling in the combination appearing below: rescaling all scores by \(c\neq0\) divides \(\kappa_{\mathrm{sc}}^{\mathrm{cb}}\) by \(|c|\) and multiplies the coordinate QFI losses by \(c^2\).

\begin{theorem}[Conditioned approximate score rigidity]
\label{thm:conditioned-approximate-rigidity}
Let \(\rho\geq\lambda I_d\) with \(\lambda>0\), let
\(\mathcal N:M_d\to\mathcal L(\mathcal H_B)\) be a channel, and let the input SLD tuple \(\boldsymbol L=(L_1,\ldots,L_p)\) generate \(M_d\).  For the corresponding output SLDs \(T_\mu\), set
\begin{equation}
  \delta_\mu
  =J^{\mathrm{in}}_{\mu\mu}-J^{\mathrm{out}}_{\mu\mu},
  \qquad
  \delta=\max_\mu\delta_\mu.
  \label{eq:coordinate-qfi-loss-delta}
\end{equation}
Let \(\widehat{\mathcal N}\) be a complementary channel,
\(\omega_E=\widehat{\mathcal N}(I_d/d)\), and let
\(\mathcal S_{\omega_E}(X)=\Tr(X)\omega_E\) be the corresponding replacer channel.  Then
\begin{equation}
  \left\|\widehat{\mathcal N}-\mathcal S_{\omega_E}\right\|_\diamond
  \leq
  2\kappa_{\mathrm{sc}}^{\mathrm{cb}}(\boldsymbol L)
  \sqrt{\frac{\delta}{\lambda}}.
  \label{eq:environment-leakage-bound}
\end{equation}
Consequently, there exists a recovery channel \(\mathcal R\) such that
\begin{equation}
  \left\|\mathcal R\circ\mathcal N-
  \operatorname{id}_{M_d}\right\|_\diamond
  \leq
  \min\left\{2,
  2\sqrt{2\kappa_{\mathrm{sc}}^{\mathrm{cb}}(\boldsymbol L)}
  \left(\frac{\delta}{\lambda}\right)^{1/4}
  \right\}.
  \label{eq:conditioned-diamond-recovery}
\end{equation}
\end{theorem}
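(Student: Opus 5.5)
The plan is to pass to the Heisenberg picture of the complementary channel. There I will show that every environment observable, pulled back to the input, almost commutes with all input SLDs $L_\mu$. The score-conditioning constant then converts this into near-scalarity. Finally, the Kretschmann--Schlingemann--Werner information--disturbance continuity bound~\cite{kretschmann2008information} converts the resulting environmental leakage into a recovery error.

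\emph{Step 1 (operator-norm intertwining defect).} Fix a Stinespring isometry $V:\mathcal H_A\to\mathcal H_B\otimes\mathcal H_E$ and set $R_\mu=(T_\mu\otimes I_E)V-VL_\mu$. By Theorem~\ref{thm:stinespring-qfim}, $\|R_\mu\sqrt{\rho}\|_2^2=\delta_\mu\le\delta$. Since $\rho\ge\lambda I_d$, we have $\|R_\mu\|_\infty\le\|R_\mu\|_2\le\|R_\mu\sqrt\rho\|_2\,\|\rho^{-1/2}\|_\infty\le\sqrt{\delta/\lambda}$. The same bound holds for the amplification $I_n\otimes R_\mu$.

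\emph{Step 2 (approximate commutation).} For $K\in M_n\otimes\mathcal L(\mathcal H_E)$ define the amplified complementary adjoint
\begin{equation}
W(K)=(I_n\otimes V)^\dagger(I_n\otimes I_B\otimes K)(I_n\otimes V),
\end{equation}
with tensor factors reordered as needed. I will substitute $VL_\mu=(T_\mu\otimes I_E)V-R_\mu$ and $L_\mu V^\dagger=V^\dagger(T_\mu\otimes I_E)-R_\mu^\dagger$. The operators $I_B\otimes K$ and $T_\mu\otimes I_E$ act on different factors, so they commute and the $T_\mu$ terms cancel. What remains is
\begin{equation}
[W(K),I_n\otimes L_\mu]=R_\mu^\dagger(I\otimes K)V-V^\dagger(I\otimes K)R_\mu,
\end{equation}
with all operators amplified by $I_n$. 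Hence $\|[W(K),I_n\otimes L_\mu]\|_\infty\le 2\|K\|_\infty\sqrt{\delta/\lambda}$ for every $\mu$.

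\emph{Step 3 (conditioning and duality).} By Definition~\ref{def:cb-score-conditioning}, $\|W(K)-\mathcal P_n(W(K))\|_\infty\le 2\kappa_{\mathrm{sc}}^{\mathrm{cb}}\|K\|_\infty\sqrt{\delta/\lambda}$. The case $W(K)\in M_n\otimes I_d$ is trivial, because then the left side vanishes. Next I will check that $\mathcal P_n(W(K))=[(\mathrm{id}_n\otimes\Tr/d)W(K)]\otimes I_d$ is exactly the amplified Heisenberg adjoint of the replacer $\mathcal S_{\omega_E}$. For $n=1$ this reads $\Tr(W(K))/d=\Tr(K\omega_E)$ with $\omega_E=\widehat{\mathcal N}(I_d/d)$, and the amplified case follows by linearity. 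The previous bound therefore says $\|(\widehat{\mathcal N}^\dagger-\mathcal S_{\omega_E}^\dagger)\|_{\mathrm{cb},\infty\to\infty}\le2\kappa_{\mathrm{sc}}^{\mathrm{cb}}\sqrt{\delta/\lambda}$. The diamond norm of a difference of channels equals the cb norm of the difference of their adjoints. This duality yields Eq.~\eqref{eq:environment-leakage-bound}.

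\emph{Step 4 (recovery).} The continuity theorem of Kretschmann, Schlingemann and Werner gives $\inf_{\mathcal R}\|\mathcal R\circ\mathcal N-\mathrm{id}\|_\diamond\le 2\sqrt{\|\widehat{\mathcal N}-\mathcal S\|_\diamond}$ for any replacer channel $\mathcal S$. Inserting Step~3 gives $2\sqrt{2\kappa}\,(\delta/\lambda)^{1/4}$. The infimum is attained by compactness of the channel set. The cap at $2$ is trivial, since any two channels differ by at most $2$ in diamond norm.

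The main obstacle I expect is Step~3. One must verify that the amplification in the definition of $\kappa_{\mathrm{sc}}^{\mathrm{cb}}$ matches the amplification required for the diamond norm, including the ordering of the $M_n$, $B$ and $E$ factors. One must also confirm that $\mathcal P_n$ coincides with the replacer adjoint rather than with some other conditional expectation. Without the completely bounded version of the constant, Step~3 would control only the $1\to1$ norm, not the diamond norm.
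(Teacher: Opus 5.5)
Your proposal is correct and follows essentially the same route as the paper's proof. The shared steps are: the operator-norm bound $\sqrt{\delta/\lambda}$ on the unweighted Stinespring defect via $\rho\ge\lambda I_d$; the exact commutator identity for the amplified complementary adjoint; the identification of $\mathcal P_n$ with the adjoint of the replacer $\mathcal S_{\omega_E}$, followed by cb/diamond duality; and finally the Kretschmann--Schlingemann--Werner bound. Your explicit treatment of the trivial case $W(K)\in M_n\otimes I_d$, and your compactness argument that the infimum over recovery channels is attained, are small details the paper leaves implicit.
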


Theorem~\ref{thm:conditioned-approximate-rigidity} is not a general sufficiency theorem for the SLD metric.  Its additional input is a finite tuple of SLDs that generates the full algebra with quantitative amplified conditioning.  Prior approximate-algebra criteria require control of the complementary channel against the full amplified commutant~\cite{beny2009conditions}; Eqs.~\eqref{eq:coordinate-qfi-loss-delta}--\eqref{eq:environment-leakage-bound} derive that control from finitely many local metrological losses.  The final passage from approximate environmental constancy to recovery is the channel information--disturbance theorem~\cite{kretschmann2008information}.  The proof is given in Sec.~\ref{sec:conditioned-rigidity}.

\subsection{Noncommuting recovery on a sufficient subsystem}

\begin{proposition}[Noncommuting full-model recovery on a sufficient subsystem]
\label{prop:noncommuting-sufficient-subsystem}
There exists a faithful two-label classical--quantum family whose normalized conditional states are noncommuting, but for which forgetting the label is exactly reversible with no retained classical record.  The parameterized coarse family occupies a proper sufficient subsystem and therefore does not satisfy Eq.~(30) of the main text.
\end{proposition}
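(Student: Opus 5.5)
The plan is to build the example as a product of a parameter-carrying system and a parameter-independent, noncommuting ``label-carrying'' subsystem. Take \(\mathcal H=\mathcal H_S\otimes\mathcal H_K\) with two qubits. Fix a faithful signal family \(\rho_\theta=(I+\theta Z)/2\), \(|\theta|<1\), on \(S\). On \(K\), fix two faithful, parameter-independent, noncommuting qubit states, for instance \(\sigma_1=(I+\tfrac12 Z)/2\) and \(\sigma_2=(I+\tfrac12 X)/2\). Define
\begin{equation*}
  \tau_{j,\theta}=\tfrac12\,\rho_\theta\otimes\sigma_j,\qquad j=1,2 .
\end{equation*}
Then \(\Omega_\theta=\tau_{1,\theta}\oplus\tau_{2,\theta}\) is faithful. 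The normalized conditional states \(\rho_\theta\otimes\sigma_j\) do not commute, since \([\rho_\theta\otimes\sigma_1,\rho_\theta\otimes\sigma_2]=\rho_\theta^2\otimes[\sigma_1,\sigma_2]\neq0\). Forgetting the label produces
\begin{equation*}
  X_\theta=\rho_\theta\otimes\omega,\qquad \omega=\tfrac12(\sigma_1+\sigma_2)>0 .
\end{equation*}

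The key step is to exhibit an explicit parameter-independent recovery of the form in Eqs.~(28)--(29) of the main text. I will use the Petz-type factors
\begin{equation*}
  A_j=I_S\otimes B_j,\qquad B_j=\bigl(\tfrac12\sigma_j\bigr)^{1/2}\omega^{-1/2} .
\end{equation*}
Completeness follows directly:
\begin{equation*}
  \sum_jB_j^\dagger B_j=\omega^{-1/2}\Bigl(\sum_j\tfrac12\sigma_j\Bigr)\omega^{-1/2}=I_K,
\end{equation*}
hence \(\sum_jA_j^\dagger A_j=I\). Branch factorization holds because \(B_j\omega B_j^\dagger=\tfrac12\sigma_j\), so \(A_jX_\theta A_j^\dagger=\tau_{j,\theta}\).

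The recovery channel is
\begin{equation*}
  \mathcal R(Y)=\bigoplus_j A_jYA_j^\dagger .
\end{equation*}
It is completely positive and trace preserving by completeness, and it satisfies \(\mathcal R(X_\theta)=\Omega_\theta\) for every \(\theta\). So the label is forgotten with a one-symbol record, \(M=1\), and the entire flagged family is reconstructed exactly.

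It remains to check that this does not contradict the complete-model obstruction. Every \(X_\theta\) lies in \(\mathcal L(\mathcal H_S)\otimes\omega\), so
\begin{equation*}
  \operatorname{span}_{\mathbb C}\{X_\theta\}\subseteq \mathcal L(\mathcal H_S)\otimes\mathbb C\,\omega\subsetneq\mathcal L(\mathcal H).
\end{equation*}
Hence Eq.~(30) fails. The parameter lives only on the proper subsystem \(S\), which is sufficient because \(K\) carries a fixed state. Consistently, \(\tau_{j,\theta}\) is not a scalar multiple of \(X_\theta\), as the obstruction would otherwise force.

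I expect no real obstacle, since each step is a one-line verification. The only point needing care is the choice of the \(B_j\): they must make the channel CPTP rather than merely map \(\omega\) correctly. The Petz form settles this, using the faithfulness of \(\omega\), which holds because each \(\sigma_j\) is faithful.
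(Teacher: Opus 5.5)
Your proof is correct and uses the same architecture as the paper. You take a faithful signal family on $S$ tensored with a parameter-independent ancilla, recover with $\mathcal R(Y)=\bigoplus_j(I_S\otimes A_j)Y(I_S\otimes A_j^\dagger)$, and show that Eq.~(30) fails because the coarse span lies in $\mathcal L(\mathcal H_S)\otimes\mathbb C\,\omega$. The only difference is how the ancilla factors are produced: the paper fixes an explicit channel with Kraus operators $A_\pm$ preserving $\sigma_B=\mathrm{diag}(2/3,1/3)$ and checks completeness, invertibility, and noncommutativity of $B_\pm$ by direct matrix calculation, whereas you start from chosen noncommuting branches and take the Petz factors $(\tfrac12\sigma_j)^{1/2}\omega^{-1/2}$, which makes these properties immediate and coincides with the Petz recovery map in the obstruction proof.
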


A constructive qubit--ancilla realization is given in Sec.~\ref{sec:noncommuting-subsystem}.  Proposition~\ref{prop:noncommuting-sufficient-subsystem} shows that the informational-completeness hypothesis of the complete-model obstruction in Sec.~IV of the main text is essential.  It does not, however, produce the same recovery-rate separation as Theorem~3 of the main text, because its branch operators are not individually correctable on the full input space.  Whether faithful noncommuting full-model recovery and exponential universal-recovery separation can coexist in one individually correctable instrument remains open.

\subsection{Proofs of main-text theorems}

\begin{proof}[Proof of the orthogonal-pointer realization]
Equation~(4) of the main text gives
$V^\dagger V=\sum_aE_a^\dagger E_a=I$, so the map $V$ defined in Eq.~(5) of the main text is an isometry.  Expanding $V\rho_\theta V^\dagger$ and dephasing, or equivalently measuring, the ancilla in the pointer basis removes the off-diagonal terms $|a\rangle\langle b|$ and yields Eq.~(6) of the main text.  Independent repetitions use the tensor-product isometry $V^{\otimes n}$ and fresh pointer registers, which gives Eq.~(7) of the main text.  A subsequent classical channel on the pointer basis acts trivially on the tensor-product quantum output and hence realizes exactly the allowed record compression.
\end{proof}

\begin{proof}[Proof of stable type compression]
For a real direction \(\boldsymbol v\), write
\(L_{\boldsymbol v}=\sum_\mu v_\mu L_\mu\),
\(s_{a,\boldsymbol v}=\sum_\mu v_\mu s_{a,\mu}\), and
\(\Delta_{a,\boldsymbol v}=\sum_\mu v_\mu\Delta_{a,\mu}\).
For a trajectory \(\boldsymbol a\) of type
\(\boldsymbol k=(k_1,\ldots,k_r)\), its directional product score is
\begin{equation}
  S_{\boldsymbol a,\boldsymbol v}^{(n)}
  =\sum_{i=1}^n L_{\boldsymbol v}^{(i)}
  +\left(\sum_a k_as_{a,\boldsymbol v}\right)I
  +\sum_{i=1}^n\Delta_{a_i,\boldsymbol v}^{(i)}.
  \label{eq:approximate-product-score}
\end{equation}
The first two terms depend only on the type and hence form an admissible trial center for that type class.  The Pythagorean center identity therefore bounds the directional loss by
\begin{equation}
  \sum_{\boldsymbol a\in\mathcal A^n}
  \Tr\!\left[
  \tau_{\boldsymbol a}^{(n)}
  \left(\sum_{i=1}^n
  \Delta_{a_i,\boldsymbol v}^{(i)}\right)^2
  \right].
  \label{eq:trial-type-loss}
\end{equation}
All cross terms with \(i\neq j\) vanish after summing over trajectories, because the tensor-product trace factorizes and
\(\Tr(\tau_a\Delta_{a,\boldsymbol v})=0\) for every \(a\).  The diagonal terms give
\begin{equation}
  n\sum_a\Tr(\tau_a\Delta_{a,\boldsymbol v}^2)
  =n\,\boldsymbol v^{\mathsf T}\Gamma\boldsymbol v.
\end{equation}
The exact QFI loss matrix is positive semidefinite, and the preceding inequality holds for every real \(\boldsymbol v\); this proves the matrix inequality in Eq.~(25) of the main text.
\end{proof}

\begin{proof}[Proof of the noncommuting example]
Choose a faithful qubit seed
\begin{align}
  \rho_0&=\frac12(I+xX+z_0Z),\\
  x&\neq0,
  \qquad x^2+z_0^2<1.
\end{align}
and define the one-parameter SLD exponential family
\begin{equation}
  \rho_\theta
  =\frac{e^{\theta Z/2}\rho_0e^{\theta Z/2}}
  {\Tr(e^{\theta Z/2}\rho_0e^{\theta Z/2})}.
  \label{eq:noncommuting-input-family}
\end{equation}
Writing \(\zeta(\theta)=\Tr(Z\rho_\theta)\), direct differentiation gives
\begin{equation}
  \partial_\theta\rho_\theta
  =\frac12\{Z-\zeta(\theta)I,\rho_\theta\},
  \label{eq:noncommuting-input-sld}
\end{equation}
so its SLD is \(L_\theta=Z-\zeta(\theta)I\).

Let \(\boldsymbol q\) be an unknown interior probability vector and choose phases \(\phi_a\) inside an interval of length strictly smaller than \(\pi\), so that they are pairwise distinct and no two differ by \(\pi\).  Define
\begin{align}
  W_a&=e^{-i\phi_aZ/2},
  &E_a&=\sqrt{q_a}W_a,\notag\\
  \tau_{a,\theta,\boldsymbol q}
  &=q_aW_a\rho_\theta W_a^\dagger.
  \label{eq:noncommuting-errors}
\end{align}
Since every \(W_a\) commutes with \(Z\), the signal score is branch independent,
\begin{equation}
  S_{a,\theta}=Z-\zeta(\theta)I,
\end{equation}
For the independent coordinates \(q_1,\ldots,q_{r-1}\), with \(q_r=1-\sum_{\mu=1}^{r-1}q_\mu\), the remaining scores are scalar:
\begin{equation}
  S_{a,q_\mu}=
  \begin{cases}
    q_\mu^{-1}I, & a=\mu,\\
    -q_r^{-1}I, & a=r,\\
    0, & \text{otherwise}.
  \end{cases}
  \label{eq:noncommuting-probability-sld}
\end{equation}
Hence the common-score form holds.  If a trajectory has type \(\boldsymbol k=(k_1,\ldots,k_r)\), its accumulated score for \(q_\mu\) is
\begin{equation}
  u_\mu(\boldsymbol k)=\frac{k_\mu}{q_\mu}-\frac{k_r}{q_r}.
  \label{eq:noncommuting-type-score}
\end{equation}
If two types \(\boldsymbol k,\boldsymbol\ell\) give the same score vector and \(\Delta_a=k_a-\ell_a\), then
\(\Delta_\mu/q_\mu=\Delta_r/q_r=c\) for every \(\mu<r\).  Since \(\sum_a\Delta_a=0\) and \(\sum_aq_a=1\), one has \(c=0\) and therefore \(\boldsymbol k=\boldsymbol\ell\).  The accumulated-score statement in Theorem~2 of the main text gives the first equality in Eq.~(27) of the main text.

The conditional Bloch vectors are obtained from that of \(\rho_\theta\) by distinct rotations about the \(Z\) axis.  Their transverse component is nonzero because \(x\neq0\); the chosen phase differences therefore make the normalized states \(W_a\rho_\theta W_a^\dagger\) pairwise noncommuting.  For arbitrary-state recovery,
\begin{equation}
  E_a^\dagger E_b=\sqrt{q_aq_b}
  e^{-i(\phi_b-\phi_a)Z/2}
\end{equation}
is proportional to the identity only when \(a=b\).  For two \(n\)-use trajectories, their cross operator is the tensor product of these one-use cross operators.  If that tensor product were scalar, comparing computational-basis eigenvalues that differ in only one tensor factor would force the corresponding phase difference to vanish modulo \(2\pi\).  Hence the trajectories must agree at every position.  The \(n\)-use incompatibility graph is therefore complete, so exact deferred recovery requires all \(r^n\) trajectories.
\end{proof}

\begin{proof}[Proof of the informationally complete obstruction]
Let \(\mathcal C\) be the forgetful channel.  Exact recovery implies equality in the relative-entropy data-processing inequality between every family member and the faithful reference state.  The Petz recovery map for \((\mathcal C,\Omega_{\boldsymbol\vartheta_0})\) therefore also recovers the family~\cite{petz1988sufficiency}.  Since \(\mathcal C^\dagger(Y)=\bigoplus_jY\), its \(j\)th output block is
\begin{equation}
  \tau_{j,\boldsymbol\vartheta_0}^{1/2}
  X_0^{-1/2}X_{\boldsymbol\vartheta}X_0^{-1/2}
  \tau_{j,\boldsymbol\vartheta_0}^{1/2},
\end{equation}
which proves Eq.~(28) of the main text.  Equation~(29) of the main text follows by summing the reference blocks.  Conversely, if Eq.~(28) of the main text holds, then
\begin{equation}
  \mathcal R(Y)=\bigoplus_j A_jYA_j^\dagger
\end{equation}
is a channel by Eq.~(29) of the main text and recovers every \(\Omega_{\boldsymbol\vartheta}\).

Define the channel
\(\mathcal E(Y)=\sum_jA_jYA_j^\dagger\).
Equation~(28) of the main text gives
\(\mathcal E(X_{\boldsymbol\vartheta})=X_{\boldsymbol\vartheta}\) for every parameter value.  Under Eq.~(30) of the main text, linearity forces \(\mathcal E\) to be the identity channel on \(\mathcal L(\mathcal H)\).  The identity channel has Choi rank one, so every Kraus operator in any Kraus representation is proportional to the identity: \(A_j=c_jI\).  Equation~(28) of the main text then gives Eq.~(31) of the main text with \(p_j=|c_j|^2\).
\end{proof}

\begin{proof}[Proof of the exact rate separation]
Take
\begin{equation}
  \rho_z=\frac12(I+zZ),
  \qquad |z|<1,
  \label{eq:rate-separation-input}
\end{equation}
and let \(\boldsymbol q=(q_1,\ldots,q_r)\) lie in the interior of the probability simplex.  Use independent coordinates \((q_1,\ldots,q_{r-1})\), with
\(q_r=1-\sum_{\mu=1}^{r-1}q_\mu\).  Choose phases \(\phi_a\) that are pairwise distinct modulo \(2\pi\), and define
\begin{equation}
  W_a=e^{-i\phi_aZ/2},
  \qquad
  U_a=XW_a,
  \qquad
  E_a=\sqrt{q_a}\,U_a.
  \label{eq:rate-separation-errors}
\end{equation}
The fine blocks are
\begin{equation}
  \tau_{a,z,\boldsymbol q}
  =E_a\rho_zE_a^\dagger
  =q_a\sigma_z,
  \qquad
  \sigma_z=\frac12(I-zZ),
  \label{eq:rate-separation-branches}
\end{equation}
because every \(W_a\) commutes with \(\rho_z\).  The one-use SLDs are
\begin{align}
  S_{a,z}&=L_z=-\frac{zI+Z}{1-z^2},
  \label{eq:rate-signal-sld}\\
  S_{a,q_\mu}
  &=
  \begin{cases}
    q_\mu^{-1}I, & a=\mu,\\
    -q_r^{-1}I, & a=r,\\
    0, & \text{otherwise},
  \end{cases}
  \qquad \mu=1,\ldots,r-1.
  \label{eq:rate-probability-sld}
\end{align}
Thus the accumulated-score statement in Theorem~2 of the main text applies.  For a trajectory of type
\(\boldsymbol k=(k_1,\ldots,k_r)\), its accumulated scalar score for parameter \(q_\mu\) is
\begin{equation}
  u_\mu(\boldsymbol k)
  =\frac{k_\mu}{q_\mu}-\frac{k_r}{q_r}.
  \label{eq:type-score}
\end{equation}
Suppose two types \(\boldsymbol k\) and \(\boldsymbol\ell\) give the same score vector, and set \(\Delta_a=k_a-\ell_a\).  Equation~\eqref{eq:type-score} gives
\begin{equation}
  \frac{\Delta_\mu}{q_\mu}
  =\frac{\Delta_r}{q_r}
  \equiv c,
  \qquad \mu=1,\ldots,r-1.
  \label{eq:type-injectivity-step}
\end{equation}
Since both types sum to \(n\),
\(0=\sum_a\Delta_a=c\sum_aq_a=c\), and hence every \(\Delta_a=0\).  The score map is therefore injective on the type vectors.  There are
\(\binom{n+r-1}{r-1}\) types, so the accumulated-score statement in Theorem~2 of the main text proves the first equality for \(M_{\mathrm{QFI}}^{(n)}\).

It remains to show that the type also recovers the complete fine model.  Let
\begin{equation}
  N_{\boldsymbol k}=\frac{n!}{\prod_{a=1}^r k_a!}
  \label{eq:type-class-size}
\end{equation}
be the number of trajectories of type \(\boldsymbol k\).  Their coarse type block is
\begin{equation}
  \tau_{\boldsymbol k,z,\boldsymbol q}^{(n)}
  =N_{\boldsymbol k}
  \left(\prod_{a=1}^r q_a^{k_a}\right)
  \sigma_z^{\otimes n}.
  \label{eq:type-coarse-block}
\end{equation}
A parameter-independent recovery map replaces the type label by a uniformly random ordered trajectory of that type and leaves the quantum system unchanged:
\begin{equation}
  |\boldsymbol k\rangle\!\langle\boldsymbol k|\otimes A
  \longmapsto
  \frac{1}{N_{\boldsymbol k}}
  \sum_{\boldsymbol a:\,\operatorname{type}(\boldsymbol a)=\boldsymbol k}
  |\boldsymbol a\rangle\!\langle\boldsymbol a|\otimes A.
  \label{eq:type-recovery-map}
\end{equation}
Applied to Eq.~\eqref{eq:type-coarse-block}, it exactly reconstructs every fine block for all \((z,\boldsymbol q)\).  Hence
\(M_{\mathrm{model}}^{(n)}\leq\binom{n+r-1}{r-1}\).

This upper bound remains minimal even if the terminal classical compressor is allowed to be stochastic.  Let \(\boldsymbol p(\boldsymbol q)\in\mathbb R^{r^n}\) be the ordered-trajectory probability vector, with
\(p_{\boldsymbol a}(\boldsymbol q)=\prod_i q_{a_i}\).  If a classical channel \(C\) with \(M\) outputs followed by a recovery channel \(R\) reconstructs the fine flagged family, then tracing out the unchanged quantum output gives
\begin{equation}
  \boldsymbol p(\boldsymbol q)=RC\boldsymbol p(\boldsymbol q)
  \qquad\text{for every interior }\boldsymbol q.
  \label{eq:classical-model-recovery}
\end{equation}
Equation~\eqref{eq:classical-model-recovery} makes \(C\) injective on the span of the fine-model probability vectors.  Since the range of \(C\) is contained in \(\mathbb R^M\),
\begin{equation}
  M\geq\operatorname{rank}C
  \geq
  \dim\operatorname{span}
  \{\boldsymbol p(\boldsymbol q):\boldsymbol q\}.
  \label{eq:model-rank-lower-bound}
\end{equation}  This span has dimension
\(\binom{n+r-1}{r-1}\).  Indeed,
\begin{equation}
  \boldsymbol p(\boldsymbol q)
  =\sum_{\boldsymbol k}
  \boldsymbol 1_{\boldsymbol k}
  \prod_{a=1}^r q_a^{k_a},
  \label{eq:type-vector-expansion}
\end{equation}
where the vectors \(\boldsymbol 1_{\boldsymbol k}\), supported on distinct type classes, are linearly independent.  The homogeneous monomials
\(\prod_aq_a^{k_a}\) are also linearly independent on the probability-simplex interior.  Indeed, a homogeneous polynomial of degree \(n\) that vanished there would, by positive rescaling, vanish throughout the open positive orthant and hence be the zero polynomial.  Thus the coefficient vectors in Eq.~\eqref{eq:type-vector-expansion} span the full type subspace.  Therefore
\(M_{\mathrm{model}}^{(n)}\geq\binom{n+r-1}{r-1}\), proving the second equality in Eq.~(32) of the main text for arbitrary classical terminal post-processing.

For arbitrary-state recovery, the \(n\)-use fine errors are
\begin{equation}
  E_{\boldsymbol a}^{(n)}
  =E_{a_1}\otimes\cdots\otimes E_{a_n}.
  \label{eq:n-copy-errors}
\end{equation}
Every error is individually correctable on the full \(n\)-qubit code.  For two trajectories,
\begin{equation}
  \left(E_{\boldsymbol a}^{(n)}\right)^\dagger
  E_{\boldsymbol b}^{(n)}
  =\sqrt{q_{\boldsymbol a}q_{\boldsymbol b}}
  \bigotimes_{i=1}^n
  e^{-i(\phi_{b_i}-\phi_{a_i})Z/2},
  \label{eq:n-copy-cross-error}
\end{equation}
where \(q_{\boldsymbol a}=\prod_iq_{a_i}\).  If this operator were proportional to the identity, comparing two computational-basis eigenvalues that differ only in the \(i\)th bit would give
\(e^{-i(\phi_{b_i}-\phi_{a_i})}=1\).  Pairwise distinct phases then imply \(a_i=b_i\) for every \(i\).  Thus every two distinct trajectories are adjacent in the Knill--Laflamme incompatibility graph, which is \(K_{r^n}\).  The conclusion also survives arbitrary stochastic post-processing of the classical record.  If \(C(m|\boldsymbol a)>0\), then conditioned on output \(m\) the corresponding error channel has Kraus operator
\(\sqrt{C(m|\boldsymbol a)}E_{\boldsymbol a}^{(n)}\).  The Knill--Laflamme condition therefore forbids one output value from having positive support on two distinct trajectories.  Since every trajectory must occur in the support of at least one output, any recoverable terminal alphabet has at least \(r^n\) values.  Keeping the full trajectory attains this bound, and hence
\(M_{\mathrm{state}}^{(n)}=r^n\).

Finally, for fixed \(r\), the standard expansion of the binomial coefficient gives Eq.~(33) of the main text, while Eq.~(34) of the main text is immediate.
\end{proof}

\begin{proof}[Proof of the growing-alphabet threshold]
Set \(k_n=r_n-1\).  If \(k_n=o(n)\), the standard bound
\begin{equation}
  \binom{n+k_n}{k_n}
  \leq\left[\frac{e(n+k_n)}{k_n}\right]^{k_n}
\end{equation}
gives
\begin{equation}
  \frac1n\log T_n
  \leq\frac{k_n}{n}
  \left[1+\log\left(1+\frac{n}{k_n}\right)\right]\longrightarrow0.
\end{equation}
Conversely, if \(r_n\neq o(n)\), there are \(\varepsilon>0\) and a subsequence with \(k_n\geq\varepsilon n\).  With \(j_n=\lfloor\varepsilon n\rfloor\), monotonicity and the product formula for binomial coefficients give
\begin{equation}
  T_n\geq\binom{n+j_n}{j_n}
  \geq\left(1+\frac{n}{j_n}\right)^{j_n}.
\end{equation}
The logarithm divided by \(n\) is then bounded away from zero along that subsequence, proving Eq.~(35) of the main text.
\end{proof}

\begin{proof}[Proof of the Stinespring identity]
The input and output SLD equations give the cross identity
\begin{align}
  \Tr\{\rho_\theta[L\Psi(T)+\Psi(T)L]\}
  &=2\Tr[\dot\rho_\theta\Psi(T)]\notag\\
  &=2\Tr[\dot\sigma_\theta T]
  =2\Tr(\sigma_\theta T^2).
  \label{eq:channel-cross-identity}
\end{align}
Expanding the right-hand side of Eq.~(38) of the main text, using
\(\Tr[\rho_\theta\Psi(T^2)]=\Tr(\sigma_\theta T^2)\), and then applying Eq.~\eqref{eq:channel-cross-identity} gives
\(F_Q(\rho_\theta)-F_Q(\sigma_\theta)\).
The first term in Eq.~(38) of the main text is a squared Hilbert--Schmidt norm.  The second is nonnegative because the Kadison--Schwarz inequality for the unital completely positive map \(\Psi\) gives
\begin{equation}
  \Psi(T^2)-\Psi(T)^2\succeq0.
\end{equation}

For the Stinespring form, note that
\begin{equation}
  \Psi(Y)=V^\dagger(Y\otimes I_E)V.
\end{equation}
Direct expansion of the norm in Eq.~(37) of the main text, followed by Eq.~\eqref{eq:channel-cross-identity}, yields the same QFI difference.  Equivalently, with \(P_V=VV^\dagger\),
\begin{align}
  &[(T\otimes I_E)V-VL]\sqrt{\rho_\theta}\notag\\
  &=(I-P_V)(T\otimes I_E)V\sqrt{\rho_\theta}
  +V[\Psi(T)-L]\sqrt{\rho_\theta},
  \label{eq:stinespring-orthogonal-split}
\end{align}
whose two terms have orthogonal ranges.  Their squared norms are exactly the two terms in Eq.~(38) of the main text.

Equations~(39) and~(41) of the main text are equivalent by Eq.~(37) of the main text.  Equation~(39) of the main text implies Eq.~(40) of the main text because both terms in Eq.~(38) of the main text are nonnegative.  Conversely, Eq.~(40) of the main text makes the first term vanish and gives
\(F_Q(\rho_\theta)=\Tr[\rho_\theta\Psi(T)^2]\).  Substitution into Eq.~\eqref{eq:channel-cross-identity} yields
\(F_Q(\rho_\theta)=F_Q(\sigma_\theta)\), proving the reverse implication.  If \(\rho_\theta>0\), zero expectation of either positive operator in Eq.~(38) of the main text implies that operator is zero.  Thus Eq.~(42) of the main text follows.  For a Hermitian element, equality in Kadison--Schwarz is precisely the multiplicative-domain condition \cite{choi1974schwarz,choi2009multiplicative}.
\end{proof}

\begin{proof}[Proof of generating-score rigidity]
Equation~(43) of the main text makes every listed scalar QFI loss vanish.  Since \(\rho_{\boldsymbol\theta}\) is faithful, Theorem~5 of the main text implies, for every \(\mu\),
\begin{equation}
  L_\mu=\Psi(T_\mu),
  \qquad
  \Psi(T_\mu^2)=\Psi(T_\mu)^2.
  \label{eq:rigidity-generator-equalities}
\end{equation}
Thus every \(T_\mu\) lies in the multiplicative domain \(\operatorname{MD}(\Psi)\).  The multiplicative domain is a unital \(C^*\)-subalgebra, and \(\Psi\) acts as a \(*\)-homomorphism on it \cite{choi1974schwarz,choi2009multiplicative}.  Hence \(\mathcal A_T\subseteq\operatorname{MD}(\Psi)\), and the restriction in Eq.~(44) of the main text is a unital \(*\)-homomorphism.  Its image contains every \(L_\mu\) and therefore equals \(\mathcal A_L\).

Now assume \(\mathcal A_L=\mathcal L(\mathcal H_A)\), and set
\(\pi=\Psi|_{\mathcal A_T}\).  The kernel of a \(*\)-homomorphism between finite-dimensional \(C^*\)-algebras is a two-sided ideal generated by a central projection.  Consequently, there is a central projection \(e\in\mathcal A_T\) such that
\begin{equation}
  \mathcal A_T=\ker\pi\oplus e\mathcal A_T,
\end{equation}
with \(\pi|_{e\mathcal A_T}\) a \(*\)-isomorphism onto
\(\mathcal L(\mathcal H_A)\).  Let
\begin{equation}
  \jmath=(\pi|_{e\mathcal A_T})^{-1}
\end{equation}
be its completely positive inverse, so \(\jmath(I_A)=e\).  Choose any state \(\varphi\) on \(\mathcal L(\mathcal H_A)\) and define
\begin{equation}
  \mathcal R^\dagger(X)
  =\jmath(X)+\varphi(X)(I_B-e).
  \label{eq:recovery-adjoint-construction}
\end{equation}
This map is completely positive and unital.  Moreover, \(\Psi(e)=I_A\), whence \(\Psi(I_B-e)=0\), and therefore
\begin{equation}
  \Psi\circ\mathcal R^\dagger=\operatorname{id}_{\mathcal L(\mathcal H_A)}.
\end{equation}
Taking adjoints gives Eq.~(45) of the main text.
\end{proof}

\section{Qubit random-unitary syndrome compression}
%=========================================================================%

Let
\begin{equation}
  \rho_\theta=\frac{1}{2}\left(I+\boldsymbol r_\theta\cdot\boldsymbol\sigma\right)
\end{equation}
be a qubit state, and let a parameter-independent random-unitary channel have probabilities \(q_a\) and unitaries \(U_a\).  Conjugation by \(U_a\) induces a rotation \(R_a\in\mathrm{SO}(3)\) of the Bloch vector.  For a class \(C_m\), define
\begin{equation}
  p_m=\sum_{a\in C_m}q_a,
  \qquad
  A_m=\frac{1}{p_m}\sum_{a\in C_m}q_aR_a.
  \label{eq:Am-def}
\end{equation}
The normalized conditional state has Bloch vector
\begin{equation}
  \boldsymbol s_m=A_m\boldsymbol r_\theta,
  \qquad
  \dot{\boldsymbol s}_m=A_m\dot{\boldsymbol r}_\theta.
\end{equation}

\begin{proposition}[Finite partition formula for qubit random-unitary noise]
\label{prop:qubit-partition}
For parameter-independent \(q_a\), the QFI retained by a partition \(f\) is
\begin{equation}
\begin{split}
  F_Q(\Omega_\theta^{(f)})
  =\sum_{m:p_m>0}p_m\Bigg[
  &\left|A_m\dot{\boldsymbol r}_\theta\right|^2\\
  &+\frac{\left(
  A_m\boldsymbol r_\theta\cdot A_m\dot{\boldsymbol r}_\theta
  \right)^2}
  {1-\left|A_m\boldsymbol r_\theta\right|^2}
  \Bigg].
  \label{eq:qubit-partition}
\end{split}
\end{equation}
Equation~\eqref{eq:qubit-partition} is written for full-rank conditional states.  If a conditional state is pure at the working point, positivity and two-sided differentiability imply
\(A_m\boldsymbol r_\theta\cdot A_m\dot{\boldsymbol r}_\theta=0\), and its pointwise SLD QFI is
\(\lvert A_m\dot{\boldsymbol r}_\theta\rvert^2\).  This pointwise value need not equal a limit taken through a rank-changing family.
\end{proposition}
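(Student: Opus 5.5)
The plan is to apply Proposition~\ref{prop:flagged-qfi} to the coarse flagged state $\Omega_\theta^{(f)}=\bigoplus_m p_m\rho_{m,\theta}$. Here $\rho_{m,\theta}=p_m^{-1}\sum_{a\in C_m}q_aU_a\rho_\theta U_a^\dagger$ is the normalized conditional state of class $C_m$. Because the $q_a$ are parameter independent, so are the $p_m$. The classical term $\sum_m\dot p_m^2/p_m$ therefore vanishes, leaving
\begin{equation*}
  F_Q(\Omega_\theta^{(f)})=\sum_{m:p_m>0}p_m\,F_Q(\rho_{m,\theta}).
\end{equation*}
Classes with $p_m=0$ contribute an empty block and are dropped.

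Next I identify each conditional Bloch vector. Conjugation by $U_a$ rotates the Bloch vector by $R_a$, and Bloch vectors combine affinely under convex mixing. Hence $\rho_{m,\theta}=\tfrac12(I+A_m\boldsymbol r_\theta\cdot\boldsymbol\sigma)$, with $A_m$ as in Eq.~\eqref{eq:Am-def}. Since $A_m$ is parameter independent, differentiation gives $\dot{\boldsymbol s}_m=A_m\dot{\boldsymbol r}_\theta$.

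Then I insert the standard qubit QFI formula
\begin{equation*}
  F_Q=|\dot{\boldsymbol s}|^2+\frac{(\boldsymbol s\cdot\dot{\boldsymbol s})^2}{1-|\boldsymbol s|^2},
  \qquad |\boldsymbol s|<1.
\end{equation*}
I will verify it quickly by making the SLD ansatz $L=\alpha I+\boldsymbol\beta\cdot\boldsymbol\sigma$. The equation $\tfrac12\{L,\rho\}=\tfrac12\dot{\boldsymbol s}\cdot\boldsymbol\sigma$ gives $\alpha+\boldsymbol\beta\cdot\boldsymbol s=0$ and $\boldsymbol\beta+\alpha\boldsymbol s=\dot{\boldsymbol s}$. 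Solving yields
\begin{equation*}
  \alpha=-\frac{\boldsymbol s\cdot\dot{\boldsymbol s}}{1-|\boldsymbol s|^2},
\end{equation*}
and $F_Q=\operatorname{Tr}(\rho L^2)=\alpha^2+|\boldsymbol\beta|^2+2\alpha\boldsymbol\beta\cdot\boldsymbol s$ then reduces to the displayed formula. Substituting $\boldsymbol s=A_m\boldsymbol r_\theta$ gives Eq.~\eqref{eq:qubit-partition}.

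The main obstacle is the pure-state remark. If $|A_m\boldsymbol r_\theta|=1$ at $\theta_0$, the function $g(\theta)=|A_m\boldsymbol r_\theta|^2$ satisfies $g\le1$ by positivity and attains an interior maximum at $\theta_0$. Two-sided differentiability then forces $g'(\theta_0)=2\,\boldsymbol s\cdot\dot{\boldsymbol s}=0$.

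With $\boldsymbol s\cdot\dot{\boldsymbol s}=0$, the SLD equations are solved by $\alpha=0$, $\boldsymbol\beta=\dot{\boldsymbol s}$. This gives $\operatorname{Tr}(\rho L^2)=|\dot{\boldsymbol s}|^2+2\cdot0=|\dot{\boldsymbol s}|^2$. The value is well defined because any other SLD differs from this one only off the support, where $\rho$ gives it zero weight. I will conclude by noting that for a rank-changing family the mixed-state formula contains a $0/0$ term whose limit can differ from this value, which is exactly the caveat in the statement.
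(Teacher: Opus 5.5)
Your proposal is correct and follows the paper's own route: the flagged-QFI decomposition with $\theta$-independent $p_m$, then $\boldsymbol s_m=A_m\boldsymbol r_\theta$ by affine mixing of the rotated Bloch vectors, then the qubit SLD ansatz $L=\alpha I+\boldsymbol\beta\cdot\boldsymbol\sigma$. Your interior-maximum argument for $\boldsymbol s\cdot\dot{\boldsymbol s}=0$ and your check that the pure-state SLD ambiguity lives off the support are, if anything, more careful than the paper's appeal to ``normalization'' in its Bloch-formula section, and they match the statement's ``positivity and two-sided differentiability'' wording. The final caveat is only asserted in your write-up, as it is in the paper. You can make it concrete with $\boldsymbol r_\theta=(\cos\theta,0,0)$ and a single-branch class: the mixed-state formula gives $\sin^2\theta+\cos^2\theta\sin^2\theta/\sin^2\theta=1$ for every $\theta\neq0$, while the pointwise SLD QFI at the pure point $\theta=0$ is $|\dot{\boldsymbol s}(0)|^2=0$.
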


\begin{proof}
The probabilities \(p_m\) do not depend on \(\theta\), so Proposition~\ref{prop:flagged-qfi} gives a weighted sum of conditional-state QFIs.  The qubit identity
\begin{equation}
  F_Q(\rho_\theta)
  =|\dot{\boldsymbol s}|^2
  +\frac{(\boldsymbol s\cdot\dot{\boldsymbol s})^2}{1-|\boldsymbol s|^2}
  \label{eq:bloch-qfi}
\end{equation}
for \(|\boldsymbol s|<1\), proved in Sec.~\ref{sec:bloch}, then gives Eq.~\eqref{eq:qubit-partition}.
\end{proof}

\subsection{Two transverse Pauli errors}
\label{sec:twopauli}

Take
\begin{equation}
  |\psi_\theta\rangle=e^{-i\theta Y/2}|0\rangle,
  \qquad
  \boldsymbol r_\theta=(\sin\theta,0,\cos\theta),
  \label{eq:signal-state}
\end{equation}
whose ideal QFI is one.  Let \(X\) occur with probability \(q\) and \(Z\) with probability \(1-q\).  If the two errors are merged into one class, the output Bloch vector is
\begin{equation}
  \boldsymbol s_\theta=(2q-1)(\sin\theta,0,-\cos\theta).
  \label{eq:pauli-mixture-bloch}
\end{equation}
It has constant length and obeys \(\boldsymbol s_\theta\cdot\dot{\boldsymbol s}_\theta=0\).  Hence
\begin{equation}
  F_1^\star=(2q-1)^2,
  \qquad
  D_1=1-F_1^\star=4q(1-q).
  \label{eq:two-pauli-one-flag}
\end{equation}
With separate flags, each conditional state differs from the ideal state by a known parameter-independent unitary, so
\begin{equation}
  F_2^\star=1.
  \label{eq:two-pauli-two-flags}
\end{equation}
For \(0<q<1\), two flags are necessary and sufficient for zero QFI loss.  At \(q=1/2\), forgetting the error label reduces the QFI from one to zero.  On the full qubit code, the corresponding recovery graph has one edge because \(X^\dagger Z=-iY\not\propto I\), and therefore also has chromatic number two.

The stochastic extension gives a direct detector-error benchmark.  Suppose a binary syndrome detector flips either label with probability \(\epsilon\).  With
\begin{align}
  p_0&=q(1-\epsilon)+(1-q)\epsilon,\\
  p_1&=q\epsilon+(1-q)(1-\epsilon),
\end{align}
the retained QFI is
\begin{align}
  F_{2,\epsilon}
  ={}&\frac{[q(1-\epsilon)-(1-q)\epsilon]^2}{p_0}\notag\\
  &+\frac{[q\epsilon-(1-q)(1-\epsilon)]^2}{p_1}\notag\\
  ={}&1-\frac{4\epsilon(1-\epsilon)q(1-q)}{p_0p_1}.
  \label{eq:noisy-binary-readout}
\end{align}
In the maximally ambiguous case \(q=1/2\), this reduces to
\begin{equation}
  F_{2,\epsilon}=(1-2\epsilon)^2,
\end{equation}
so the ideal two-flag advantage is continuously destroyed by detector confusion and vanishes at a completely random readout.

%=========================================================================%
\section{Continuous planar Pauli-axis model}
%=========================================================================%

We now solve a continuous finite-resolution problem.  This is a model of random Pauli axes with an accessible fine axis record.  It is not the bare unitary action of a quarter-wave plate.

Let \(\Phi\in[0,\pi)\) have density \(w(\phi)\), and define
\begin{equation}
  \sigma_\phi=\sin\phi\,X+\cos\phi\,Z,
  \qquad \sigma_\phi^2=I.
  \label{eq:planar-pauli}
\end{equation}
Conjugation by \(\sigma_\phi\) is a Bloch-sphere rotation by \(\pi\) about the axis
\((\sin\phi,0,\cos\phi)\).  On the orbit in Eq.~\eqref{eq:signal-state}, it acts as a reflection of the planar angle,
\begin{equation}
  \boldsymbol r_\theta\longmapsto\boldsymbol r_{2\phi-\theta}.
  \label{eq:angle-reflection}
\end{equation}

A measurable \(M\)-flag compressor partitions \([0,\pi)\) into sets \(C_m\).  Define
\begin{equation}
  p_m=\int_{C_m}w(\phi)\,d\phi,
  \qquad
  \mu_m=\frac{1}{p_m}\int_{C_m}
  w(\phi)e^{2i\phi}\,d\phi
  \label{eq:mu-def}
\end{equation}
for \(p_m>0\).

\begin{theorem}[Conditional-variance representation]
\label{thm:conditional-variance}
For the planar Pauli-axis model,
\begin{equation}
  F_Q(\Omega_\theta^{(f)})
  =\sum_{m:p_m>0}p_m|\mu_m|^2.
  \label{eq:continuous-qfi}
\end{equation}
Equivalently, with \(Z=e^{2i\Phi}\),
\begin{equation}
  1-F_Q(\Omega_\theta^{(f)})
  =\mathbb E\left[
    |Z-\mathbb E(Z\mid f(\Phi))|^2
  \right].
  \label{eq:conditional-variance}
\end{equation}
\end{theorem}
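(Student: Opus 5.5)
Since the weights $w(\phi)$ do not depend on $\theta$, the coarse probabilities $p_m$ are parameter independent. Proposition~\ref{prop:flagged-qfi} therefore reduces the retained QFI to $\sum_m p_m F_Q(\bar\rho_{m,\theta})$, where $\bar\rho_{m,\theta}$ is the normalized conditional state of class $C_m$. I compute its Bloch vector directly from the reflection rule in Eq.~\eqref{eq:angle-reflection}. The signal Bloch vector lies in the $xz$ plane with polar angle $\theta$, so I encode it as the complex number $e^{i\theta}$ (first component $\sin$, third $\cos$, measured from the $z$ axis). In this encoding the action of $\sigma_\phi$ sends $e^{i\theta}\mapsto e^{i(2\phi-\theta)}=e^{2i\phi}\,\overline{e^{i\theta}}$. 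Averaging over $C_m$ with weight $w/p_m$ gives the conditional Bloch vector, encoded as $\mu_m e^{-i\theta}$. This is a planar vector of length $|\mu_m|$, independent of $\theta$, whose angle moves at unit speed with $\theta$ (with reversed orientation).

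With this in hand, I apply the qubit identity Eq.~\eqref{eq:bloch-qfi} (equivalently Proposition~\ref{prop:qubit-partition}, replacing the finite sums by integrals over $C_m$). Because $|\boldsymbol s_m|=|\mu_m|$ is constant, $\boldsymbol s_m\cdot\dot{\boldsymbol s}_m=0$, so the second term vanishes. The first term is $|\dot{\boldsymbol s}_m|^2=|\mu_m|^2$, since $\frac{d}{d\theta}(\mu_m e^{-i\theta})=-i\mu_m e^{-i\theta}$. This yields Eq.~\eqref{eq:continuous-qfi} whenever $|\mu_m|<1$.

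The main obstacle is the rank-deficient case $|\mu_m|=1$, where the class contains only $\phi$ values that agree almost everywhere modulo $\pi$ and the conditional state is pure. Here I invoke the pure-state remark in Proposition~\ref{prop:qubit-partition}: the pointwise SLD QFI is still $|\dot{\boldsymbol s}_m|^2=1=|\mu_m|^2$, so the formula holds uniformly. A second point needing care is the measurability and continuum version of Proposition~\ref{prop:flagged-qfi}. Since $M$ is finite, however, the flagged state is a finite direct sum, and the integrals only enter through the affine dependence of the conditional state on the Bloch vector, which commutes with integration. I will note that linearity justifies exchanging the average with the conjugation.

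Finally, for Eq.~\eqref{eq:conditional-variance}, I set $Z=e^{2i\Phi}$, so $|Z|=1$ and $\mathbb E(Z\mid f(\Phi)=m)=\mu_m$. The law of total variance for complex variables gives
\[
\mathbb E|Z-\mathbb E(Z\mid f)|^2=\mathbb E|Z|^2-\mathbb E|\mathbb E(Z\mid f)|^2=1-\sum_m p_m|\mu_m|^2,
\]
because the cross term vanishes by the tower property. Substituting Eq.~\eqref{eq:continuous-qfi} completes the proof. As a sanity check, the fine record ($\mu=e^{2i\phi}$ pointwise) gives $F_Q=1$, and this is consistent with the two-Pauli benchmark in Sec.~\ref{sec:twopauli}.
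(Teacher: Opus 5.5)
Your proposal is correct and follows the paper's proof essentially step for step: the same complex encoding \(r_z+ir_x\), the same conditional Bloch vector \(\mu_m e^{-i\theta}\) of \(\theta\)-independent radius, the Bloch-vector QFI formula giving \(|\mu_m|^2\), and the same total-variance identity for Eq.~\eqref{eq:conditional-variance}. Your explicit appeal to Proposition~\ref{prop:flagged-qfi}, using that \(p_m\) is parameter independent, and your separate treatment of the pure case \(|\mu_m|=1\) only spell out steps the paper leaves implicit.
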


\begin{proof}
Represent a planar Bloch vector by the complex number \(r_z+ir_x\).  Equation~\eqref{eq:angle-reflection} maps it to \(e^{i(2\phi-\theta)}\).  The normalized conditional state in class \(C_m\) therefore has complex Bloch vector
\begin{equation}
  s_m(\theta)=\mu_me^{-i\theta}.
\end{equation}
Its radius \(|\mu_m|\) is independent of \(\theta\), its derivative has magnitude \(|\mu_m|\), and the real Bloch vectors satisfy
\(\boldsymbol s_m\cdot\dot{\boldsymbol s}_m=0\).  Equation~\eqref{eq:bloch-qfi} gives conditional QFI \(|\mu_m|^2\), proving Eq.~\eqref{eq:continuous-qfi}.  Furthermore,
\begin{align}
  \mathbb E\left[|Z-\mathbb E(Z\mid f)|^2\right]
  &=\mathbb E|Z|^2-\mathbb E|\mathbb E(Z\mid f)|^2 \notag\\
  &=1-\sum_mp_m|\mu_m|^2,
\end{align}
which proves Eq.~\eqref{eq:conditional-variance}.
\end{proof}

\begin{corollary}[Exact finite-resolution criterion]
\label{cor:essential-cardinality}
The full fine-record QFI is retained if and only if \(e^{2i\Phi}\) is almost surely constant within every nonempty syndrome class.  Hence the minimum exact alphabet is the essential cardinality of the random variable \(e^{2i\Phi}\), meaning the smallest cardinality of a set that contains it with probability one.  In particular, a non-atomic continuous axis distribution cannot be compressed losslessly into any finite number of flags.
\end{corollary}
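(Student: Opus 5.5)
The corollary follows from Theorem~\ref{thm:conditional-variance}; in particular, Eq.~\eqref{eq:conditional-variance} expresses the lost QFI as a conditional variance. First I identify the fine-record QFI. The finest record is the identity compressor $f(\Phi)=\Phi$, for which $\mathbb E(Z\mid\Phi)=Z$. The same computation (each fine branch has $|\mu|=|e^{2i\phi}|=1$) gives fine QFI equal to one, the ideal value. Thus, for any measurable partition $f$, the loss relative to the fine record is exactly $\mathbb E|Z-\mathbb E(Z\mid f(\Phi))|^2$, where $Z=e^{2i\Phi}$. For a continuous axis distribution I would justify this fine-record identification directly from the same Bloch argument rather than from the finite-$r$ Theorem~\ref{thm:exact-loss}.

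Next I characterize zero loss. The conditional variance vanishes iff $Z=\mathbb E(Z\mid f(\Phi))$ almost surely. On each class with $p_m>0$, $\mathbb E(Z\mid f)$ is the constant $\mu_m$. Hence zero loss is equivalent to $Z=\mu_m$ almost surely on every class of positive probability. Classes with $p_m=0$ carry no weight, so the condition holds ``within every nonempty syndrome class'' in the almost-sure sense. Alternatively, $p_m|\mu_m|^2\le p_m$ by Jensen, with equality iff $|\mathbb E(Z\mid C_m)|=1$, which for unit-modulus $Z$ forces $Z$ to be a.s.\ constant on $C_m$. I would mention this as a cross-check.

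For the minimum alphabet, I argue both directions. If a lossless $M$-class partition exists, then $Z$ takes at most $M$ values $\{\mu_m\}$ with probability one, so the essential cardinality is at most $M$. Conversely, suppose $Z\in\{z_1,\dots,z_M\}$ almost surely. Partition $[0,\pi)$ into the preimages $C_m=\{\phi:e^{2i\phi}=z_m\}$, sending the null remainder to any class. These sets are measurable, and the construction is lossless. Because $\phi\mapsto e^{2i\phi}$ is injective on $[0,\pi)$, the essential cardinality of $Z$ equals that of $\Phi$. Finally, if $w$ is non-atomic, every finite set has probability zero, so no finite set contains $Z$ almost surely, and no finite lossless alphabet exists.

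The main obstacle is the first step: making rigorous that the fine (continuous) record has QFI one. This requires interpreting a flagged state with continuous labels, i.e., a direct integral. I would handle it by noting that the fine record dominates every finite partition. Every finite-partition QFI is at most one, while the conditional variance can be driven to zero by refining partitions along a measurable approximation of $Z$. Therefore the supremum, and hence the fine value, equals one.
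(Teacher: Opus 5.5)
Your proposal is correct and is essentially the paper's argument. The paper reads off from Eq.~\eqref{eq:continuous-qfi} that matching the fine value one forces $|\mu_m|=1$ on every positive-weight class, then uses the equality case of the triangle inequality for averages of unit-modulus numbers; that is your Jensen cross-check, and your vanishing-conditional-variance formulation is just the equivalent form Eq.~\eqref{eq:conditional-variance} of the same theorem. Your explicit two-way alphabet argument (preimages under the injective map $\phi\mapsto e^{2i\phi}$ on $[0,\pi)$, null remainder absorbed into any class) fills in what the paper leaves implicit; one small fix in your last paragraph: dominance only gives fine QFI $\geq 1$, and the matching upper bound comes from monotonicity of the QFI under the parameter-independent monitored instrument acting on the pure signal state, whose QFI is one.
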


\begin{proof}
By Eq.~\eqref{eq:continuous-qfi}, equality with the fine-record value one requires \(|\mu_m|=1\) for every nonempty class.  Equality in the triangle inequality for an average of unit-modulus complex numbers holds if and only if all numbers in that average have the same phase almost surely.  Thus no class may contain two essentially different values of \(e^{2i\Phi}\).
\end{proof}

\subsection{Uniform-axis optimum}

Let \(w(\phi)=1/\pi\) on \([0,\pi)\).  The corresponding uniform circle-quantization problem is classical \cite{rosenblatt2023uniform}; here it provides an exactly solvable metrological benchmark for finite syndrome resolution.

\begin{theorem}[Exact \(M\)-flag optimum for a uniform planar axis]
\label{thm:uniform-axis}
For every integer \(M\geq1\),
\begin{equation}
  F_M^\star
  =\left[\frac{M}{\pi}\sin\left(\frac{\pi}{M}\right)\right]^2.
  \label{eq:uniform-optimum}
\end{equation}
For \(M\geq2\), equality is attained by \(M\) contiguous intervals of equal length \(\pi/M\), up to a common rotation and null sets.  No finite \(M\) is exactly lossless.
\end{theorem}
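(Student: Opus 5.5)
By Theorem~\ref{thm:conditional-variance}, maximizing $F_Q(\Omega_\theta^{(f)})$ over $M$-set partitions of $[0,\pi)$ is the same as maximizing $\sum_m p_m|\mu_m|^2$. With $w=1/\pi$ we change variables to $\psi=2\phi$, which is uniform on the circle $[0,2\pi)$. The problem then becomes: partition the circle into at most $M$ measurable sets $B_m$ and maximize
\begin{equation}
  G=\sum_m \frac{1}{2\pi\,|B_m|}\left|\int_{B_m}e^{i\psi}\,d\psi\right|^2 .
\end{equation}
For each class we write $\int_{B_m}e^{i\psi}d\psi=|B_m|\,\mu_m$. We also write $\mu_m=|\mu_m|e^{i\alpha_m}$ with $\alpha_m$ the class's mean direction.

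\textbf{Step 1 (reduction to a $k$-means/Voronoi problem).} By Proposition~\ref{prop:sld-center}, or directly from Eq.~\eqref{eq:conditional-variance}, $1-F=\min_{c_m}\mathbb E|Z-c_{f(\Phi)}|^2$. Joint optimization over centers $c_m\in\mathbb C$ and partitions is therefore a quantization problem for the uniform distribution on the unit circle. For fixed centers, the optimal partition assigns each point to its nearest center. The cells are then arcs, since nearest-center cells on the circle are intersections of the circle with half-planes. Up to null sets we may therefore restrict to partitions into at most $M$ contiguous arcs of lengths $\ell_1,\dots,\ell_M\ge0$ with $\sum\ell_m=2\pi$.

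\textbf{Step 2 (explicit objective and concavity).} For an arc of length $\ell$ we have $|\int e^{i\psi}d\psi|=2\sin(\ell/2)$, so
\begin{equation}
  G=\frac{1}{2\pi}\sum_m g(\ell_m),\qquad g(\ell)=\frac{4\sin^2(\ell/2)}{\ell},\quad g(0)=0 .
\end{equation}
The aim is to show that $g$ is concave on $[0,2\pi]$, or at least that the symmetric sum is Schur-concave. Jensen's inequality then gives $G\le \frac{M}{2\pi}g(2\pi/M)=\big[\frac{M}{\pi}\sin(\pi/M)\big]^2$, which is Eq.~\eqref{eq:uniform-optimum}. Note $g(\ell)=2(1-\cos\ell)/\ell$.

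This concavity check is the \textbf{main obstacle}. If $g''\le0$ fails on part of $[0,2\pi]$, the fallback is to work through the necessary conditions for an optimal quantizer instead. At an optimum each center is the centroid of its arc, and each boundary point between adjacent arcs is equidistant from the two neighboring centroids. Together these force adjacent arcs to have equal lengths, except in degenerate cases, and the only remaining configuration is $M$ equal arcs. That candidate is then compared with configurations in which some arcs are empty, i.e.\ with $M'<M$ equal arcs. Since $x\mapsto x\sin(\pi/x)$ is increasing, using all $M$ arcs is optimal.

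\textbf{Step 3 (attainment, uniqueness, losslessness).} Equal arcs of length $2\pi/M$ in $\psi$ are intervals of length $\pi/M$ in $\phi$, and they attain the bound. For uniqueness when $M\ge2$, equality in the Jensen step requires equal $\ell_m$, assuming strict concavity. In addition, any non-arc optimal partition would have to coincide almost everywhere with its own Voronoi partition, since otherwise reassigning points to their nearest center strictly lowers the $k$-means cost. Together these give contiguous equal intervals up to rotation and null sets. Finally, $(M/\pi)\sin(\pi/M)<1$ for every finite $M$, which agrees with Corollary~\ref{cor:essential-cardinality}: the uniform distribution is non-atomic, so no finite $M$ is lossless.
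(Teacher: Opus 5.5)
Your architecture (reduce to arc lengths, then bound $\sum_m g(\ell_m)$) matches the paper's, but neither of the two load-bearing steps is established.

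\textbf{Step 1.} The claim that nearest-center cells on the circle are arcs is false for $M\ge3$. A Voronoi cell meets the circle in an intersection of several arcs, and that intersection can be disconnected. Take centers $0$ and $\pm0.99$; these do occur as conditional means, for instance from two short arcs around $\psi=0,\pi$ together with their complement as the third class. The cell of $0$ is then $\{\psi:|\cos\psi|\le 0.9801/1.98\}$, which is two disjoint arcs about $\pm\pi/2$. So nearest-center reassignment does not produce an arc partition, and your uniqueness argument in Step 3 rests on the same false claim.

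The missing ingredient is a rearrangement (bathtub) bound that holds for every measurable set. If $|B|=\ell$ and $\alpha=\arg\int_B e^{i\psi}d\psi$, then $|\int_B e^{i\psi}d\psi|=\int_B\cos(\psi-\alpha)\,d\psi\le\int_{-\ell/2}^{\ell/2}\cos u\,du=2\sin(\ell/2)$. For $0<\ell<2\pi$, equality holds only when $B$ is an arc up to a null set. This is the paper's circle-rearrangement lemma. It gives $G\le\frac{1}{2\pi}\sum_m g(\ell_m)$ directly, makes the $k$-means detour unnecessary, and its equality case supplies the ``contiguous intervals'' part of the theorem.

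\textbf{Step 2.} Your suspicion is correct: $g(\ell)=2(1-\cos\ell)/\ell$ is not concave on $[0,2\pi]$. The quantity $\tfrac12\ell^3g''(\ell)=\ell^2\cos\ell-2\ell\sin\ell+2-2\cos\ell$ is negative only up to an inflection point $t_0\approx4.09$, and it equals $4\pi^2$ at $\ell=2\pi$. So Jensen fails, strict concavity is unavailable for the equality case, and the allocation inequality $\sum_m g(\ell_m)\le Mg(2\pi/M)$ is left unproved. That inequality is the real content of the theorem.

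Your Schur-concavity alternative can be completed on the simplex $\sum_m\ell_m=2\pi$, but it needs a check you did not do. The Schur--Ostrowski condition requires $g'(s)>g'(t)$ whenever $s<t$ and $s+t\le2\pi$. This holds: $g'$ decreases on $(0,t_0]$; and if $t>t_0$, then $s<2\pi-t_0\approx2.20$, which lies below the first zero $\approx2.33$ of $g'$, so $g'(s)>0\ge g'(t)$.

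The paper instead uses a supporting line. For $M\ge3$, take the tangent $T_a$ at $a=2\pi/M$.
\begin{itemize}
\item Its slope satisfies $g'(a)\ge0$, which is equivalent to $\tan(a/2)\le a$ and holds for $a\le2\pi/3$.
\item It dominates $g$ on $[0,\pi]$ by concavity.
\item On $[\pi,2\pi]$, $g$ is decreasing, so $g(\ell)\le g(\pi)\le T_a(\pi)\le T_a(\ell)$.
\end{itemize}
The case $M=2$ is handled separately via $\sin y\le4y(\pi-y)/\pi^2$.

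Your Lloyd-condition fallback does not close the gap as written. It presupposes an optimal quantizer whose cells are arcs, which is exactly what Step 1 failed to show. Its key step also needs injectivity of $\ell\mapsto s(\ell)^2-2s(\ell)\cos(\ell/2)$, with $s(\ell)=2\sin(\ell/2)/\ell$. That function is not monotone on $(0,2\pi]$: it rises from $-1$ to a positive maximum and falls back to $0$ at $2\pi$. You would still have to rule out unequal solutions with $\ell_m+\ell_{m+1}\le2\pi$, and treat empty cells and $M=2$.
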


\begin{proof}
Let \(\ell_m\) be the Lebesgue measure of \(C_m\), so \(\sum_m\ell_m=\pi\).  The circle-rearrangement lemma proved in Sec.~\ref{sec:rearrangement} gives
\begin{equation}
  \left|\int_{C_m}e^{2i\phi}\,d\phi\right|
  \leq\sin\ell_m.
  \label{eq:cell-moment-bound}
\end{equation}
Since \(p_m=\ell_m/\pi\), Eq.~\eqref{eq:continuous-qfi} yields
\begin{equation}
  F_Q(\Omega_\theta^{(f)})
  \leq\frac{1}{\pi}\sum_{m=1}^{M}
  \frac{\sin^2\ell_m}{\ell_m},
  \label{eq:length-upper-bound}
\end{equation}
where the summand is defined as zero at \(\ell_m=0\).  The length-allocation lemma proved in Sec.~\ref{sec:length-allocation} states
\begin{equation}
  \sum_{m=1}^{M}\frac{\sin^2\ell_m}{\ell_m}
  \leq
  M\frac{\sin^2(\pi/M)}{\pi/M}.
  \label{eq:length-allocation}
\end{equation}
Combining Eqs.~\eqref{eq:length-upper-bound} and~\eqref{eq:length-allocation} proves the upper bound in Eq.~\eqref{eq:uniform-optimum}.  A partition into equal contiguous intervals saturates Eq.~\eqref{eq:cell-moment-bound} for every class and Eq.~\eqref{eq:length-allocation}, so the upper bound is attainable.  Finally, \(\sin x<x\) for \(x>0\), giving \(F_M^\star<1\) for finite \(M\).
\end{proof}

The exact deficit and its large-\(M\) expansion are
\begin{align}
  D_M
  &=1-\left[\frac{M}{\pi}\sin\left(\frac{\pi}{M}\right)\right]^2,
  \label{eq:uniform-deficit}\\
  &=\frac{\pi^2}{3M^2}
  -\frac{2\pi^4}{45M^4}+O(M^{-6}).
  \label{eq:uniform-asymptotic}
\end{align}
Thus a target deficit \(D_M\leq\epsilon\) requires asymptotically
\begin{equation}
  M\gtrsim\frac{\pi}{\sqrt{3\epsilon}}.
  \label{eq:M-epsilon}
\end{equation}

\subsection{Nonuniform axis distributions}

\begin{proposition}[Distribution-independent finite-resolution bound]
\label{prop:universal-axis-bound}
For every probability measure on \([0,\pi)\) in the planar-axis model and every \(M\geq2\), define \(D_M^\star=1-F_M^\star\).  Then
\begin{equation}
  F_M^\star\geq\cos^2\!\left(\frac{\pi}{M}\right),
  \qquad
  D_M^\star\leq\sin^2\!\left(\frac{\pi}{M}\right).
  \label{eq:universal-axis-bound}
\end{equation}
In particular, every such distribution admits a constructive deficit
\begin{equation}
  D_M^\star\leq\frac{\pi^2}{M^2}+O(M^{-4}).
  \label{eq:universal-axis-asymptotic}
\end{equation}
\end{proposition}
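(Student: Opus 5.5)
The plan is to construct, for an arbitrary probability measure on \([0,\pi)\), one explicit \(M\)-flag partition, and then bound its retained QFI from below with the conditional-variance representation of Theorem~\ref{thm:conditional-variance}. Because \(F_M^\star\) is a maximum over partitions, a single good partition is enough. I will use the fixed partition of \([0,\pi)\) into \(M\) contiguous intervals \(C_m=[(m-1)\pi/M,m\pi/M)\) of length \(\pi/M\). The map \(\phi\mapsto 2\phi\) sends each \(C_m\) to an arc of the unit circle of angular width \(2\pi/M\). For \(M\ge2\) this width is at most \(\pi\), so the arc is contained in a closed half-plane, and this geometric fact is the basis of the estimate.

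\textbf{Step 1.} Bound each class moment. Let \(\alpha_m\) be the midpoint angle of the arc \(2C_m\). Every \(Z=e^{2i\Phi}\) with \(\Phi\in C_m\) satisfies \(\operatorname{Re}(e^{-i\alpha_m}Z)=\cos(2\Phi-\alpha_m)\ge\cos(\pi/M)\ge0\), because \(|2\Phi-\alpha_m|\le\pi/M\le\pi/2\). Averaging this over the conditional law within the class gives \(|\mu_m|\ge\operatorname{Re}(e^{-i\alpha_m}\mu_m)\ge\cos(\pi/M)\) whenever \(p_m>0\). The nonnegativity of \(\cos(\pi/M)\) lets us square the inequality, so \(|\mu_m|^2\ge\cos^2(\pi/M)\).

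\textbf{Step 2.} Sum over classes. Equation~\eqref{eq:continuous-qfi} gives \(F_Q(\Omega^{(f)})=\sum_m p_m|\mu_m|^2\ge\cos^2(\pi/M)\sum_m p_m=\cos^2(\pi/M)\). Therefore \(F_M^\star\ge\cos^2(\pi/M)\). The deficit bound \(D_M^\star\le\sin^2(\pi/M)\) follows because the fine-record QFI in this model equals one; this holds since \(|Z|=1\) in Eq.~\eqref{eq:conditional-variance}, i.e.\ each fine branch is a known unitary image. Expanding \(\sin^2 x=x^2-x^4/3+O(x^6)\) at \(x=\pi/M\) then gives Eq.~\eqref{eq:universal-axis-asymptotic}.

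The main point needing care is Step~1 when the measure has atoms or puts no mass on some cells. Classes with \(p_m=0\) are simply omitted from the sum, as the theorem already allows. Atoms are harmless because the inequality holds pointwise in \(\Phi\). The interval endpoints and the identification \([0,\pi)\) with the circle cause no trouble, since each half-open cell maps into a single arc of width \(2\pi/M\). The case \(M=2\) is the borderline: there \(\cos(\pi/2)=0\) and the bound is trivial but still true. No rearrangement lemma is needed, since the argument is a direct half-plane projection.
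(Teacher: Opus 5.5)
Your proposal is correct and follows the paper's proof essentially step for step. It uses the same equal-length contiguous partition, projects each conditional moment onto its cell-midpoint direction to get $|\mu_m|\ge\cos(\pi/M)$, and sums via Eq.~\eqref{eq:continuous-qfi}; your explicit note that $\cos(\pi/M)\ge0$ for $M\ge2$ justifies the squaring step, which the paper leaves implicit.
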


\begin{proof}
Partition \([0,\pi)\) into \(M\) intervals \(I_m\) of equal length \(\pi/M\), and let \(c_m\) be the midpoint of \(I_m\).  For every \(\phi\in I_m\),
\begin{equation}
  |2(\phi-c_m)|\leq\frac{\pi}{M},
  \qquad
  \cos[2(\phi-c_m)]\geq\cos\!\left(\frac{\pi}{M}\right).
\end{equation}
For a nonempty cell, its conditional moment therefore satisfies
\begin{align}
  |\mu_m|
  &\geq\operatorname{Re}(e^{-2ic_m}\mu_m)\notag\\
  &=\mathbb E[\cos 2(\Phi-c_m)\mid\Phi\in I_m]\notag\\
  &\geq\cos\!\left(\frac{\pi}{M}\right).
\end{align}
Empty cells contribute zero.  Theorem~\ref{thm:conditional-variance} then gives
\begin{equation}
  F_Q(\Omega_\theta^{(f)})
  =\sum_mp_m|\mu_m|^2
  \geq\cos^2\!\left(\frac{\pi}{M}\right)\sum_mp_m,
\end{equation}
which proves Eq.~\eqref{eq:universal-axis-bound}.  Expanding the sine gives Eq.~\eqref{eq:universal-axis-asymptotic}.
\end{proof}

The uniform axis provides a benchmark, but in many physical settings the accessible fine axis record is nonuniformly distributed.  We extend the analysis to two important cases.

\subsubsection{Bimodal distribution}

Consider a two-point weight fully supported on axes \(\alpha,\beta\in[0,\pi)\):
\begin{align}
  w(\phi)
  &=w_1\delta(\phi-\alpha)+w_2\delta(\phi-\beta),\notag\\
  &\hspace{1.2em}w_1,w_2>0,
  \qquad w_1+w_2=1.
  \label{eq:bimodal-density}
\end{align}
with \(\alpha\neq\beta\) without loss of generality.

\begin{proposition}[Bimodal axis optimum]
\label{prop:bimodal}
For the bimodal distribution~\eqref{eq:bimodal-density} with \(\Delta\equiv\alpha-\beta\),
\begin{equation}
  F_1^\star=w_1^2+w_2^2+2w_1w_2\cos(2\Delta),
  \qquad
  F_M^\star=1\;\;(M\geq2).
  \label{eq:bimodal-Fstar}
\end{equation}
The single-flag deficit is
\begin{equation}
  D_1=4w_1w_2\sin^2\Delta,
  \label{eq:bimodal-deficit}
\end{equation}
which vanishes when \(\Delta=0\) (collinear axes) and is maximal for \(\Delta=\pi/2\) (orthogonal axes).
\end{proposition}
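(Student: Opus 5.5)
The plan is to reduce everything to Theorem~\ref{thm:conditional-variance}, which gives $F_Q(\Omega_\theta^{(f)})=\sum_m p_m|\mu_m|^2$ for any partition $f$. For the two-point weight~\eqref{eq:bimodal-density}, a measurable partition affects the QFI only through which of the two atoms $\alpha,\beta$ fall into the same class; null sets carry no weight. There are therefore only two cases to compare: both atoms in one class, or the atoms in different classes.

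\emph{Single flag.} When $M=1$, or more generally whenever both atoms share a class, the single class has $p=1$ and moment
\begin{equation}
  \mu=w_1e^{2i\alpha}+w_2e^{2i\beta}.
\end{equation}
Expanding the squared modulus gives
\begin{equation}
  |\mu|^2=w_1^2+w_2^2+2w_1w_2\cos[2(\alpha-\beta)],
\end{equation}
which is the stated $F_1^\star$. For the deficit, substitute $1=(w_1+w_2)^2$ to obtain $1-F_1^\star=2w_1w_2[1-\cos(2\Delta)]$, and then use $1-\cos 2\Delta=2\sin^2\Delta$ to get $D_1=4w_1w_2\sin^2\Delta$. With $w_1w_2>0$ and $\Delta\in(-\pi,\pi)$ (since $\alpha,\beta\in[0,\pi)$), $\sin^2\Delta$ vanishes only at $\Delta=0$ and is maximal at $\Delta=\pi/2$. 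One wording point needs care: the paper assumes $\alpha\ne\beta$, so the ``$\Delta=0$'' case in the statement should be read as the degenerate limit of coinciding axes. I will say explicitly that it is stated as a limit.

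\emph{Two or more flags.} For $M\ge2$, put $\alpha$ and $\beta$ in different classes. Each nonempty class then contains a single atom, so its moment is a unit-modulus number, $|\mu_m|=1$, and $\sum_m p_m=1$ gives $F_Q=1$. Since $|\mu_m|\le1$ always holds, no partition can exceed $1$, so $F_M^\star=1$. This agrees with Corollary~\ref{cor:essential-cardinality}: the essential cardinality of $e^{2i\Phi}$ is at most $2$.

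I expect no real obstacle. The only care needed is justifying that, for $M=1$, the single class necessarily contains both atoms up to null sets, and presenting the $\Delta=0$ endpoint as a limit rather than as an admissible bimodal configuration.
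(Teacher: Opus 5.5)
Your proposal is correct and follows the paper's proof essentially step for step: both evaluate $\sum_m p_m|\mu_m|^2$ from Theorem~\ref{thm:conditional-variance}, once for the single merged class (giving $F_1^\star$ and then $D_1$) and once with the two atoms in separate classes (giving $F=1$ for $M\geq2$). Your explicit bound $|\mu_m|\le1$, which gives $F_M^\star\le1$, and your care about $\Delta=0$ are small tidyings the paper leaves implicit; on the latter, the computation never uses $\alpha\neq\beta$, so the formula holds directly at coinciding axes and no limit is needed.
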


\begin{proof}
The support consists of only two point masses.  For \(M=1\), the single class contains both:
\begin{equation}
  p_1=1,\qquad \mu_1=w_1e^{2i\alpha}+w_2e^{2i\beta},
\end{equation}
so \(F_1^\star=|\mu_1|^2=w_1^2+w_2^2+2w_1w_2\cos(2\alpha-2\beta)\).

For \(M=2\), the optimal partition separates the two masses:
\begin{align}
  C_1&=\{\alpha\},
  &p_1&=w_1,
  &\mu_1&=e^{2i\alpha},\notag\\
  C_2&=\{\beta\},
  &p_2&=w_2,
  &\mu_2&=e^{2i\beta}.
\end{align}
giving \(F=w_1\cdot1+w_2\cdot1=1\).  The alternative (merging both into one class and leaving the other empty) recovers \(F_1^\star\leq1\).  Hence \(F_2^\star=1\).  For \(M\geq3\), each point can be assigned its own class, so \(F_M^\star=1\).  The deficit follows directly.
\end{proof}

Equation~\eqref{eq:bimodal-deficit} unifies several limiting cases.  When \(w_1=w_2=1/2\), one recovers \(F_1^\star=\cos^2\Delta\), matching the discrete Pauli result of Sec.~\ref{sec:twopauli} at \(q=1/2\) and \(\Delta=\pi/2\) where \(F_1^\star=0\).  The classical Fisher contribution \(\sum_a\dot q_a^2/q_a\) is absent here because \(w_1,w_2\) are parameter-independent.

\subsubsection{von Mises distribution}

Let the axis density follow a von~Mises distribution on the half-circle:
\begin{equation}
  w(\phi)=\frac{1}{\pi I_0(\kappa)}
  e^{\kappa\cos(2(\phi-\phi_0))},
  \qquad \phi\in[0,\pi),
  \label{eq:vonmises}
\end{equation}
where \(I_n(\kappa)\) is the modified Bessel function of the first kind and \(\kappa\geq0\) controls the concentration (\(\kappa=0\) recovers the uniform distribution, \(\kappa\to\infty\) approaches a point mass at \(\phi_0\)).

\begin{proposition}[von Mises circular moment]
\label{prop:vonmises-moment}
For the density~\eqref{eq:vonmises}, the first circular moment is
\begin{align}
  R_1(\kappa)
  &\equiv\int_0^\pi w(\phi)e^{2i\phi}\,d\phi
  =e^{2i\phi_0}A(\kappa),\notag\\
  A(\kappa)
  &\equiv\frac{I_1(\kappa)}{I_0(\kappa)}\in[0,1].
  \label{eq:vonmises-R1}
\end{align}
where \(A(0)=0\) and \(A(\kappa)\to1\) as \(\kappa\to\infty\).
\end{proposition}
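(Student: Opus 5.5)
The plan is to compute the integral directly by substitution and recognize the integral representation of the modified Bessel functions. First I will substitute \(\psi=2(\phi-\phi_0)\), so that \(d\phi=d\psi/2\) and the half-circle \(\phi\in[0,\pi)\) maps to an interval of length \(2\pi\) in \(\psi\). Since the integrand \(e^{\kappa\cos\psi}e^{i\psi}\) is \(2\pi\)-periodic, that interval can be shifted to \([-\pi,\pi)\). The factor \(e^{2i\phi}\) becomes \(e^{2i\phi_0}e^{i\psi}\), which gives
\begin{equation*}
  R_1(\kappa)=\frac{e^{2i\phi_0}}{2\pi I_0(\kappa)}\int_{-\pi}^{\pi}e^{\kappa\cos\psi}e^{i\psi}\,d\psi .
\end{equation*}
The same substitution applied to the normalization \(\int_0^\pi w=1\) reproduces the standard identity \(I_0(\kappa)=\frac{1}{2\pi}\int_{-\pi}^{\pi}e^{\kappa\cos\psi}d\psi\). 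This confirms the prefactor \(1/(\pi I_0)\) in Eq.~\eqref{eq:vonmises}.

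Next I will split \(e^{i\psi}=\cos\psi+i\sin\psi\). The sine part integrates to zero because \(e^{\kappa\cos\psi}\sin\psi\) is odd in \(\psi\). The cosine part is exactly the integral representation \(I_1(\kappa)=\frac{1}{2\pi}\int_{-\pi}^{\pi}e^{\kappa\cos\psi}\cos\psi\,d\psi\), which is the \(n=1\) case of the Schläfli formula \(I_n(\kappa)=\frac1\pi\int_0^\pi e^{\kappa\cos\psi}\cos n\psi\,d\psi\). Together these give \(R_1=e^{2i\phi_0}I_1(\kappa)/I_0(\kappa)\).

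For the range and limits, the bound \(A\ge0\) holds because \(I_1(\kappa)\ge0\) for \(\kappa\ge0\); its power series has nonnegative coefficients. The bound \(A\le1\) follows from \(|R_1|\le\int w=1\) by the triangle inequality, or directly from \(\cos\psi\le1\). For \(A(0)=0\), note that \(I_1(0)=0\) and \(I_0(0)=1\); equivalently, the uniform case has vanishing first moment.

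The only step that needs real care is \(A(\kappa)\to1\). I would argue it by concentration. Write
\begin{equation*}
  1-A(\kappa)=\frac{\int_{-\pi}^{\pi}(1-\cos\psi)e^{\kappa\cos\psi}d\psi}{\int_{-\pi}^{\pi}e^{\kappa\cos\psi}d\psi}.
\end{equation*}
Split the integrals at \(|\psi|<\delta\). On that region the numerator integrand is at most \((1-\cos\delta)\) times the denominator integrand. Outside it, the mass is at most \(2\pi e^{\kappa\cos\delta}\), while the denominator is at least \(2\delta e^{\kappa\cos(\delta/2)}\) from the smaller window \(|\psi|<\delta/2\). The ratio of these is exponentially small in \(\kappa\). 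Letting \(\kappa\to\infty\) and then \(\delta\to0\) gives the limit. Alternatively, the standard asymptotics \(I_n(\kappa)\sim e^\kappa/\sqrt{2\pi\kappa}\) give it immediately.
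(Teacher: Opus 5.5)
Your proposal is correct and is essentially the paper's proof: substitute $2(\phi-\phi_0)$, use $2\pi$-periodicity to move to a full period, note that the odd sine part vanishes, and identify the cosine part with the integral representation of $I_1$. You also prove $A\in[0,1]$, $A(0)=0$, and $A(\kappa)\to1$, which the paper only asserts; your concentration argument for the limit is sound up to harmless constant factors (for instance, the window $|\psi|<\delta/2$ has length $\delta$, not $2\delta$).
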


\begin{proof}
Set \(t=2(\phi-\phi_0)\).  The integration interval in \(t\) has length \(2\pi\), and the integrand is \(2\pi\)-periodic.  Hence
\begin{align}
  R_1(\kappa)
  &=\frac{e^{2i\phi_0}}{2\pi I_0(\kappa)}
  \int_{\alpha}^{\alpha+2\pi}e^{\kappa\cos t}e^{it}\,dt\notag\\
  &=\frac{e^{2i\phi_0}}{2\pi I_0(\kappa)}
  \int_0^{2\pi}e^{\kappa\cos t}(\cos t+i\sin t)\,dt,
\end{align}
where \(\alpha=-2\phi_0\).  The sine integral is zero, while
\(\int_0^{2\pi}e^{\kappa\cos t}\cos t\,dt=2\pi I_1(\kappa)\).  This proves Eq.~\eqref{eq:vonmises-R1}.
\end{proof}

\begin{corollary}[Rigorous von Mises finite-resolution statements]
\label{cor:vonmises-bounds}
For the density~\eqref{eq:vonmises},
\begin{equation}
  F_1^\star(\kappa)
  =\left[\frac{I_1(\kappa)}{I_0(\kappa)}\right]^2.
  \label{eq:vonmises-one-flag}
\end{equation}
For every finite \(\kappa\) and finite \(M\),
\begin{equation}
  F_M^\star(\kappa)<1.
  \label{eq:vonmises-no-finite-lossless}
\end{equation}
Moreover, for \(M\geq2\),
\begin{equation}
  \max\left\{
  \left[\frac{I_1(\kappa)}{I_0(\kappa)}\right]^2,
  \cos^2\!\left(\frac{\pi}{M}\right)
  \right\}
  \leq F_M^\star(\kappa)<1.
  \label{eq:vonmises-bounds}
\end{equation}
\end{corollary}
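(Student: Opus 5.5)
The statement to prove is Corollary~\ref{cor:vonmises-bounds}. It has three parts: the one-flag value, the strict inequality $F_M^\star<1$, and the sandwich bound for $M\ge2$. Each follows by combining results already established for the planar-axis model.

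\textbf{One-flag value.} For $M=1$ the only partition is the trivial one, with $p_1=1$ and $\mu_1=R_1(\kappa)$. Theorem~\ref{thm:conditional-variance} then gives $F_1^\star=|R_1(\kappa)|^2$. Proposition~\ref{prop:vonmises-moment} evaluates this as $|e^{2i\phi_0}A(\kappa)|^2=[I_1(\kappa)/I_0(\kappa)]^2$, which proves Eq.~\eqref{eq:vonmises-one-flag}.

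\textbf{Strict inequality.} I would invoke Corollary~\ref{cor:essential-cardinality}: the fine-record QFI (equal to one) is retained with $M$ classes only if $e^{2i\Phi}$ is almost surely constant on each class. That would force the law of $e^{2i\Phi}$ to be supported on at most $M$ points. For finite $\kappa$ the density~\eqref{eq:vonmises} is strictly positive and bounded on $[0,\pi)$, so the law of $\Phi$ is non-atomic and so is its image under the injective map $\phi\mapsto e^{2i\phi}$. The essential cardinality is therefore infinite, and $F_Q(\Omega^{(f)})<1$ for every measurable $M$-partition.

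The main obstacle is that $F_M^\star$ is a supremum over measurable partitions, so excluding equality for each fixed partition does not by itself exclude the value $1$ as a limit. My plan for this step is as follows. By Eq.~\eqref{eq:conditional-variance},
\[
1-F_Q=\sum_m\int_{C_m}w\,|e^{2i\phi}-\mu_m|^2\,d\phi \ \ge\ w_{\min}\sum_m\int_{C_m}|e^{2i\phi}-\mu_m|^2\,d\phi,
\]
where $w_{\min}=e^{-\kappa}/(\pi I_0(\kappa))>0$. Each integral on the right is at least the same quantity with $\mu_m$ replaced by the Lebesgue-weighted mean of $e^{2i\phi}$ over $C_m$, since the mean minimizes the squared deviation. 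The resulting sum is $\pi$ times the uniform-axis deficit of the same partition. The Lebesgue-weighted sum is thus bounded below by $\pi D_M^{\mathrm{unif}}$, where $D_M^{\mathrm{unif}}>0$ is given by Theorem~\ref{thm:uniform-axis} and Eq.~\eqref{eq:uniform-deficit}. This yields the uniform bound $1-F_M^\star\ge \pi w_{\min}D_M^{\mathrm{unif}}>0$, which settles the strict inequality cleanly.

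\textbf{Lower bounds for $M\ge2$.} Two separate constructions give the two terms of the maximum. First, $F_M^\star\ge F_1^\star$ by monotonicity: a one-class partition is admissible among $M$-partitions by leaving the other classes empty, and empty classes contribute zero in Eq.~\eqref{eq:continuous-qfi}. Second, $F_M^\star\ge\cos^2(\pi/M)$ is exactly Proposition~\ref{prop:universal-axis-bound} applied to the von Mises measure. Taking the maximum of the two lower bounds and combining with the strict upper bound gives Eq.~\eqref{eq:vonmises-bounds}.
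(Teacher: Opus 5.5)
Your proposal is correct. For the one-flag value and the two lower bounds it matches the paper's proof: the trivial partition plus Proposition~\ref{prop:vonmises-moment}, then monotonicity in \(M\) plus Proposition~\ref{prop:universal-axis-bound}.

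The genuine difference is in the strict inequality. The paper stops at the qualitative step you give first. It argues that \(e^{2i\Phi}\) has infinite essential cardinality, so by Corollary~\ref{cor:essential-cardinality} no single measurable \(M\)-partition is lossless. That argument is short and needs only the triangle-inequality criterion. However, it rules out equality only partition by partition, so it tacitly treats \(F_M^\star\) as an attained maximum, which the paper establishes only for the uniform axis.

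Your comparison argument closes this point without any compactness or attainment discussion. It has three steps: bound \(w\geq e^{-\kappa}/[\pi I_0(\kappa)]\); replace the \(w\)-weighted cell means by Lebesgue cell means, which minimize the unweighted squared deviation; and apply the partition-wise upper bound proved in Theorem~\ref{thm:uniform-axis}. This yields the explicit uniform gap
\[
  1-F_M^\star(\kappa)\geq\frac{e^{-\kappa}}{I_0(\kappa)}
  \left\{1-\left[\frac{M}{\pi}\sin\!\left(\frac{\pi}{M}\right)\right]^2\right\}>0 .
\]

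The price is that you rely on the full uniform-axis optimum, i.e.\ the rearrangement and length-allocation lemmas, rather than a one-line criterion. The resulting gap also decays like \(e^{-2\kappa}\sqrt{\kappa}\), so it is quantitatively weak for concentrated distributions. In exchange, it applies verbatim to any axis density bounded away from zero, and it makes your preliminary appeal to essential cardinality redundant.
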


\begin{proof}
With one flag, Theorem~\ref{thm:conditional-variance} gives
\(F_1^\star=|\mathbb E(e^{2i\Phi})|^2\), so Eq.~\eqref{eq:vonmises-one-flag} follows from Proposition~\ref{prop:vonmises-moment}.  For finite \(\kappa\), the density~\eqref{eq:vonmises} is strictly positive and non-atomic on the half-circle.  The random variable \(e^{2i\Phi}\) therefore has infinite essential cardinality, and Corollary~\ref{cor:essential-cardinality} rules out zero loss for every finite \(M\).  The lower bounds in Eq.~\eqref{eq:vonmises-bounds} follow from monotonicity in the available number of flags and Proposition~\ref{prop:universal-axis-bound}.
\end{proof}

The exact optimum for general finite \(\kappa\) and \(M>1\) is not obtained here.  In particular, no assumption is made that globally optimal cells must be contiguous.  The limit \(\kappa\to\infty\) approaches a single point mass at \(\phi_0\), not a bimodal distribution, and Eq.~\eqref{eq:vonmises-one-flag} then tends to one.

%=========================================================================%

\section{Bloch-vector formula for qubit QFI}
\label{sec:bloch}

Let
\begin{equation}
  \rho=\frac{1}{2}(I+\boldsymbol s\cdot\boldsymbol\sigma),
  \qquad |\boldsymbol s|<1,
\end{equation}
and write its SLD as
\begin{equation}
  L=\alpha I+\boldsymbol\ell\cdot\boldsymbol\sigma.
\end{equation}
Using
\(
(\boldsymbol a\cdot\boldsymbol\sigma)(\boldsymbol b\cdot\boldsymbol\sigma)
=(\boldsymbol a\cdot\boldsymbol b)I+i(\boldsymbol a\times\boldsymbol b)\cdot\boldsymbol\sigma
\), the SLD equation gives
\begin{equation}
  \alpha+\boldsymbol s\cdot\boldsymbol\ell=0,
  \qquad
  \boldsymbol\ell+\alpha\boldsymbol s=\dot{\boldsymbol s}.
  \label{eq:bloch-sld-system}
\end{equation}
Solving,
\begin{equation}
  \alpha=-\frac{\boldsymbol s\cdot\dot{\boldsymbol s}}{1-|\boldsymbol s|^2},
  \qquad
  \boldsymbol\ell=\dot{\boldsymbol s}
  +\frac{\boldsymbol s\cdot\dot{\boldsymbol s}}{1-|\boldsymbol s|^2}\boldsymbol s.
  \label{eq:bloch-sld-solution}
\end{equation}
Moreover,
\begin{equation}
  L^2=(\alpha^2+|\boldsymbol\ell|^2)I
  +2\alpha\boldsymbol\ell\cdot\boldsymbol\sigma,
\end{equation}
and hence
\begin{align}
  F_Q(\rho)
  &=\Tr(\rho L^2)\notag\\
  &=\alpha^2+|\boldsymbol\ell|^2+2\alpha\boldsymbol s\cdot\boldsymbol\ell\notag\\
  &=|\dot{\boldsymbol s}|^2
  +\frac{(\boldsymbol s\cdot\dot{\boldsymbol s})^2}{1-|\boldsymbol s|^2}.
\end{align}
For a differentiable pure-state curve with \(|\boldsymbol s|=1\), normalization implies \(\boldsymbol s\cdot\dot{\boldsymbol s}=0\), and the finite boundary value is \(|\dot{\boldsymbol s}|^2\).

%=========================================================================%
\section{Circle-rearrangement lemma}
\label{sec:rearrangement}

\begin{lemma}
\label{lem:rearrangement}
For every measurable \(C\subset[0,\pi)\) of Lebesgue measure \(\ell\in[0,\pi]\),
\begin{equation}
  \left|\int_Ce^{2i\phi}\,d\phi\right|\leq\sin\ell.
  \label{eq:rearrangement-lemma}
\end{equation}
For \(0<\ell<\pi\), equality holds only when \(C\) is a contiguous interval modulo \(\pi\), up to null sets.
\end{lemma}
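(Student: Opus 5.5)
The plan is to reduce the statement to a one-dimensional rearrangement inequality for the function $\cos$ on the circle. Fix a measurable $C\subset[0,\pi)$ with measure $\ell$. If $\int_C e^{2i\phi}\,d\phi=0$ there is nothing to prove. Otherwise pick the phase $\psi$ so that
\begin{equation*}
  \left|\int_C e^{2i\phi}\,d\phi\right|
  =\int_C\cos\bigl(2(\phi-\psi)\bigr)\,d\phi .
\end{equation*}
Substituting $t=2\phi$ maps $[0,\pi)$ onto the full circle $\mathbb R/2\pi\mathbb Z$. It sends $C$ to a set $C'$ of measure $2\ell$ and halves the measure. The right-hand side therefore equals $\tfrac12\int_{C'}\cos(t-2\psi)\,dt$.

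The key step is the bathtub (layer-cake) principle. Among all measurable subsets of the circle of fixed measure $2\ell$, the integral of $g(t)=\cos(t-2\psi)$ is maximized by a superlevel set $\{g>c\}$ of that measure. This superlevel set is the arc of length $2\ell$ centered at $2\psi$. To prove it, let $A$ be that arc. Then $g\ge c$ on $A\setminus C'$ and $g\le c$ on $C'\setminus A$, and the two pieces have equal measure. Hence
\begin{equation*}
  \int_{A}g-\int_{C'}g=\int_{A\setminus C'}(g-c)+\int_{C'\setminus A}(c-g)\ge0 .
\end{equation*}
Evaluating on the arc gives $\tfrac12\int_{-\ell}^{\ell}\cos t\,dt=\sin\ell$, which is the claimed bound.

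For the equality statement with $0<\ell<\pi$, the threshold satisfies $c=\cos\ell\in(-1,1)$. The level set $\{g=c\}$ consists of two points and so has measure zero, while $g-c$ is strictly positive on the interior of $A$ and strictly negative off $\bar A$. Both integrals in the display above therefore vanish only if $A\setminus C'$ and $C'\setminus A$ are null. So $C'$ coincides with an arc up to null sets. Pulling back through $t=2\phi$ shows that $C$ is a contiguous interval modulo $\pi$, up to null sets. Conversely, any such interval attains $\sin\ell$ by the same computation, with $\psi$ taken as the interval's midpoint.

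I expect the main obstacle to be the equality case. Equality in $|\int|=\int\cos$ only fixes $\psi$ after the fact. Taking $\psi$ to be the phase of the integral, equality in the final bound forces equality in the bathtub step for that specific $g$, so the argument above suffices. The only care needed is the null-measure level set, which requires excluding the endpoints $\ell=0,\pi$.
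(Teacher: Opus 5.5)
Your proposal is correct and follows the paper's argument: align the phase, use the bathtub principle to show that the superlevel arc of $\cos(2\phi-\alpha)$ maximizes the integral among sets of measure $\ell$, evaluate to $\sin\ell$, and obtain the equality case from the fact that the level set has measure zero. The only difference is that you prove the steps the paper states in one line, through the substitution $t=2\phi$ and the explicit comparison of $A\setminus C'$ with $C'\setminus A$.
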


\begin{proof}
If the integral vanishes, the claim is immediate.  Otherwise choose \(\alpha\) as its complex argument.  Then
\begin{equation}
  \left|\int_Ce^{2i\phi}\,d\phi\right|
  =\int_C\cos(2\phi-\alpha)\,d\phi.
  \label{eq:phase-align}
\end{equation}
Among all measurable subsets of fixed measure \(\ell\), the integral of a real function is maximized by selecting the points at which that function is largest.  On a circle of period \(\pi\), the upper level sets of \(\cos(2\phi-\alpha)\) are intervals centered at a maximum.  Therefore
\begin{align}
  \int_C\cos(2\phi-\alpha)\,d\phi
  &\leq\int_{-\ell/2}^{\ell/2}\cos(2u)\,du\notag\\
  &=\sin\ell.
\end{align}
Strict ordering of the cosine away from its maximum gives the equality statement, except on level-set boundaries of zero measure.
\end{proof}

%=========================================================================%
\section{Optimal allocation of uniform cell lengths}
\label{sec:length-allocation}

Define
\begin{equation}
  g(x)=
  \begin{cases}
  \dfrac{\sin^2x}{x},&x>0,\\[6pt]
  0,&x=0.
  \end{cases}
  \label{eq:g-def}
\end{equation}

\begin{lemma}
\label{lem:length-allocation}
For every integer \(M\geq1\) and all \(\ell_m\geq0\) satisfying \(\sum_{m=1}^{M}\ell_m=\pi\),
\begin{equation}
  \sum_{m=1}^{M}g(\ell_m)
  \leq Mg\left(\frac{\pi}{M}\right).
  \label{eq:g-allocation}
\end{equation}
For \(M\geq2\), equality requires \(\ell_m=\pi/M\) for every \(m\).
\end{lemma}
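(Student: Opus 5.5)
The plan is to show that $g$ is concave on $[0,\pi]$ and then apply Jensen's inequality; the strict-equality clause will come from strict concavity on the relevant range. Concavity alone suffices for the inequality. The points $\ell_m$ lie in $[0,\pi]$ with mean $\pi/M$, so Jensen gives $\frac1M\sum_m g(\ell_m)\le g(\pi/M)$, which is Eq.~\eqref{eq:g-allocation}. Continuity at $0$ holds because $\sin^2x/x\to0$, so the convention $g(0)=0$ causes no trouble.

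The main step is to prove $g''<0$ on $(0,\pi)$. Writing $\sin^2x=(1-\cos2x)/2$ gives $g(x)=(1-\cos 2x)/(2x)$. A direct differentiation yields
\begin{equation}
  g''(x)=\frac{2\sin 2x}{x}\cdot\frac{-1}{x}\cdot 1+\frac{2\cos 2x}{x}+\frac{1-\cos2x}{x^3},
\end{equation}
up to the correct bookkeeping. I would rather clear denominators and study
\begin{equation}
  h(x)=x^3g''(x)=2x^2\cos2x-2x\sin2x+1-\cos2x .
\end{equation}
Substituting $t=2x\in(0,2\pi)$ gives $h=\tfrac12 t^2\cos t-t\sin t+1-\cos t=:k(t)$. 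The goal is then $k(t)<0$ on $(0,2\pi)$.

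To establish this, note $k(0)=0$ and compute
\begin{equation}
  k'(t)=t\cos t-\tfrac12t^2\sin t-\sin t-t\cos t+\sin t=-\tfrac12t^2\sin t .
\end{equation}
This derivative is negative on $(0,\pi)$ and positive on $(\pi,2\pi)$. Hence $k$ decreases from $0$ on $(0,\pi]$ and then increases up to $k(2\pi)=2\pi^2+0+1-1=2\pi^2>0$. So $k$ is \emph{not} negative on the whole interval, and $g$ is concave only on $[0,x_*]$, where $x_*=t_*/2\in(\pi/2,\pi)$ and $t_*$ is the unique zero of $k$ in $(\pi,2\pi)$. I expect this to be the main obstacle: plain Jensen on $[0,\pi]$ fails, because $g$ vanishes at both $0$ and $\pi$ and is convex near $\pi$.

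To handle this, I would first reduce to $M\ge2$; the case $M=1$ is trivial since then $\ell_1=\pi$. For $M\ge2$ the target point satisfies $\pi/M\le\pi/2<x_*$. Next I would replace $g$ by its least concave majorant $\hat g$ on $[0,\pi]$. Because $g(\pi)=0$ and $g$ is positive and decreasing after its maximum, $\hat g$ coincides with $g$ on $[0,\xi]$ for some $\xi\ge$ the maximizer $x_m$ of $g$; in fact $\hat g=g$ up to $x_m$ and is constant $g(x_m)$ thereafter, which is concave. Here $x_m$ solves $\tan x=2x$, so $x_m\approx1.166<\pi/2$, and $g$ is concave there because $x_m<x_*$.

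Jensen for $\hat g$ then gives $\sum g(\ell_m)\le\sum\hat g(\ell_m)\le M\hat g(\pi/M)$. For $M\ge3$ we have $\pi/M<x_m$, so $\hat g(\pi/M)=g(\pi/M)$. The case $M=2$, with $\pi/2>x_m$, needs a direct check: maximize $g(\ell)+g(\pi-\ell)=\sin^2\ell\,\pi/[\ell(\pi-\ell)]$. By symmetry and AM--GM on $\ell(\pi-\ell)$ versus $\sin^2\ell$, the maximum should be at $\ell=\pi/2$; I would verify this via $\sin\ell\le$ the appropriate bound.

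For equality, strict concavity of $g$ on $(0,x_*)$ forces all $\ell_m$ to be equal whenever they lie in the strictly concave region. Points beyond $x_m$ give strict inequality $g<\hat g$ unless they sit on the flat part, and that would contradict $\hat g(\pi/M)=g(\pi/M)$ attaining the strict Jensen bound. This yields $\ell_m=\pi/M$.
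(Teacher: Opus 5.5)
Your treatment of $M=1$ and $M\ge3$ is sound. It is essentially the paper's argument in different clothing. The paper majorizes $g$ on all of $[0,\pi]$ by its tangent line at $a=\pi/M$, using three facts:
\begin{itemize}
\item concavity on $[0,\pi/2]$, from the same computation $G'(t)=-\tfrac12t^2\sin t$;
\item $g'(a)\ge0$, from $\tan(\pi/3)<2\pi/3$;
\item monotone decrease of $g$ on $[\pi/2,\pi]$.
\end{itemize}
You instead use the concave majorant $\tilde g$ equal to $g$ up to its maximizer $x_m$ and constant afterwards. The paper's tangent line is just a supporting line of your $\tilde g$ at $a$.

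Three small corrections:
\begin{itemize}
\item $\tilde g$ is a concave majorant but not the \emph{least} one, since the least concave majorant on $[0,\pi]$ must equal $g(\pi)=0$ at $\pi$. This is harmless, because any concave majorant touching $g$ at $\pi/M$ suffices.
\item The rigorous content of ``$\pi/M<x_m$ for $M\ge3$'' is simply $g'(\pi/3)>0$, i.e.\ $\sqrt3<2\pi/3$.
\item Your equality argument is garbled but repairable. Equality forces $g(\ell_m)=\tilde g(\ell_m)$, hence every $\ell_m\le x_m$ because $g<g(x_m)$ on $(x_m,\pi]$. Strict concavity of $g$ on $[0,x_m]$ then forces all $\ell_m=\pi/M$.
\end{itemize}

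The genuine gap is $M=2$, which is exactly where the lemma has nontrivial content. Here $\pi/2>x_m$, so $\tilde g(\pi/2)=g(x_m)>g(\pi/2)$ and your majorant no longer touches $g$. You correctly reduce the case to $\sin^2\ell\,\pi/[\ell(\pi-\ell)]\le4/\pi$, i.e.\ $\pi^2\sin^2\ell\le4\ell(\pi-\ell)$ on $[0,\pi]$, but you do not prove it. The suggested tool also points the wrong way: AM--GM gives $\ell(\pi-\ell)\le\pi^2/4$, an upper bound on the denominator and hence a lower bound on the quantity you must bound above. Crude bounds such as $\sin\ell\le\min\{\ell,\pi-\ell\}$ only give $\sin^2\ell\le\ell(\pi-\ell)$, which is off by the factor $4/\pi^2$.

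The missing ingredient is the parabolic bound $\sin\ell\le4\ell(\pi-\ell)/\pi^2$ on $[0,\pi]$. It can be proved by elementary calculus. The function $h(\ell)=4\ell(\pi-\ell)/\pi^2-\sin\ell$ is symmetric about $\pi/2$ and has $h(0)=h(\pi/2)=h'(\pi/2)=0$. On $[0,\pi/2]$, $h''=\sin\ell-8/\pi^2$ changes sign once, so $h$ is concave then convex there, and both pieces are nonnegative.

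Combining the parabolic bound with $\sin^2\ell\le\sin\ell$ gives the needed inequality. Since $\sin^2\ell<\sin\ell$ for $\ell\in(0,\pi)\setminus\{\pi/2\}$, and $g(0)+g(\pi)=0$, the equality clause follows as well. This is precisely the paper's route: it shows $g\le4(\pi-x)/\pi^2$ on $[\pi/2,\pi]$, the tangent at $\pi/2$, which happens to pass through $(\pi,0)$. Equivalently, this bound is what identifies the true least concave majorant as $g$ on $[0,\pi/2]$ followed by that tangent line, which would let your Jensen argument cover $M=2$ uniformly.
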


\begin{proof}
The case \(M=1\) is immediate.  First consider \(M\geq3\) and set \(a=\pi/M\leq\pi/3\).  Direct differentiation gives
\begin{equation}
  x^3g''(x)
  =1-\cos(2x)-2x\sin(2x)+2x^2\cos(2x).
  \label{eq:g-second}
\end{equation}
With \(t=2x\), the right-hand side is
\begin{equation}
  G(t)=1-\cos t-t\sin t+\frac{t^2}{2}\cos t,
\end{equation}
and
\begin{equation}
  G'(t)=-\frac{t^2}{2}\sin t.
\end{equation}
Since \(G(0)=0\), it follows that \(g''(x)<0\) on \((0,\pi/2]\).  Thus \(g\) is strictly concave there.

Next,
\begin{equation}
  g'(x)=\frac{\sin x}{x^2}\left(2x\cos x-\sin x\right).
  \label{eq:g-prime}
\end{equation}
The function \(\tan x/x\) is increasing on \((0,\pi/2)\), and
\begin{equation}
  \frac{\tan(\pi/3)}{\pi/3}
  =\frac{3\sqrt{3}}{\pi}<2.
\end{equation}
Hence \(g'(a)\geq0\).  Let
\begin{equation}
  T_a(x)=g(a)+g'(a)(x-a)
\end{equation}
be the tangent line at \(a\).  Strict concavity gives \(g(x)\leq T_a(x)\) on \([0,\pi/2]\).  For \(x\in[\pi/2,\pi]\), Eq.~\eqref{eq:g-prime} is negative, so \(g(x)\leq g(\pi/2)\).  At the same time,
\begin{equation}
  T_a(x)\geq T_a(\pi/2)\geq g(\pi/2),
\end{equation}
where the first inequality uses \(g'(a)\geq0\) and the second uses the tangent bound on \([0,\pi/2]\).  Therefore \(g(x)\leq T_a(x)\) on all of \([0,\pi]\).  Summing and using \(\sum_m\ell_m=Ma\),
\begin{equation}
  \sum_mg(\ell_m)
  \leq\sum_mT_a(\ell_m)
  =Mg(a).
\end{equation}
Strictness implies equality only at \(\ell_m=a\).

It remains to prove \(M=2\), where \(a=\pi/2\).  Concavity gives the tangent bound for \(0\leq x\leq\pi/2\).  The tangent is
\begin{equation}
  T_a(x)=\frac{4(\pi-x)}{\pi^2}.
  \label{eq:M2-tangent}
\end{equation}
For \(x\in[\pi/2,\pi]\), put \(y=\pi-x\in[0,\pi/2]\).  The elementary parabolic sine bound
\begin{equation}
  \sin y\leq\frac{4y(\pi-y)}{\pi^2}
  \label{eq:parabolic-sine}
\end{equation}
implies, because its right-hand side lies in \([0,1]\),
\begin{equation}
  \sin^2y\leq\frac{4y(\pi-y)}{\pi^2}.
\end{equation}
Consequently,
\begin{equation}
  g(x)=\frac{\sin^2y}{\pi-y}
  \leq\frac{4y}{\pi^2}=T_a(x).
\end{equation}
Summing the tangent bounds for \(\ell_1+\ell_2=\pi\) proves the result.  Equality requires \(\ell_1=\ell_2=\pi/2\).
\end{proof}

%=========================================================================%
\section{Normalized residual and the exact extended-convexity gap}
\label{sec:equiv}

Let \(\tau_a=q_a\rho_a\) and
\(\tau_m=p_m\sigma_m\), with
\(p_m=\sum_{a\in C_m}q_a\).  We omit zero-weight branches and set
\begin{equation}
  w_{a|m}=\frac{q_a}{p_m},
  \qquad
  \delta_{a|m}
  =\frac{\dot q_a}{q_a}-\frac{\dot p_m}{p_m}
  =\partial_\theta\ln w_{a|m}.
  \label{eq:supp-conditional-score}
\end{equation}
Let \(L_a\) and \(L_m\) be normalized SLDs of \(\rho_a\) and \(\sigma_m\), respectively.  On the corresponding supports,
\begin{equation}
  S_a=\frac{\dot q_a}{q_a}I+L_a,
  \qquad
  T_m=\frac{\dot p_m}{p_m}I+L_m.
  \label{eq:supp-score-split}
\end{equation}

\begin{proposition}[Conditional classical Fisher decomposition]
\label{prop:classical-cancel}
For every nonempty class \(C_m\),
\begin{align}
  p_mF_{\mathrm C}(\boldsymbol w_m)
  &=\sum_{a\in C_m}q_a\delta_{a|m}^2
  \label{eq:conditional-classical-FI}\\
  &=\sum_{a\in C_m}\frac{\dot q_a^2}{q_a}
  -\frac{\dot p_m^2}{p_m},
  \label{eq:cf-identity}
\end{align}
where
\(F_{\mathrm C}(\boldsymbol w_m)=
\sum_{a\in C_m}\dot w_{a|m}^{\,2}/w_{a|m}\).
\end{proposition}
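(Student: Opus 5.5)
We prove the two equalities in turn, working directly from the definitions $w_{a|m}=q_a/p_m$ and $\delta_{a|m}=\partial_\theta\ln w_{a|m}$. Throughout, the class $C_m$ is nonempty and zero-weight branches have been discarded. Hence every $q_a>0$ and $p_m=\sum_{a\in C_m}q_a>0$, so all logarithmic derivatives are defined.

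\textbf{First equality, Eq.~\eqref{eq:conditional-classical-FI}.} Since $\dot w_{a|m}=w_{a|m}\,\partial_\theta\ln w_{a|m}=w_{a|m}\delta_{a|m}$, we have
\[
\dot w_{a|m}^{\,2}/w_{a|m}=w_{a|m}\delta_{a|m}^2 .
\]
Summing over $C_m$ and multiplying by $p_m$ gives
\[
p_mF_{\mathrm C}(\boldsymbol w_m)=\sum_{a\in C_m}p_m w_{a|m}\delta_{a|m}^2=\sum_{a\in C_m}q_a\delta_{a|m}^2 .
\]

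\textbf{Second equality, Eq.~\eqref{eq:cf-identity}.} Expand the square using $\delta_{a|m}=\dot q_a/q_a-\dot p_m/p_m$:
\[
\sum_{a\in C_m}q_a\delta_{a|m}^2=\sum_{a\in C_m}\frac{\dot q_a^2}{q_a}-2\frac{\dot p_m}{p_m}\sum_{a\in C_m}\dot q_a+\frac{\dot p_m^2}{p_m^2}\sum_{a\in C_m}q_a .
\]
Parameter differentiation commutes with the finite class sum, so $\sum_{a\in C_m}\dot q_a=\dot p_m$; together with $\sum_{a\in C_m}q_a=p_m$, the last two terms combine to $-2\dot p_m^2/p_m+\dot p_m^2/p_m=-\dot p_m^2/p_m$. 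This proves Eq.~\eqref{eq:cf-identity}.

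This is the classical chain rule for Fisher information, splitting the contribution of the joint label into that of the class label plus the conditional label information. The only point needing care is that all weights are strictly positive, which the exclusion of zero-weight branches ensures.
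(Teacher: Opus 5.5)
Your proof is correct and follows the paper's argument essentially step for step. The identity $\dot w_{a|m}=w_{a|m}\delta_{a|m}$ gives the first equality, and expanding the square using $\sum_{a\in C_m}q_a=p_m$ and $\sum_{a\in C_m}\dot q_a=\dot p_m$ gives the second.
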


\begin{proof}
Since \(\dot w_{a|m}=w_{a|m}\delta_{a|m}\),
\begin{equation}
  p_mF_{\mathrm C}(\boldsymbol w_m)
  =p_m\sum_{a\in C_m}w_{a|m}\delta_{a|m}^2
  =\sum_{a\in C_m}q_a\delta_{a|m}^2.
\end{equation}
Expanding the square and using
\(\sum_{a\in C_m}q_a=p_m\) and
\(\sum_{a\in C_m}\dot q_a=\dot p_m\) gives
\begin{align}
  \sum_{a\in C_m}q_a\delta_{a|m}^2
  &=\sum_{a\in C_m}\frac{\dot q_a^2}{q_a}
  -2\frac{\dot p_m}{p_m}\sum_{a\in C_m}\dot q_a
  +\frac{\dot p_m^2}{p_m^2}\sum_{a\in C_m}q_a\notag\\
  &=\sum_{a\in C_m}\frac{\dot q_a^2}{q_a}
  -\frac{\dot p_m^2}{p_m}.
\end{align}
\end{proof}

\begin{proposition}[Exact normalized residual]
\label{prop:normalized-residual}
The classwise residual in Theorem~1 of the main text is
\begin{equation}
  \Delta_m
  =\sum_{a\in C_m}q_a\Tr\!\left\{
  \rho_a\left[\delta_{a|m}I+L_a-L_m\right]^2
  \right\}.
  \label{eq:supp-normalized-residual}
\end{equation}
When all class-conditional weights are locally constant,
\(\dot w_{a|m}=0\), the probability-score term vanishes and Eq.~\eqref{eq:supp-normalized-residual} reduces to
\begin{equation}
  \Delta_m
  =\sum_{a\in C_m}q_a
  \Tr[\rho_a(L_a-L_m)^2].
  \label{eq:conditional-weight-constant-residual}
\end{equation}
\end{proposition}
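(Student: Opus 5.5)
Substitute the score split of Eq.~\eqref{eq:supp-score-split} directly into the classwise summand of Theorem~1 of the main text. For \(a\in C_m\) and on the supports, the residual operator is
\[
S_a-T_m=\Bigl(\frac{\dot q_a}{q_a}-\frac{\dot p_m}{p_m}\Bigr)I+L_a-L_m=\delta_{a|m}I+L_a-L_m ,
\]
using the definition of \(\delta_{a|m}\) in Eq.~\eqref{eq:supp-conditional-score}. With \(\tau_a=q_a\rho_a\), the summand \(\Tr[\tau_a(S_a-T_m)^2]\) becomes \(q_a\Tr\{\rho_a[\delta_{a|m}I+L_a-L_m]^2\}\). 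Summing over \(a\in C_m\) then gives Eq.~\eqref{eq:supp-normalized-residual}.

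The step that needs care is support handling when \(\rho_a\) or \(\sigma_m\) is rank deficient. The split of Eq.~\eqref{eq:supp-score-split} holds only on the respective supports, while the summand involves \(\tau_a(S_a-T_m)^2\). I will handle this by working with \(\|(S_a-T_m)\sqrt{\tau_a}\|_2^2\), which depends only on the action of \(S_a-T_m\) on \(\operatorname{supp}\rho_a\). Two facts are needed.

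\begin{enumerate}
\item Since \(\tau_a\le\tau_m\), we have \(\operatorname{supp}\rho_a\subseteq\operatorname{supp}\sigma_m\). Hence \(T_m\sqrt{\tau_a}\) involves only the on-support part of \(T_m\), provided \(T_m\) maps \(\operatorname{supp}\sigma_m\) into itself. This holds for the canonical (block-diagonal) SLD choice.
\item By the Pythagorean center identity (Proposition~\ref{prop:sld-center}), the residual does not depend on which SLD representative is chosen for \(T_m\). The same is true for \(S_a\), since different choices differ by operators annihilating \(\sqrt{\tau_a}\).
\end{enumerate}

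Because of the second fact, I may fix the canonical representatives without loss of generality. I expect this support bookkeeping to be the only real obstacle; the algebra itself is immediate.

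For the special case, \(\dot w_{a|m}=w_{a|m}\delta_{a|m}\) and \(w_{a|m}>0\), so \(\dot w_{a|m}=0\) forces \(\delta_{a|m}=0\). The scalar term then drops out, which yields Eq.~\eqref{eq:conditional-weight-constant-residual}.

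As a consistency check, I would also expand the square in general. This gives \(q_a\delta_{a|m}^2+q_a\Tr[\rho_a(L_a-L_m)^2]\) plus the cross term \(2q_a\delta_{a|m}\Tr[\rho_a(L_a-L_m)]\). Since \(\Tr(\rho_aL_a)=0\), the cross term reduces to \(-2q_a\delta_{a|m}\Tr(\rho_aL_m)\). By Proposition~\ref{prop:classical-cancel}, the first term sums to \(p_mF_{\mathrm C}(\boldsymbol w_m)\). The cross term does not vanish in general, which is why the proposition states the residual in the unexpanded squared form rather than as a clean sum.
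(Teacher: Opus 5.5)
Your proposal is correct and matches the paper's proof. The paper substitutes Eq.~\eqref{eq:supp-score-split} to obtain \(S_a-T_m=\delta_{a|m}I+L_a-L_m\), inserts this into the classwise summand of Theorem~1, and notes that \(\dot w_{a|m}=0\) forces \(\delta_{a|m}=0\). One correction to your optional support remark: the canonical SLD of Eq.~\eqref{eq:lyap-solution} is not block diagonal with respect to support and kernel, since its support--kernel entries \(2(U^\dagger\dot\tau U)_{ij}/\lambda_i\) generally survive (for example, for pure-state families). This is harmless, because \((\dot q_a/q_a)I+L_a\) and \((\dot p_m/p_m)I+L_m\) are themselves exact SLDs of \(\tau_a\) and \(\tau_m\) for any choice of \(L_a\) and \(L_m\), so your representative-independence fact alone makes the identity hold with no support restriction.
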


\begin{proof}
Equation~\eqref{eq:supp-score-split} gives
\begin{equation}
  S_a-T_m=\delta_{a|m}I+L_a-L_m.
\end{equation}
Substitution into the class contribution
\(\sum_{a\in C_m}\Tr[\tau_a(S_a-T_m)^2]\)
proves Eq.~\eqref{eq:supp-normalized-residual}.  If
\(\dot w_{a|m}=0\), then \(\delta_{a|m}=0\) for every positive-weight branch, giving Eq.~\eqref{eq:conditional-weight-constant-residual}.
\end{proof}

The following corollary gives the exact remainder in the extended-convexity inequality of Ref.~\cite{alipour2015extended}.

\begin{corollary}[Exact gap in extended convexity]
\label{cor:extended-convexity-gap}
For every nonempty class,
\begin{equation}
  \Delta_m=p_m\left[
  F_{\mathrm C}(\boldsymbol w_m)
  +\sum_{a\in C_m}w_{a|m}F_Q(\rho_a)
  -F_Q(\sigma_m)\right].
  \label{eq:supp-extended-gap}
\end{equation}
Consequently, the extended-convexity inequality
\begin{equation}
  F_Q(\sigma_m)
  \leq F_{\mathrm C}(\boldsymbol w_m)
  +\sum_{a\in C_m}w_{a|m}F_Q(\rho_a)
\end{equation}
is saturated if and only if the syndrome compression is QFI-lossless within that class.
\end{corollary}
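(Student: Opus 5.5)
The plan is to start from Proposition~\ref{prop:normalized-residual}, which writes the classwise residual as
\[
\Delta_m=\sum_{a\in C_m}q_a\Tr\{\rho_a[\delta_{a|m}I+L_a-L_m]^2\},
\]
and expand the square into three kinds of terms.

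The first kind is the scalar term, $\sum_a q_a\delta_{a|m}^2$. By Proposition~\ref{prop:classical-cancel} this equals $p_mF_{\mathrm C}(\boldsymbol w_m)$.

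The second kind is the cross term,
\[
2\sum_a q_a\delta_{a|m}\Tr[\rho_a(L_a-L_m)].
\]
Its first piece vanishes because $\Tr(\rho_aL_a)=\Tr\dot\rho_a=0$. For its second piece I will rewrite $q_a\delta_{a|m}=p_m\dot w_{a|m}$, which gives
\[
-2p_m\Tr\Big[\Big(\sum_a\dot w_{a|m}\rho_a\Big)L_m\Big].
\]

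The third kind is the operator-square term,
\[
\sum_a q_a\Tr[\rho_a(L_a-L_m)^2]=\sum_aq_aF_Q(\rho_a)-2\sum_aq_a\Tr(\rho_aL_aL_m)+\sum_aq_a\Tr(\rho_aL_m^2).
\]
Symmetrizing with the normalized SLD equation turns the middle piece into $-2\sum_aq_a\Tr(\dot\rho_aL_m)$. The last piece is $p_m\Tr(\sigma_mL_m^2)=p_mF_Q(\sigma_m)$, since $\sigma_m=\sum_aw_{a|m}\rho_a$.

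The key step is to combine the two $L_m$-linear contributions. Differentiating $\sigma_m=\sum_aw_{a|m}\rho_a$ gives
\[
\dot\sigma_m=\sum_a\dot w_{a|m}\rho_a+\sum_aw_{a|m}\dot\rho_a.
\]
So the cross term together with the middle piece equals $-2p_m\Tr(\dot\sigma_mL_m)=-2p_mF_Q(\sigma_m)$, using the SLD equation for $\sigma_m$ once more. Adding the remaining $+p_mF_Q(\sigma_m)$ leaves a net $-p_mF_Q(\sigma_m)$. Collecting all terms then yields Eq.~\eqref{eq:supp-extended-gap}.

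The delicate point is support: for rank-deficient $\rho_a$ or $\sigma_m$, the SLDs are only defined supportwise. Because $\mathrm{supp}\,\rho_a\subseteq\mathrm{supp}\,\sigma_m$, every trace identity used above only involves operators on these supports, so the same argument should go through. If that fails, I would fall back to assuming faithfulness, as the main text does.

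For the saturation statement, $\Delta_m\ge0$ by Theorem~\ref{thm:exact-loss}. Hence Eq.~\eqref{eq:supp-extended-gap} gives the extended-convexity inequality, with equality exactly when $\Delta_m=0$. By Corollary~\ref{cor:lossless}, $\Delta_m=0$ is the same as lossless compression within class $C_m$.
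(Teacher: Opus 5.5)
Your argument is correct, but it takes a different route from the paper. The paper never expands the normalized residual. It uses the classwise content of Theorem~\ref{thm:exact-loss}: $\Delta_m$ equals the fine QFI carried by the branches in $C_m$ minus the QFI of the coarse block $\tau_m$. It evaluates these two quantities with Proposition~\ref{prop:flagged-qfi} as $\sum_{a\in C_m}\dot q_a^2/q_a+\sum_{a\in C_m}q_aF_Q(\rho_a)$ and $\dot p_m^2/p_m+p_mF_Q(\sigma_m)$. It then converts the classical difference into $p_mF_{\mathrm C}(\boldsymbol w_m)$ using the second line of Proposition~\ref{prop:classical-cancel}. That is a few lines of bookkeeping on identities already proved, and all operator cancellations, including any rank issues, are absorbed into Theorem~\ref{thm:exact-loss}.

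You instead start from Proposition~\ref{prop:normalized-residual}, use the first line of Proposition~\ref{prop:classical-cancel}, and redo the Pythagorean cancellation in normalized variables. Your key step is that the classical--quantum cross term $-2p_m\Tr[(\sum_a\dot w_{a|m}\rho_a)L_m]$ and the quantum cross term $-2\sum_aq_a\Tr(\dot\rho_aL_m)$ combine into $-2p_m\Tr(\dot\sigma_mL_m)$. This is longer but self-contained. It also makes explicit something the paper's route does not display: the mixed term $2\sum_aq_a\delta_{a|m}\Tr[\rho_a(L_a-L_m)]$ is generally nonzero. So $\Delta_m$ is not, in general, equal to $p_mF_{\mathrm C}(\boldsymbol w_m)+\sum_aq_a\Tr[\rho_a(L_a-L_m)^2]$, and the gap formula appears only after that mixed term is absorbed through $\dot\sigma_m$.

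Two minor points.
\begin{enumerate}
\item The middle piece of your operator-square term should read $-\sum_aq_a\Tr(\rho_a\{L_a,L_m\})$. The quantity $\Tr(\rho_aL_aL_m)$ is complex in general and only its real part is meant, although your symmetrized result $-2\sum_aq_a\Tr(\dot\rho_aL_m)$ is correct.
\item You need no faithfulness fallback. If $S_a$ and $T_m$ satisfy the SLD equations as operator identities, so do $L_a=S_a-(\dot q_a/q_a)I$ and $L_m=T_m-(\dot p_m/p_m)I$. Every step you use is a trace identity that follows from those equations.
\end{enumerate}
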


\begin{proof}
Proposition~\ref{prop:flagged-qfi} gives the fine contribution of class \(C_m\) as
\begin{equation}
  \sum_{a\in C_m}\frac{\dot q_a^2}{q_a}
  +\sum_{a\in C_m}q_aF_Q(\rho_a),
\end{equation}
and the coarse contribution as
\begin{equation}
  \frac{\dot p_m^2}{p_m}+p_mF_Q(\sigma_m).
\end{equation}
Their difference, together with Eq.~\eqref{eq:cf-identity}, is exactly Eq.~\eqref{eq:supp-extended-gap}.  Theorem~1 of the main text shows that this difference is nonnegative and vanishes exactly under the classwise support condition in Eq.~(12) of the main text.
\end{proof}

\begin{remark}[Why the probability score cannot be dropped]
Let \(\rho>0\) be fixed and consider two branches
\begin{equation}
  \tau_1=q(\theta)\rho,
  \qquad
  \tau_2=[1-q(\theta)]\rho,
\end{equation}
merged into one coarse class, with \(0<q<1\) and \(\dot q\neq0\).  Every normalized-state SLD is zero, whereas the fine flag carries classical Fisher information
\begin{equation}
  \Delta
  =\frac{\dot q^2}{q}+\frac{\dot q^2}{1-q}
  =\frac{\dot q^2}{q(1-q)}>0.
\end{equation}
Thus a residual involving only \(L_a-L_m\) would incorrectly vanish; the conditional probability score in Eq.~\eqref{eq:supp-normalized-residual} is essential.
\end{remark}

%=========================================================================%

\section{Faithful noncommuting recovery on a sufficient subsystem}
\label{sec:noncommuting-subsystem}
%=========================================================================%

\begin{proof}[Proof of Proposition~\ref{prop:noncommuting-sufficient-subsystem}]
Let \(\rho_{\boldsymbol\vartheta}>0\) be any parameterized family on a finite-dimensional system \(S\), and introduce an ancillary qubit \(B\) in the faithful state
\begin{equation}
  \sigma_B=
  \begin{pmatrix}
    2/3&0\\
    0&1/3
  \end{pmatrix}.
\end{equation}
Define
\begin{equation}
  K_1=
  \begin{pmatrix}
    2/\sqrt5&0\\
    0&\sqrt{3/5}
  \end{pmatrix},
  \qquad
  K_2=
  \begin{pmatrix}
    0&\sqrt{2/5}\\
    1/\sqrt5&0
  \end{pmatrix},
\end{equation}
and set
\begin{equation}
  A_\pm=\frac{K_1\pm K_2}{\sqrt2}.
  \label{eq:subsystem-example-kraus}
\end{equation}
A direct calculation gives
\begin{equation}
  \sum_{\eta=\pm}A_\eta^\dagger A_\eta=I_B,
  \qquad
  \sum_{\eta=\pm}A_\eta\sigma_BA_\eta^\dagger=\sigma_B.
  \label{eq:subsystem-example-fixed-state}
\end{equation}
Hence the channel
\begin{equation}
  \mathcal E_B(Y)=\sum_{\eta=\pm}A_\eta YA_\eta^\dagger
\end{equation}
is trace preserving and fixes \(\sigma_B\).

Take the coarse statistical family
\begin{equation}
  X_{\boldsymbol\vartheta}
  =\rho_{\boldsymbol\vartheta}\otimes\sigma_B
\end{equation}
and the two fine blocks
\begin{align}
  \tau_{\eta,\boldsymbol\vartheta}
  &=(I_S\otimes A_\eta)
  X_{\boldsymbol\vartheta}
  (I_S\otimes A_\eta^\dagger)
  =\rho_{\boldsymbol\vartheta}\otimes B_\eta,
  \notag\\
  B_\eta&=A_\eta\sigma_BA_\eta^\dagger.
  \label{eq:subsystem-example-blocks}
\end{align}
Both \(A_+\) and \(A_-\) are invertible because
\begin{equation}
  \det A_\pm=\frac{2\sqrt3-\sqrt2}{10}\neq0,
\end{equation}
so the fine blocks are faithful.  Equation~\eqref{eq:subsystem-example-fixed-state} shows that forgetting the label gives back \(X_{\boldsymbol\vartheta}\).  Conversely,
\begin{equation}
  \mathcal R(Y)
  =\bigoplus_{\eta=\pm}
  (I_S\otimes A_\eta)Y(I_S\otimes A_\eta^\dagger)
  \label{eq:subsystem-example-recovery}
\end{equation}
is a parameter-independent channel and reconstructs the complete fine flagged family from \(X_{\boldsymbol\vartheta}\).

The ancillary blocks are
\begin{equation}
  B_\pm=
  \begin{pmatrix}
    1/3&\displaystyle\pm\frac{4+\sqrt6}{30}\\[4pt]
    \displaystyle\pm\frac{4+\sqrt6}{30}&1/6
  \end{pmatrix}.
  \label{eq:subsystem-example-noncommuting-blocks}
\end{equation}
They have trace \(1/2\), are positive definite, and do not commute because their off-diagonal entries have opposite signs while their diagonal entries are unequal.  Thus the normalized conditional states \(2B_+\) and \(2B_-\), and hence the normalized full branches, are noncommuting.

Finally, the coarse family varies only on subsystem \(S\): its complex linear span is contained in
\begin{equation}
  \mathcal L(\mathcal H_S)\otimes\operatorname{span}\{\sigma_B\},
\end{equation}
which is a proper subspace of
\(\mathcal L(\mathcal H_S\otimes\mathbb C^2)\).  The example therefore lies outside the informationally complete regime of the complete-model obstruction in Sec.~IV of the main text and realizes recovery through a proper sufficient subsystem.
\end{proof}

\section{Proof of conditioned approximate score rigidity}
\label{sec:conditioned-rigidity}

Approximate correctability of an algebra is naturally expressed through the complementary channel and amplified commutators~\cite{beny2009conditions}.  We first prove that Definition~\ref{def:cb-score-conditioning} gives a finite constant that promotes control of a finite score-generator tuple to the required amplified full-algebra control.

\begin{lemma}[Finiteness of the complete commutant constant]
\label{lem:finite-score-conditioning}
If \(C^*(I,L_1,\ldots,L_p)=M_d\), then
\(\kappa_{\mathrm{sc}}^{\mathrm{cb}}(\boldsymbol L)<\infty\).
\end{lemma}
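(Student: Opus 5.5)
The plan is to show that the supremum over all amplification levels $n$ is controlled by a single finite-dimensional quantity, independent of $n$. The natural route is to express the commutator map on $M_n\otimes M_d$ as $\operatorname{id}_n\otimes\delta$, where $\delta$ is a fixed linear map on $M_d$:
\begin{equation}
  \delta:M_d\to M_d^{\oplus p},
  \qquad
  \delta(X)=([X,L_1],\ldots,[X,L_p]).
\end{equation}
Since $C^*(I,L_1,\ldots,L_p)=M_d$, the kernel of $\delta$ is the commutant of the generated algebra. That commutant is $\mathbb C I_d$.

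Amplification then behaves cleanly. The kernel of $\operatorname{id}_n\otimes\delta$ is $M_n\otimes\mathbb C I_d$, the commutant of $I_n\otimes M_d$. So the denominator of the defining ratio is nonzero for $K\notin M_n\otimes I_d$. Moreover, $\mathcal P_n=\operatorname{id}_n\otimes P$ with $P(X)=(\Tr X/d)I_d$, which is the trace-preserving projection onto $\ker\delta$. Hence $K-\mathcal P_n(K)$ depends only on $(\operatorname{id}_n\otimes\delta)(K)$, and the task reduces to bounding $\|(\operatorname{id}_n\otimes\Phi)(Y)\|_\infty$ uniformly in $n$.

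First, build a right inverse. Let $\Phi$ be a linear map defined on the range of $\delta$ with
\begin{equation}
  \Phi\circ\delta=\operatorname{id}-P .
\end{equation}
Such a map exists because $\delta$ restricted to the traceless matrices is injective. Extend $\Phi$ linearly to all of $M_d^{\oplus p}$, for instance through an orthogonal projection onto the range.

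Next, write $\Phi$ in coordinates. Expand it in a finite basis as
\begin{equation}
  \Phi(Y_1,\ldots,Y_p)=\sum_{\mu}\sum_j A_{\mu j}Y_\mu B_{\mu j}.
\end{equation}
This is possible because every linear map $M_d\to M_d$ is a finite sum of the form $X\mapsto\sum_j AXB$.

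Finally, bound the amplification. For this elementary-operator form,
\begin{equation}
  (\operatorname{id}_n\otimes\Phi)(Y)
  =\sum_{\mu,j}(I_n\otimes A_{\mu j})\,Y_\mu\,(I_n\otimes B_{\mu j}),
\end{equation}
so
\begin{equation}
  \|(\operatorname{id}_n\otimes\Phi)(Y)\|_\infty
  \le\Bigl(\sum_{\mu,j}\|A_{\mu j}\|_\infty\|B_{\mu j}\|_\infty\Bigr)\max_\mu\|Y_\mu\|_\infty .
\end{equation}
The constant in parentheses is independent of $n$. Applying this with $Y=(\operatorname{id}_n\otimes\delta)(K)$ gives
\begin{equation}
  \kappa_{\mathrm{sc}}^{\mathrm{cb}}(\boldsymbol L)\le\sum_{\mu,j}\|A_{\mu j}\|_\infty\|B_{\mu j}\|_\infty<\infty .
\end{equation}

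The main obstacle is uniformity in $n$: the operator norm is not stable under amplification for a general map. Writing $\Phi$ explicitly as a finite sum of two-sided multiplications is exactly what makes the bound complete, because each term's cb-norm is at most $\|A\|_\infty\|B\|_\infty$. The remaining care is to check two things: that $\mathcal P_n$ really equals $\operatorname{id}_n\otimes P$ when applied to $(\operatorname{id}_n\otimes\Phi)\circ(\operatorname{id}_n\otimes\delta)$, and that the identity $\Phi\circ\delta=\operatorname{id}-P$ tensors correctly to $\operatorname{id}_n\otimes$ both sides. Both follow from linearity.
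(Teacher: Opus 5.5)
Your proof is correct, but it takes a different route from the paper. The paper works through the generated algebra. It writes each matrix unit $E_{ab}$ as a combination of words in the $L_\mu$ and uses the telescoping Leibniz identity to get $\|[K,I_n\otimes E_{ab}]\|_\infty\le c_{ab}\epsilon_n(K)$. It upgrades this to $\|[K,I_n\otimes U]\|_\infty\le C\epsilon_n(K)$ for every unitary $U$, and then writes $\mathcal P_n$ as the Haar twirl over $\mathcal U(d)$ to bound $\|K-\mathcal P_n(K)\|_\infty$.

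You work through the commutator map $\delta$ instead. Triviality of the commutant gives a linear $\Phi$ with $\Phi\circ\delta=\operatorname{id}-P$. The only analytic input is that every linear map on $M_d$ is an elementary operator, hence completely bounded with $\|\Phi\|_{\mathrm{cb}}\le\sum_{\mu,j}\|A_{\mu j}\|_\infty\|B_{\mu j}\|_\infty$.

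Your version is shorter and needs no Haar integral or word expansion. It uses the generation hypothesis only through $\{L_1,\ldots,L_p\}'=\mathbb C I_d$. It also identifies $\kappa_{\mathrm{sc}}^{\mathrm{cb}}(\boldsymbol L)$ exactly: it is the cb-norm of the inverse of $\delta$ restricted to traceless matrices, viewed as a map on $\operatorname{ran}\delta$ with the norm $\max_\mu\|Y_\mu\|_\infty$ at every matrix level. This makes the uniformity in $n$ transparent.

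The paper's version gives a constant expressed through the word lengths of the matrix units. It also passes through a physically suggestive intermediate statement: approximate commutation with the scores forces approximate commutation with every local unitary.

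One terminological slip: you call $\Phi$ a right inverse, but the identity you use, $\Phi\circ\delta=\operatorname{id}-P$, makes it a left inverse modulo $\ker\delta$.
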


\begin{proof}
Let
\begin{equation}
  \epsilon_n(K)=
  \max_\mu\|[K,I_n\otimes L_\mu]\|_\infty.
\end{equation}
Because the algebraic unital \(*\)-algebra generated by the \(L_\mu\) is a finite-dimensional subspace, it is norm closed and equals \(M_d\).  Fix matrix units \(E_{ab}\).  Each \(E_{ab}\) is therefore a finite linear combination of words in the generators.  For a word
\(W=A_1\cdots A_s\), where every \(A_j\) is either \(I\) or one of the \(L_\mu\), write
\(A_{<j}=A_1\cdots A_{j-1}\) and
\(A_{>j}=A_{j+1}\cdots A_s\).  The telescoping commutator identity gives
\begin{equation}
  [K,I_n\otimes W]
  =\sum_{j=1}^s
  (I_n\otimes A_{<j})
  [K,I_n\otimes A_j]
  (I_n\otimes A_{>j}).
\end{equation}
Consequently there are finite constants \(c_{ab}\), independent of \(n\) and \(K\), such that
\begin{equation}
  \|[K,I_n\otimes E_{ab}]\|_\infty
  \leq c_{ab}\epsilon_n(K).
  \label{eq:matrix-unit-commutator-control}
\end{equation}
For any unitary \(U=\sum_{ab}u_{ab}E_{ab}\), Eq.~\eqref{eq:matrix-unit-commutator-control} implies
\begin{equation}
  \|[K,I_n\otimes U]\|_\infty
  \leq C\epsilon_n(K),
\end{equation}
where
\(C=\sup_{U\in\mathcal U(d)}\sum_{ab}|u_{ab}|c_{ab}<\infty\).
The conditional expectation in Eq.~\eqref{eq:scalar-conditional-expectation} is the Haar twirl,
\begin{equation}
  \mathcal P_n(K)
  =\int_{\mathcal U(d)}
  (I_n\otimes U)K(I_n\otimes U^\dagger)\,dU.
\end{equation}
Hence
\begin{align}
  \|K-\mathcal P_n(K)\|_\infty
  &\leq\int
  \|K-(I_n\otimes U)K(I_n\otimes U^\dagger)\|_\infty\,dU\notag\\
  &=\int\|[K,I_n\otimes U]\|_\infty\,dU
  \leq C\epsilon_n(K).
\end{align}
The same constant works at every amplification level, proving the claim.
\end{proof}

\begin{proof}[Proof of Theorem~\ref{thm:conditioned-approximate-rigidity}]
Choose a Stinespring isometry
\(V:\mathbb C^d\to\mathcal H_B\otimes\mathcal H_E\) for \(\mathcal N\), and define the unweighted score-intertwining defects
\begin{equation}
  F_\mu=(T_\mu\otimes I_E)V-VL_\mu.
  \label{eq:unweighted-score-defect}
\end{equation}
Theorem~\ref{thm:stinespring-qfim} gives, on each coordinate direction,
\begin{equation}
  \delta_\mu
  =\|F_\mu\sqrt\rho\|_2^2.
  \label{eq:qfi-loss-defect-norm}
\end{equation}
Since \(\rho\geq\lambda I_d\),
\begin{equation}
  \delta_\mu
  =\Tr(F_\mu^\dagger F_\mu\rho)
  \geq\lambda\|F_\mu\|_2^2
  \geq\lambda\|F_\mu\|_\infty^2,
\end{equation}
so
\begin{equation}
  \max_\mu\|F_\mu\|_\infty
  \leq\sqrt{\delta/\lambda}.
  \label{eq:operator-score-defect-bound}
\end{equation}

The adjoint of the complementary channel is
\begin{equation}
  \widehat{\mathcal N}^{\dagger}(Z)
  =V^\dagger(I_B\otimes Z)V.
\end{equation}
Fix an amplification level \(n\) and an operator
\(Z\in M_n\otimes\mathcal L(\mathcal H_E)\).  Set
\begin{equation}
  K_Z=(\operatorname{id}_n\otimes
  \widehat{\mathcal N}^{\dagger})(Z).
\end{equation}
Writing \(\widetilde V=I_n\otimes V\) and letting \(\widetilde Z\) denote \(Z\) acting on the ancilla--environment factors and trivially on \(\mathcal H_B\), Eq.~\eqref{eq:unweighted-score-defect} gives the exact commutator identity
\begin{equation}
  [K_Z,I_n\otimes L_\mu]
  =(I_n\otimes F_\mu^\dagger)\widetilde Z\widetilde V
  -\widetilde V^\dagger\widetilde Z(I_n\otimes F_\mu).
  \label{eq:complement-commutator-identity}
\end{equation}
Indeed, the two terms containing \(T_\mu\) cancel because the environment operator commutes with \(T_\mu\) on \(\mathcal H_B\).  Since \(\widetilde V\) is an isometry,
\begin{equation}
  \|[K_Z,I_n\otimes L_\mu]\|_\infty
  \leq2\|F_\mu\|_\infty\|Z\|_\infty.
  \label{eq:complement-commutator-bound}
\end{equation}

Let \(\omega_E=\widehat{\mathcal N}(I_d/d)\).  Trace duality gives
\begin{equation}
  \mathcal P_n(K_Z)
  =(\operatorname{id}_n\otimes
  \mathcal S_{\omega_E}^{\dagger})(Z),
  \label{eq:scalar-projection-replacer}
\end{equation}
where
\(\mathcal S_{\omega_E}^{\dagger}(Z_E)=\Tr(\omega_E Z_E)I_d\).  Applying Definition~\ref{def:cb-score-conditioning}, followed by Eqs.~\eqref{eq:complement-commutator-bound} and~\eqref{eq:operator-score-defect-bound}, yields
\begin{align}
  &\left\|
  (\operatorname{id}_n\otimes
  [\widehat{\mathcal N}^{\dagger}-
  \mathcal S_{\omega_E}^{\dagger}])(Z)
  \right\|_\infty\notag\\
  &\qquad\leq
  2\kappa_{\mathrm{sc}}^{\mathrm{cb}}(\boldsymbol L)
  \sqrt{\frac{\delta}{\lambda}}\,
  \|Z\|_\infty.
\end{align}
Taking the supremum over \(n\) and \(Z\), and using duality between the completely bounded norm of the adjoint and the diamond norm, proves Eq.~\eqref{eq:environment-leakage-bound}.

The information--disturbance theorem for complementary channels~\cite{kretschmann2008information} implies
\begin{equation}
  \inf_{\mathcal R}
  \|\mathcal R\circ\mathcal N-
  \operatorname{id}_{M_d}\|_\diamond
  \leq
  2\inf_{\mathcal S\ \text{replacer}}
  \|\widehat{\mathcal N}-\mathcal S\|_\diamond^{1/2}.
  \label{eq:ksw-information-disturbance}
\end{equation}
Using the particular replacer \(\mathcal S_{\omega_E}\) and Eq.~\eqref{eq:environment-leakage-bound} gives Eq.~\eqref{eq:conditioned-diamond-recovery}; the additional cap by two is the maximal diamond distance between channels.
\end{proof}

%=========================================================================%
\section{Tie-stable Lloyd descent and exact one-swap refinement}
\label{sec:algorithm}

The operator-clustering problem~\eqref{eq:operator-kmeans} is nonconvex.  We first state a Lloyd-type descent at its exact proved strength and then add a discrete refinement that reaches a genuine Hamming-one local optimum.

\subsection{Canonical centers and reduced objective}

At the chosen parameter value, define
\begin{equation}
  \mathcal D(f,\{T_m\})
  =\sum_{a=1}^r\Tr[\tau_a(S_a-T_{f(a)})^2].
  \label{eq:algorithm-objective}
\end{equation}
For a block pair \((\tau,\dot\tau)\), choose the canonical minimum-Hilbert--Schmidt-norm SLD.  If
\(\tau=U\operatorname{diag}(\lambda_1,\ldots,\lambda_d)U^\dagger\), set
\begin{equation}
  (U^\dagger T_\tau U)_{ij}=
  \begin{cases}
    \dfrac{2(U^\dagger\dot\tau U)_{ij}}{\lambda_i+\lambda_j},
    &\lambda_i+\lambda_j>0,\\[7pt]
    0,&\lambda_i+\lambda_j=0.
  \end{cases}
  \label{eq:lyap-solution}
\end{equation}
The assumed existence of a finite SLD ensures that the numerator vanishes when \(\lambda_i+\lambda_j=0\).  For an empty block, set \(T_0=0\).  Define its subnormalized QFI by
\begin{equation}
  \mathfrak F(\tau,\dot\tau)=\Tr(\tau T_\tau^2),
  \qquad
  \mathfrak F(0,0)=0.
  \label{eq:block-qfi-functional}
\end{equation}
For a partition \(f\), let
\begin{equation}
  \tau_m=\sum_{a:f(a)=m}\tau_a,
  \qquad
  \dot\tau_m=\sum_{a:f(a)=m}\dot\tau_a.
\end{equation}
Proposition~\ref{prop:sld-center} and Theorem~1 of the main text give the reduced objective
\begin{equation}
  \overline{\mathcal D}(f)
  \equiv\min_{\{T_m\}}\mathcal D(f,\{T_m\})
  =\sum_a\mathfrak F(\tau_a,\dot\tau_a)
  -\sum_m\mathfrak F(\tau_m,\dot\tau_m).
  \label{eq:reduced-objective}
\end{equation}

\subsection{Tie-stable Lloyd iteration}

Fix a deterministic total order on the cluster labels.
\begin{enumerate}
  \item[1.] Initialize a partition
  \(f^{(0)}:\{1,\ldots,r\}\to\{1,\ldots,M\}\).
  \item[2.] \textbf{Center update.}  For every cluster, compute the canonical SLD \(T_m\) from Eq.~\eqref{eq:lyap-solution}; use \(T_m=0\) for an empty cluster.
  \item[3.] \textbf{Assignment update.}  For each branch define
  \begin{equation}
    d_{am}=\Tr[\tau_a(S_a-T_m)^2].
  \end{equation}
  If its current label belongs to \(\arg\min_m d_{am}\), retain that label.  Otherwise assign it to the first minimizer in the fixed total order.  All assignments are updated with the centers held fixed.
  \item[4.] If no assignment changed, terminate; otherwise return to step~2.
\end{enumerate}

\begin{theorem}[Strict descent and finite termination]
\label{thm:finite-term}
Whenever the tie-stable Lloyd assignment changes, the reduced objective \(\overline{\mathcal D}\) decreases strictly.  The iteration therefore terminates after at most \(M^r-1\) assignment-changing steps.  At termination it is a coordinatewise fixed point: every center minimizes the objective for its class, and no branch has a strictly lower cost to another current center.
\end{theorem}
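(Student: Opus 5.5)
The plan is to establish that $\overline{\mathcal D}$ drops strictly across one full Lloyd step (center update followed by assignment update) whenever some assignment changes, and then to deduce termination from finiteness.

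Fix the current partition $f$ and its canonical centers $T_m$. By Proposition~\ref{prop:sld-center}, every SLD of $\tau_m$ minimizes the class cost $\mathcal C_m$, and in particular so does the canonical one. Therefore $\mathcal D(f,\{T_m\})=\overline{\mathcal D}(f)$, using the reduced-objective identity Eq.~\eqref{eq:reduced-objective}. Here I will check that empty clusters cause no trouble: with $T_0=0$, an empty class contributes zero to both sides.

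Now hold the centers fixed and let $f'$ be the new assignment. The objective splits as $\mathcal D(f',\{T_m\})=\sum_a d_{a f'(a)}$. The tie-stable rule gives $d_{af'(a)}\le d_{af(a)}$ for every branch. Whenever a label changes, the current label was not a minimizer, so the inequality is strict for that branch. Hence any assignment change gives
\[
\mathcal D(f',\{T_m\})<\mathcal D(f,\{T_m\})=\overline{\mathcal D}(f).
\]
Next minimize over the centers for $f'$: $\overline{\mathcal D}(f')\le \mathcal D(f',\{T_m\})$. This yields the strict inequality $\overline{\mathcal D}(f')<\overline{\mathcal D}(f)$.

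The step I expect to be subtle is tie handling. Without the rule that retains the current label among minimizers, the algorithm could cycle between equal-cost assignments. The rule guarantees that a change happens only under strict improvement of that branch's term, which is exactly what the argument above uses.

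For termination, $\overline{\mathcal D}$ is a function of the partition alone, and there are at most $M^r$ maps $f$. A strictly decreasing sequence of values cannot revisit any partition. So there are at most $M^r$ distinct partitions along the run, which means at most $M^r-1$ assignment-changing steps.

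At termination no label changes. For each $a$, the current label then lies in $\arg\min_m d_{am}$, so no branch has a strictly lower cost to another current center. Each center is the canonical SLD of its class, which is a class-cost minimizer by Proposition~\ref{prop:sld-center}. Together these give the coordinatewise fixed-point property.
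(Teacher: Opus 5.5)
Your proposal is correct and follows the paper's proof essentially step for step. It uses the identity $\mathcal D(f,\{T_m(f)\})=\overline{\mathcal D}(f)$ from Proposition~\ref{prop:sld-center}, strict improvement of every changed branch under the tie rule, the recentering bound $\overline{\mathcal D}(f')\le\mathcal D(f',\{T_m(f)\})$, and the count over the $M^r$ labeled partitions; your explicit check that empty clusters contribute zero is a small detail the paper leaves implicit.
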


\begin{proof}
Let \(f\) be the current partition and let \(\{T_m(f)\}\) be its canonical optimal centers.  If the assignment update produces \(f'\neq f\), the tie rule guarantees that every changed branch moves to a strictly lower-cost center while unchanged branches retain their cost.  Hence
\begin{equation}
  \mathcal D(f',\{T_m(f)\})
  <\mathcal D(f,\{T_m(f)\})
  =\overline{\mathcal D}(f).
\end{equation}
Reoptimizing the centers for \(f'\) cannot increase the cost, so
\begin{equation}
  \overline{\mathcal D}(f')
  \leq\mathcal D(f',\{T_m(f)\})
  <\overline{\mathcal D}(f).
\end{equation}
There are at most \(M^r\) labeled partitions.  Strict descent forbids repetition, proving finite termination and the stated bound.  The two coordinatewise conditions are exactly the stopping conditions together with Proposition~\ref{prop:sld-center}.
\end{proof}

A coordinatewise fixed point need not be locally optimal once centers are recomputed after a discrete move.

\begin{proposition}[A Lloyd fixed point need not be one-swap optimal]
\label{prop:lloyd-counterexample}
There exists a valid one-dimensional classical-quantum model for which the tie-stable Lloyd iteration terminates although moving one branch and recentering strictly lowers the reduced objective.
\end{proposition}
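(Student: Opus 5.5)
The plan is to reduce to the commutative, one-dimensional case. In that case the operator-clustering objective in Eq.~\eqref{eq:operator-kmeans} becomes ordinary weighted one-dimensional $k$-means. We then exhibit the classical fact that a Lloyd fixed point need not be Hartigan (one-swap) optimal.

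Take all blocks $1\times1$, so that $\tau_a=q_a(\theta)$ are scalars. Each SLD is then the scalar $S_a=\dot q_a/q_a$, and the canonical center of a class is its $q$-weighted mean
\[
T_m=\frac{\sum_{a\in C_m}\dot q_a}{\sum_{a\in C_m}q_a}.
\]
By Eq.~\eqref{eq:reduced-objective} and Proposition~\ref{prop:sld-center}, $\overline{\mathcal D}(f)$ is the weighted within-cluster sum of squares of the points $x_a=S_a$ with weights $q_a$.

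For validity we use the exponential tilt
\[
q_a(\theta)=\frac{w_a e^{\theta x_a}}{\sum_b w_b e^{\theta x_b}}.
\]
At $\theta=0$ it gives $q_a=w_a$ and $S_a=x_a-\bar x$, with $\bar x=\sum_a w_ax_a$. The common shift by $\bar x$ changes neither distances nor cluster costs, so any weighted point configuration is realizable as a valid classical--quantum model.

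Next, choose $r=3$ and $M=2$, with points $x_1=0$, $x_2=2$ of equal weight $w$ and a heavy point $x_3=3.05$ of weight $W\gg w$. The partition is $C_1=\{1,2\}$, $C_2=\{3\}$, normalized at the end so that $2w+W=1$.

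We first check the Lloyd fixed point. The centers are $T_1=1$ and $T_2=3.05$. Branch $1$ has cost $w\cdot1$ to $T_1$ versus $w(3.05)^2$ to $T_2$. Branch $2$ has cost $w\cdot1$ versus $w(1.05)^2$. Branch $3$ has cost $0$ versus positive. Every branch is therefore strictly closest to its current center, no assignment changes, and the tie-stable iteration terminates.

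Then compute the one-swap change exactly with the standard weighted-mean update formula. Moving a point of weight $w$ out of a cluster of weight $W_A$ with mean $\mu_A$ reduces that cluster's cost by $\frac{wW_A}{W_A-w}(x-\mu_A)^2$. Inserting it into a cluster $(W_B,\mu_B)$ increases cost by $\frac{wW_B}{W_B+w}(x-\mu_B)^2$. Moving branch $2$ into $C_2$ therefore changes $\overline{\mathcal D}$ by
\[
-2w\cdot1+\frac{wW}{W+w}(1.05)^2<-2w+1.1025\,w<0,
\]
a strict decrease.

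The only step requiring care is this exact recentering bookkeeping. The decrease comes entirely from the factor $W_A/(W_A-w)=2$ for a two-point cluster, which Lloyd's fixed-center comparison ignores. I will verify it by directly recomputing both partitions' within-cluster sums rather than relying on the update formula. Concretely, the old partition has cost $2w$. The new partition $\{1\},\{2,3\}$ has cost $\frac{wW}{W+w}(1.05)^2$. Their difference is positive for any $W>0$, which completes the proof of Proposition~\ref{prop:lloyd-counterexample}.
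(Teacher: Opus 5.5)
Your proof is correct and follows essentially the same route as the paper. Both reduce to scalar blocks, so that $\overline{\mathcal D}$ becomes weighted one-dimensional $k$-means, and then exhibit an explicit partition in which every branch is strictly nearest its current center, yet one move with recentering lowers the cost. The paper uses five equally weighted linear-family scores $\frac15(-11,-6,-1,4,14)$ and moves branch 4, which lowers the cost from $1$ to $4/5$. Your three-point exponential-tilt instance is smaller. Your direct comparison of $2w$ with $\frac{wW}{W+w}(1.05)^2$ holds for every $W>0$, so the assumption $W\gg w$ is unnecessary.
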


\begin{proof}
At \(\theta=0\), take five scalar blocks
\begin{align}
  q_a(\theta)&=\frac15(1+s_a\theta),\notag\\
  (s_1,s_2,s_3,s_4,s_5)
  &=\frac15(-11,-6,-1,4,14).
  \label{eq:lloyd-counterexample-model}
\end{align}
Because \(\sum_as_a=0\), these probabilities sum to one; they are positive for sufficiently small \(|\theta|\).  Their scalar SLD scores at zero are \(s_a\).  For \(M=2\), consider
\begin{equation}
  C_1=\{1,2,3,4\},
  \qquad
  C_2=\{5\}.
\end{equation}
The centers are
\begin{equation}
  T_1=-\frac{7}{10},
  \qquad
  T_2=\frac{14}{5}.
\end{equation}
The midpoint between them is \(21/20\), while
\(s_1,s_2,s_3,s_4<21/20<s_5\); hence every branch is assigned to a nearest current center and the Lloyd iteration stops.  Its reduced cost is
\begin{equation}
  \overline{\mathcal D}
  =\frac15\left(\frac94+\frac14+\frac14+\frac94\right)=1.
\end{equation}
Move branch 4 to the second class and recompute the centers.  The new partition
\(C_1'=\{1,2,3\}\), \(C_2'=\{4,5\}\) has centers
\(-6/5\) and \(9/5\), and reduced cost
\begin{equation}
  \overline{\mathcal D}'
  =\frac15(1+0+1+1+1)=\frac45<1.
\end{equation}
Thus the Lloyd fixed point is not a Hamming-one local minimum of the reduced objective.
\end{proof}

\subsection{Exact one-swap refinement}

Suppose branch \(a\) currently belongs to class \(i\), and consider moving it to \(j\neq i\).  Let \(f^{a:i\to j}\) denote the resulting partition.  Equation~\eqref{eq:reduced-objective} shows that the exact change depends only on the two affected blocks:
\begin{align}
  \Delta_{a:i\to j}
  &\equiv\overline{\mathcal D}(f^{a:i\to j})
  -\overline{\mathcal D}(f)\notag\\
  &=\mathfrak F(\tau_i,\dot\tau_i)
  +\mathfrak F(\tau_j,\dot\tau_j)\notag\\
  &\quad-\mathfrak F(\tau_i-\tau_a,\dot\tau_i-\dot\tau_a)
  -\mathfrak F(\tau_j+\tau_a,\dot\tau_j+\dot\tau_a).
  \label{eq:exact-one-swap-change}
\end{align}
Starting from a Lloyd fixed point, evaluate Eq.~\eqref{eq:exact-one-swap-change} for every branch and every alternative class.  If a negative value exists, accept a move with the most negative value, recompute the two affected block SLDs, and repeat.

\begin{theorem}[Finite one-swap refinement]
\label{thm:one-swap-refinement}
The exact one-swap refinement terminates after finitely many accepted moves.  Its final partition is a Hamming-one local minimum of \(\overline{\mathcal D}\): no reassignment of a single branch to another class, followed by exact recentering, decreases the objective.
\end{theorem}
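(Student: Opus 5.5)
The natural proof is a monovariant argument on the finite set of labeled partitions, in the same way as Theorem~\ref{thm:finite-term}. I would take the reduced objective $\overline{\mathcal D}(f)$ of Eq.~\eqref{eq:reduced-objective} as the potential. The refinement accepts a move $f\mapsto f^{a:i\to j}$ only when $\Delta_{a:i\to j}<0$. By Eq.~\eqref{eq:exact-one-swap-change}, this number is exactly $\overline{\mathcal D}(f^{a:i\to j})-\overline{\mathcal D}(f)$, so every accepted move strictly decreases $\overline{\mathcal D}$. I would check that Eq.~\eqref{eq:exact-one-swap-change} really is exact and not a bound. Only the blocks $i$ and $j$ change, and in Eq.~\eqref{eq:reduced-objective} the fine term $\sum_a\mathfrak F(\tau_a,\dot\tau_a)$ does not depend on $f$. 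The reduced objective is therefore a function of the partition alone. This uses that $\mathfrak F$ is well defined for each block pair $(\tau,\dot\tau)$: the subnormalized QFI $\Tr(\tau T^2)$ does not depend on which SLD $T$ is chosen, by Proposition~\ref{prop:sld-center}, and empty blocks are assigned $\mathfrak F(0,0)=0$.

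Termination then follows immediately. There are at most $M^r$ labeled maps $f:\{1,\dots,r\}\to[M]$, and a strictly decreasing real sequence cannot revisit a partition. Hence at most $M^r-1$ moves are accepted, whatever the starting Lloyd fixed point. The final partition is one at which every evaluated $\Delta_{a:i\to j}$ is nonnegative. Because the procedure scans every branch $a$ and every class $j\neq f(a)$, including empty classes (with $\tau_j=0$), it has checked every Hamming-one neighbor. By exactness, $\overline{\mathcal D}(f^{a:i\to j})\ge\overline{\mathcal D}(f)$ for all of them. This is precisely the claimed local minimality under single reassignment followed by exact recentering, since recentering is already built into $\overline{\mathcal D}$ through the minimum over $\{T_m\}$.

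The only delicate point I expect is keeping the reduced objective finite and consistent when a block becomes empty or rank deficient. The moved branch may leave class $i$ empty, giving $\tau_i-\tau_a=0$, and merged blocks may be rank deficient. I would handle this with the canonical SLD of Eq.~\eqref{eq:lyap-solution}. The assumed existence of finite branch SLDs gives a finite SLD for any sum of blocks: $\dot\tau_m=\tfrac12\sum_a\{S_a,\tau_a\}$ vanishes on the common kernel. So each $\mathfrak F$ value is finite and equals the minimum in Proposition~\ref{prop:sld-center}. Floating-point ties are irrelevant to the argument, because acceptance requires a strictly negative $\Delta$.
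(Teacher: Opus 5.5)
Your proposal is correct and follows essentially the same route as the paper's proof. Both arguments use $\overline{\mathcal D}$ as a strict monovariant on the finite set of labeled partitions, since every accepted move has $\Delta_{a:i\to j}<0$, and both note that termination means every Hamming-one neighbor has been checked and found to have nonnegative exact change. Your extra checks are sound; the paper leaves them implicit. These are that $\mathfrak F$ does not depend on the choice of SLD, that empty or rank-deficient merged blocks still give finite values, and the explicit bound of $M^r-1$ accepted moves.
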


\begin{proof}
Every accepted move has \(\Delta_{a:i\to j}<0\) and therefore strictly decreases \(\overline{\mathcal D}\).  Since only finitely many labeled partitions exist, no partition can repeat and the procedure must terminate.  At termination, Eq.~\eqref{eq:exact-one-swap-change} is nonnegative for every Hamming-one neighbor, which is precisely the claimed local-optimality condition.
\end{proof}

\subsection{Complexity and initialization}

For Hilbert-space dimension \(d\) and \(r\) fine branches, precomputing the fine SLD scores costs \(O(rd^3)\) using dense diagonalization.  A Lloyd center update costs \(O(Md^3)\), while a direct assignment sweep costs at most \(O(rMd^3)\) and can be reduced by caching matrix products.  A complete one-swap sweep evaluates \(r(M-1)\) candidates and requires at most two new block-SLD solves per candidate, again \(O(rMd^3)\) without incremental caching.  These are arithmetic scaling estimates, not hardware runtime claims.

Because global optimization remains combinatorial, multiple initial partitions may be used.  A branch-based analogue of \(k\)-means++ selects the first seed uniformly and each subsequent seed with probability proportional to its current distance from the nearest selected seed.  This is an initialization heuristic only and carries no global-optimality guarantee here.

\end{document}